\documentclass[11pt]{article}

\usepackage[utf8]{inputenc}
\usepackage{amsfonts,amsmath,amsthm,amssymb}
\usepackage{graphicx}
\usepackage[caption=false]{subfig}
\usepackage{algorithm,algorithmic, tikz}
\usepackage{authblk, bm}
\usepackage[a4paper, total={6in, 9.5in}]{geometry}

\usepackage{multirow}
\usepackage{longtable}
\usepackage{tabularx}
\usepackage{array}
\usepackage{ragged2e}
\newcolumntype{L}[1]{>{\RaggedRight\arraybackslash}p{#1}}
\newcolumntype{Y}{>{\RaggedRight\arraybackslash}X}

\usepackage[authoryear]{natbib}
\usepackage[colorlinks=true, linkcolor=black, urlcolor=blue,citecolor=blue]{hyperref}

\usepackage{rotating,float,lscape}   

\usepackage{enumitem}		

\usepackage{setspace}    	

\usepackage{xr}   			

\theoremstyle{plain}

\newtheorem{theorem}{Theorem}[section]
\newtheorem{lemma}{Lemma}[theorem]

\theoremstyle{definition}

\theoremstyle{remark}
\newtheorem{remark}{\textbf{Remark}}[section]

\DeclareMathOperator*{\argmin}{arg\,min}

\begin{document}

\title{Robust and Scalable Sure Screening of Fixed effects\\ in Ultrahigh-dimensional Linear Mixed Models}

\author[1]{\Large{Abhik Ghosh}}
\author[2]{\Large{Magne Thoresen}}
\affil[1]{Indian Statistical Institute, Kolkata, India}
\affil[2]{University of Oslo, Oslo, Norway}

\maketitle

\begin{abstract}
In modern applications of linear mixed models, 
the number of candidate fixed-effects covariates can grow exponentially with the sample size, 
while dependence induced by random effects and possible data contamination pose substantial challenges 
for existing variable screening methods.
We propose a robust and computationally efficient sure screening procedure for identifying relevant fixed-effects covariates 
in ultrahigh-dimensional linear mixed models with known random effects. 
The proposed method leverages a proxy-based transformation to decouple dependence induced by random effects, 
enabling screening via marginal analysis in a transformed regression model.
Robustness is achieved by constructing marginal utilities based on minimum density power divergence, 
yielding stability under data contamination and model misspecification without sacrificing scalability.
The resulting procedure, termed DPD-SISP, is shown to retain all relevant covariates (sure screening property) 
with exponentially high probability under general conditions, 
allowing for non-Gaussian errors and nonpolynomial growth of dimensionality. 
In addition, DPD-SISP exhibits strong robustness properties supported by influence function and breakdown point analyses. 
The framework is further extended to incorporate prior information through conditional screening, 
mitigate correlation-induced masking via iterative refinement, and enable robust post-screening estimation of fixed effects.
Extensive simulation studies demonstrate competitive performance of DPD-SISP under ideal settings 
and substantial gains in stability under data contamination.
Its practical utility is illustrated through an application to high-dimensional longitudinal data from the ADNI2 study.
The proposed framework thus provides a unified, robust, and scalable approach for variable screening 
in ultrahigh-dimensional linear mixed models.
\end{abstract}

\noindent\textbf{Keywords:} Mixed models; longitudinal data; clustered data; 
sure independence screening; robustness; density power divergence


\newpage
\section{Introduction}\label{SEC:intro}

Linear mixed models (LMMs) constitute a flexible and widely used statistical framework for analyzing data with 
correlated, clustered, and hierarchical structures. By incorporating both fixed effects (FEs) and random effects (REs), 
LMMs simultaneously capture population-level relationship and subject- or cluster-specific variability 
within a coherent likelihood-based formulation. 
They arise naturally in clustered designs, where individuals are nested within higher-level units (e.g., hospitals, 
schools, or geographical regions), and longitudinal designs involving repeated measurements over time \citep{laird1982random}.
Owing to their versatility and interpretability, LMMs have become essential across diverse scientific disciplines,
including biomedical research, statistical genetics, social and behavioral sciences, economics, agriculture and ecology, 
environmental science, and engineering.
Comprehensive treatments of their theory and applications can be found in standard monographs, e.g.,  
\cite{pinheiro2000mixed}, \cite{demidenko2013mixed}, and \cite{west2022linear}.

Recent technological advances have led to the emergence of datasets with extremely high-dimensional covariate spaces, 
often containing a number of candidate fixed effects (FEs) that far exceeds the available sample size.
At the same time, these data often exhibit complex dependence structures necessitating the inclusion of REs.
For example, genome-wide association studies (GWAS) routinely employ LMMs to account for population structure 
and genetic relatedness while screening millions of genetic markers. 
Similarly, in 
longitudinal biomedical studies, high-dimensional time-varying biomarkers are collected repeatedly within subjects. 
In such settings, identifying the truly relevant FE covariates is crucial for interpretability, accurate prediction, 
and scientific discovery. 

A substantial body of work has extended high-dimensional variable selection techniques to mixed-model settings. 
Penalized likelihood approaches, incorporating both convex penalties (e.g., $\ell_1$) and nonconvex penalties (such as SCAD), 
have been proposed  for selecting FEs under mixed model dependence \citep{schelldorfer2011estimation, fan2012variable, 
rohart2013fixed, li2018doubly, ghosh2018non, yi2022variational, oliveira2023use, gorstein2025highdimmixedmodels}.
Bayesian shrinkage methods based on spike-and-slab, global-local and non-local priors have also been developed for LMMs 
\citep{scheipl2011spikeslabgam,yang2020/etc/bayesian, yang2020bayesian, li2023study, williams2023bgwas,zgodic2025sparse}. 
However, such direct penalized estimation is computationally intensive due to the presence of REs 
and become impractical in ultrahigh-dimensional regimes, where the number of covariates may grow exponentially with the sample size.
In high-dimensional linear regression without REs, this scalability challenge is effectively addressed by 
the sure independence screening (SIS) framework \citep{fan2008sure}, 
which reduces dimensionality by ranking covariates based on marginal measures prior to joint modeling.
Owing to its simplicity and computational efficiency, SIS has since been extended to various statistical models,
including longitudinal and mixed-effects settings. Existing screening procedures for LMMs typically rely on marginal correlations, 
partial correlations, or score-type statistics derived from least-squares (LS) or maximum-likelihood (ML) principles
\citep{chu2016feature,zhang2019variable,lai2020feature,alabiso2023high,bratsberg2024conditional}. 
While these approaches provide scalability and theoretical guarantees under standard model assumptions 
(e.g., Gaussian errors and REs), they inherit the well-known lack of robustness of LS and ML estimators.
As a result, they are highly sensitive to data contamination and model misspecification.

Such data irregularities are quite frequent in modern large-scale clustered and longitudinal studies.  
Outliers may arise at multiple levels, including measurement errors within subjects, 
transient shocks at specific time points, recording errors, or anomalous clusters. 
Under these departures from idealized assumptions, LS- and ML-based screening statistics can be severely distorted, 
leading to unstable variable rankings and potential failure of the sure screening property.
Although robust screening has been extensively studied in high-dimensional regression without REs, 
both in parametric and non-parametric, model free frameworks
\citep[see, e.g.,][among many others]{li2012robust,li2012feature, mu2014some,wang2017robust,he2019robust,ghosh2021robust, li2021robust, ghosh2023robust,yu2023general,li2024feature,roy2024exact, yan2025variable,	bai2026model,guo2026robust}  
relatively little attention has been paid to robust screening in mixed-model contexts.
Existing contributions are limited to specific subclasses of longitudinal models \citep{jiang2025robust,chen2025model}, 
and a general theoretical framework ensuring the sure screening property for robust procedures in ultrahigh-dimensional LMMs 
remains largely undeveloped. Moreover, current robust estimation methods for LMMs are primarily designed for low-dimensional settings 
\citep{garcia2021robust,fan2014robust,yang2024robust} and do not scale to modern high-dimensional applications.

A natural approach for achieving robustness is to construct marginal screening statistics based on robust low-dimensional LMM estimators
that explicitly account for the correlation induced by REs.
A rich literature exists on such estimators, including bounded-influence score equations \citep{richardson1995robust}, 
heavy-tailed likelihood formulations \citep{pinheiro2001efficient}, Huber-type M-estimators \citep{wang2005robust,koller2016robustlmm},
high-breakdown S-estimators \citep{copt2006high}, composite $\tau$-estimators \citep{agostinelli2016composite}, 
trimming-based methods \citep{zheng2021trimmed}, minimum density power divergence (DPD) estimators \citep{saraceno2024robust},
and minimum hierarchical $\gamma$-divergence (HGD) estimators \citep{sugasawa2025robust}; 
see \cite{jiang2019robust} for a comprehensive summary. 
However, directly incorporating these estimators into screening procedures poses both theoretical and computational challenges. 
Theoretically, the dependence structure induced by REs complicates concentration analysis and 
the characterization of marginal signal strength, making it difficult to establish the sure screening property. 
From a computational perspective, these estimators typically require iterative procedures for jointly estimating FEs 
and variance components, rendering them infeasible for screening problems involving millions of covariates. 
This computational burden is illustrated in Figure~\ref{FIG:runtime_comp}, 
which presents runtimes of several robust estimators in a simple LMM with two FEs and two REs across 1000 simulated datasets. 
These results further indicate that most robust estimators incur substantially higher computational costs than likelihood-based methods, 
which becomes even more significant as the number of random effects increases.
Even the most recent minimum HGD estimator (MHGDE) is, on average, 2.5 to 40 times slower than the REML,  
depending on the correlations between  FE and RE covariates!

\begin{figure}[!h]
\centering
	\subfloat[REs are two independent standard normal covariates generated independently of the FEs]{
	\includegraphics[width=0.48\textwidth]{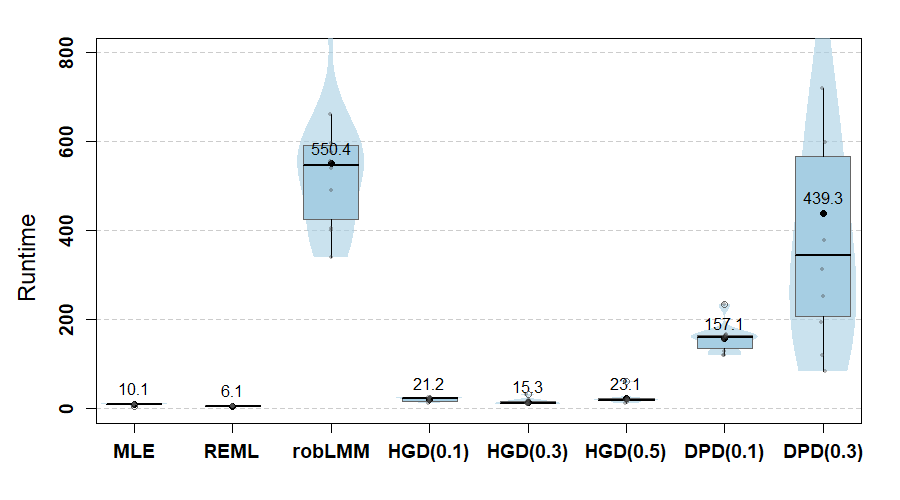}
	\label{FIG:runtime2}}
~~
\subfloat[REs are exactly the same as the FEs ~~~~~~~~~~~~~\\(an intercept and the same standard normal covariate)]{
	\includegraphics[width=0.48\textwidth]{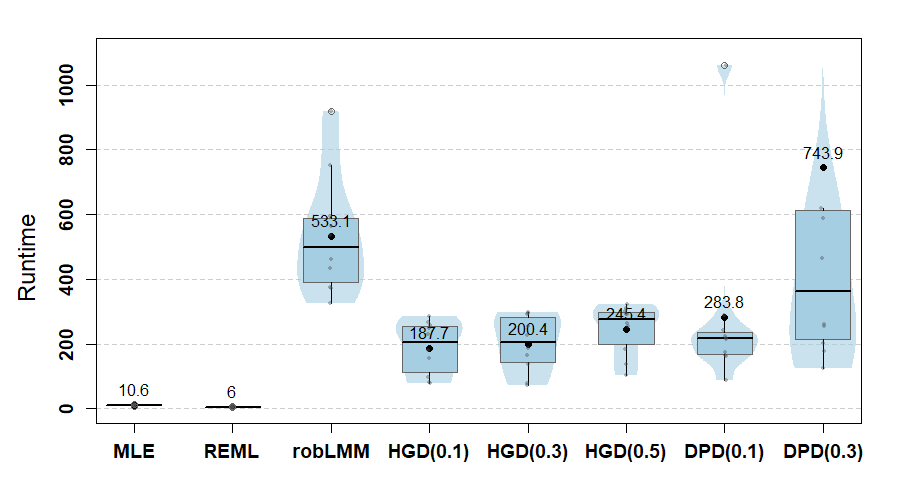}
	\label{FIG:runtime1}}
\\
	\subfloat[REs are two standard normal covariates, generated jointly with the FE covariate ensuring 
with all pairwise correlations to be 0.9]{
	\includegraphics[width=0.48\textwidth]{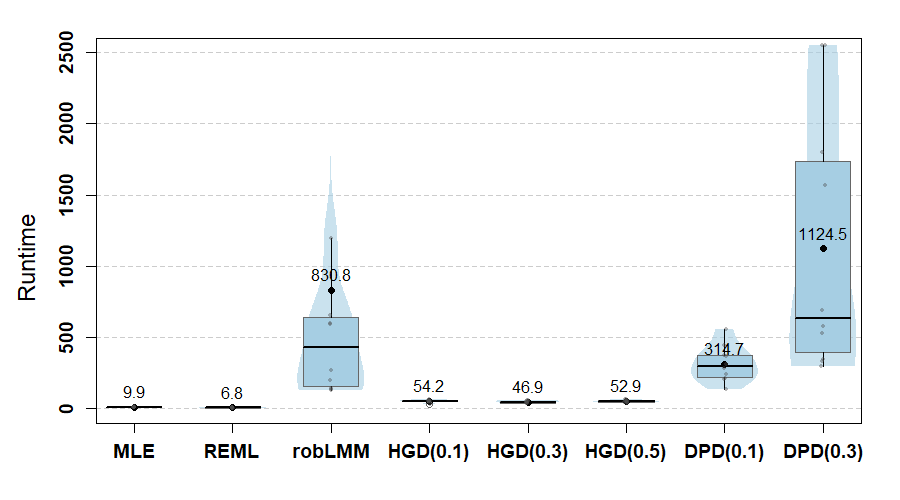}
	\label{FIG:runtime3}}
\caption{\small Comparison of computational times for different robust estimators across 1000 replications
under a simple low dimensional LMM with two fixed-effects (one intercept and one standard normal variable),
$q=2$ random-effects (in various specifications), true fixed-effect coefficient $(1,1)$,
random-effects covariance $0.7 \mathbb{I}_q + 0.3 \mathbb{J}_q$, and noise variance 1. 
Here robLMM refers the robust estimator of \cite{koller2016robustlmm}, 
HGD($\gamma$) and DPD($\alpha$) refer to the minimum HGD and minimum DPD estimators of \cite{sugasawa2025robust} 
and \cite{saraceno2024robust}, respectively, with tuning parameters $\gamma$ and $\alpha$. 
We used the \texttt{R}-function \texttt{rlmer} to compute robLMM, 
while the HGD and DPD based estimators are computed using the \texttt{R} codes supplied by \cite{sugasawa2025robust},
keeping all algorithmic hyperparameters to their default values. 
Sample means are also reported within all boxes.}
\label{FIG:runtime_comp}
\end{figure}

In this paper, we propose a computationally efficient and theoretically grounded robust screening procedure for ultrahigh-dimensional LMMs.
The key idea is to approximate the RE-induced covariance structure using a \textit{working covariance} (or \textit{proxy}) matrix,
which transforms the mixed model into a linear regression model (LRM) on suitably adjusted data, 
thereby decoupling the dependence structure from the screening step.
We then construct marginal screening statistics based on the minimum DPD estimators (MDPDEs) applied to the transformed data,;
they provide robustness against contamination while maintaining desired computational and statistical efficiency 
\citep{basu1998robust,basu2026statistical}.
This particular choice is motivated by recent evidence \citep{ghosh2021robust, ghosh2023robust}
demonstrating strong performance of the MDPDE in ultrahigh-dimensional screening under contaminated regression models.

The resulting procedure, termed DPD-SISP, achieves computational complexity comparable to classical likelihood-based screening methods
while offering substantially improved robustness under data contamination, 
thereby reconciling scalability and robustness within a unified framework. 
Under mild verifiable conditions on the proxy covariance matrix and signal strength, 
we establish that DPD-SISP consistently retains all truly active FEs while controlling the size of the selected model
with probability tending to one at an exponential rate.   
These results are first developed in a general distribution-free framework and subsequently specialized to Gaussian LMMs. 
We further demonstrate the robustness of the proposed method through bounded influence functions 
and high asymptotic breakdown properties of the underlying marginal estimators.

We also extend the proposed framework in several practically important directions. 
First, we develop a conditional screening version that accommodates prespecified important covariates. 
Second, we discuss an iterative screening strategy to mitigate masking effects due to strong correlations among covariates. 
Third, we consider a second-stage robust penalized estimation procedure based on penalized DPD criteria. 

Our theoretical derivations are complemented by extensive simulation studies comparing our proposal with benchmark screening procedures 
based on the classical ML/REML and the recent robust MHGDEs with relatively lower computational cost. 
The results demonstrate that the DPD-SISP performs comparably to classical screening approaches under clean data, 
while exhibiting substantially greater stability and reliability under contamination. 
Finally, we illustrate the practical utility of the method through the analysis of a real high-dimensional longitudinal dataset, 
demonstrating its effectiveness in identifying meaningful FE covariates in the presence of complex dependence structures and data irregularities.

\section{Model Framework and Notation}
\label{SEC:notation}

We consider data arising from $m$ independent clusters (e.g., subjects, centers, or groups), indexed by $i = 1, \dots, m$,
where the $i$-th cluster contains $n_i$ observations. The total sample size is then $n = \sum_{i=1}^m n_i$.
For each cluster $i\in\{1,\ldots, m\}$, we denote the response vector, and the fixed- and random-effects design matrices by 
$\mathbf{y}_i=(y_{i1}, \ldots, y_{in_i})^\top \in \mathbb{R}^{n_i}$, 
$\mathbf{X}_i = [\bm{x}_{i1} \cdots \bm{x}_{ip}] \in \mathbb{R}^{n_i \times p}$,
and $\mathbf{Z}_i = [\bm{z}_{i1} \cdots \bm{z}_{iq}] \in \mathbb{R}^{n_i \times q}$, respectively. 
Here, $p$ and $q$ represent the numbers of available fixed- and random-effects covariates, respectively. 
Then, the linear mixed model (LMM) describing the relationship between the response and covariates is given by \citep{pinheiro2000mixed}
\begin{equation}
	\mathbf{y}_i = \mathbf{X}_i \boldsymbol{\beta} + \mathbf{Z}_i \mathbf{b}_i + \boldsymbol{\varepsilon}_i,
	~~~~ i=1, \ldots, m, 
	\label{EQ:lmm}
\end{equation}
where $\boldsymbol{\beta}=(\beta_1, \ldots, \beta_p)^\top \in \mathbb{R}^{p}$ is the vector of FE coefficients,
$\mathbf{b}_i = (b_{i1}, \ldots, b_{iq})^\top \in \mathbb{R}^{q}$ is the cluster-specific REs, and 
$\boldsymbol{\varepsilon}_i=(\varepsilon_{i1}, \ldots, \varepsilon_{in_i})^\top \in \mathbb{R}^{n_i}$ is the within-cluster error vector.
We assume that $E[\mathbf{b}_i]= \bm{0}_q$, Cov$(\mathbf{b}_i) = \bm{\Psi}$, 
$E[\boldsymbol{\varepsilon}_i] = \bm{0}_{n_i}$, and Cov$(\boldsymbol{\varepsilon}_i) = \sigma^2 \mathbb{I}_{n_i}$, 
with $\bm{b}_i$ being independent of $\bm{\varepsilon}_i$  for each $i=1, \ldots, m$.
Note that clusters are also assumed mutually independent.

Since our primary objective here is to screen important FE variables, without loss of generality, 
we assume that $\mathbf{X}_i$ does not contain the intercept term.
However, the RE design matrix $\mathbf{Z}_i$ may include an intercept and may also share columns with $\mathbf{X}_i$, 
allowing for both correlated and exactly the same fixed and random effects covariates.

Under the additional assumption that $\mathbf{b}_i$ and $\boldsymbol{\varepsilon}_i$ follow Gaussian distributions, 
we have $\bm{b}_i\sim N_q(\bm{0},\bm{\Psi})$ and $\bm{\varepsilon}_i\sim N_{n_i}(\bm{0},\sigma^2 \mathbb{I}_{n_i})$,
independently across clusters $i=1, \ldots, m$.
Therefore, conditional on $\mathbf{X}_i$ and $\bm{Z}_i$, it follows that
\begin{eqnarray}\label{EQ:lmm-dist}
\bm{y}_i \sim \mathcal{N}_{n_i}(\mathbf{X}_i\bm{\beta}, \sigma^2 \bm{\Sigma}_i), ~~~~ \mbox{with}~~  
\boldsymbol{\Sigma}_i = \frac{1}{\sigma^2}\mathbf{Z}_i \bm{\Psi} \mathbf{Z}_i^\top +  \mathbb{I}_{n_i},
~~~~i=1, \ldots, m,
\end{eqnarray}
and the vectors $\bm{y}_i$'s are independent across clusters.  
We assume that $\bm{\Psi}$ is positive definite and $\sigma^2>0$, ensuring that $\bm{\Sigma}_i$ is positive definite for all $i$.

We focus on the ultrahigh-dimensional regime in which $p\gg n$, specifically  $\log p = O(n^\tau)$ for some $\tau\in(0,1)$. 
The true FE coefficient vector $\bm{\beta}_0=(\beta_{01}, \ldots, \beta_{0p})^\top\in\mathbb{R}^p$ is assumed to be sparse,
with support (active set), 
$\mathcal{S}_0 = \{j : \beta_{0j}\neq 0, ~j=1, \ldots, p\}$ having cardinality $|\mathcal{S}_0|=s \ll n$. 
Without loss of generality, we assume $\mathcal{S}_0=\{1, \ldots, s\}$, so that $\bm{\beta}_0 = (\bm{\beta}_{S0}^\top, \bm{0}^\top)^\top$.
In contrast, no sparsity structure is imposed on the REs. 
The number of random-effects covariates satisfies $q\ll n$, 
and all RE components are assumed to be known as the selection of REs is not considered in this work.

We conclude by introducing notation used throughout the paper.
For a matrix $\bm{A}\in\mathbb{R}^{k\times l}$ (which is a vector $\bm{a}$ if $l=1$), 
we denote its $\ell_p$-norm by $\|\bm{A}\|_p$, its operator norm by $\|\bm{A}\|_{op} = \|\bm{A}\|_2$,
and its minimum and maximum eigenvalues by $\lambda_{\min}(\bm{A})$ and $\lambda_{\max}(\bm{A})$, respectively. 
Further, we put $\mathcal{I}=\{1, \ldots, p\}$, and for any subset $\mathcal{C}\subset \mathcal{I}$ with cardinality $c_0=|\mathcal{C}|$, 
$\bm{A}_{\mathcal{C}}$ denote the $k\times c_0$ submatrix of $\bm{A}$ formed by selecting columns indexed by $\mathcal{C}$.
When $\bm{A}=\bm{A}_i$ (has a subscript), 	we write $\bm{A}_{\mathcal{C}} = \bm{A}_{i,\mathcal{C}}$, 
and denote its $j$-th row by $\bm{a}_{i,\mathcal{C},j}$.

\section{Proposed Robust Screening of Fixed-Effects covariates }
\label{SEC:method}

\subsection{Whitening Transformation and Proxy matrix}

We begin by considering the ideal setting of Gaussian LMM in which 
the error variance $\sigma^2$ and RE covariance matrix $\bm{\Psi}$ are known. 
In this case, the dependence induced by the REs can be removed via a suitable whitening transformation based on (\ref{EQ:lmm-dist}).
Specifically, we adopt the zero-phase component analysis (ZCA) whitening transformation \citep{bell1997independent}, 
which premultiplies the data of the $i$-th cluster by $\bm{\Sigma}_i^{-1/2}$ for each $i=1, \ldots, m$.
As shown in \cite{Kessy02102018}, this simple ZCA whitening transformation uniquely maximizes the average cross-covariance
between the original and transformed (whitened) variables among all whitening transforms.
Let us denote the resulting \textit{whitened variables} by $\bm{y}_i^* = \bm{\Sigma}_i^{-1/2}\bm{y}_i$ 
and $\mathbf{X}_i^* = \bm{\Sigma}_i^{-1/2}\mathbf{X}_i$ for each $i$.
It then follows from (\ref{EQ:lmm-dist}) that  
$\bm{y}_i^* \sim \mathcal{N}_{n_i}(\mathbf{X}_i^*\bm{\beta}, \sigma^2 \mathbb{I}_{n_i})$,
so that the original LMM (\ref{EQ:lmm}) is transformed into an LRM with independent and identically distributed (IID) Gaussian errors:
\begin{eqnarray}
	\bm{y}_i^* = \mathbf{X}_i^* \bm{\beta} + \bm{\varepsilon}_i^*, ~~~~
	\bm{\varepsilon}_i^* \sim \mathcal{N}_{n_i}(\bm{0}_{n_i}, \sigma^2 \mathbb{I}_{n_i}), 
	~~~~ i=1, \ldots, m. 
\label{EQ:lm}
\end{eqnarray}
This transformation $(\bm{y}_i, \mathbf{X}_i) \mapsto (\bm{y}_i^*, \mathbf{X}_i^*)$ is linear and bijective 
due to the positive definiteness (and hence invertibility) of each $\bm{\Sigma}_i$.
As a result, it defines a one-to-one reparameterization of the data that preserves the Gaussian likelihood.
In particular, the parameter $\bm{\beta}$ remains unchanged,  its support (i.e., the active set) is invariant, 
and hypothesis testing for FEs is unaffected. In other words, this transformation simply re-express the model 
in a different coordinate system with spherical errors, without altering the estimable structure.
Consequently, all inference on the FEs, including selection procedures, in the original LMM 
can equivalently be carried out using the transformed LRM.

In practice, however, the covariance components $\boldsymbol{\Psi}$ and $\sigma^2$ are unknown, 
and so the exact whitening transformation is unavailable. 
To address this issue, we adopt a proxy-based approach that approximates the covariance structure suitably.
This idea has previously been used for constructing valid statistical inference under high- and ultrahigh-dimensional LMMs; 
see, e.g., \cite{fan2012variable,bradic2020test,li2022inference,rugamer2022selective}, among others.
Specifically, we introduce a working covariance (\textit{proxy}) matrix $\bm{P}$ to approximate $\sigma^{-2}\bm{\Psi}$,
and define the plug-in covariance estimators 
$\widehat{\boldsymbol{\Sigma}}_i = \mathbf{Z}_i \bm{P} \mathbf{Z}_i^\top + \mathbb{I}_{n_i}$ for each $i=1, \ldots, m$.
Using this proxy, we construct the \textit{feasible transformation}
\begin{eqnarray}
\widetilde{\mathbf{y}}_i = (\widetilde{y}_{i1}, \ldots, \widetilde{y}_{in_i})^\top = \widehat{\boldsymbol{\Sigma}}_i^{-1/2} \mathbf{y}_i,
\qquad
\widetilde{\mathbf{X}}_i = [\widetilde{\bm{x}}_{i1}, \cdots, \widetilde{\bm{x}}_{ip}] 
= \widehat{\boldsymbol{\Sigma}}_i^{-1/2} \mathbf{X}_i, ~~~ i=1, \ldots, m,
\label{EQ:proxy}
\end{eqnarray}
and the resulting \textit{feasible} LRM then takes the form
\begin{eqnarray}
	\widetilde{\bm{y}}_i = \widetilde{\mathbf{X}}_i \bm{\beta} + \widetilde{\bm{\varepsilon}}_i, ~~~~
	\widetilde{\bm{\varepsilon}}_i \sim \mathcal{N}_{n_i}(\bm{0}_{n_i}, \sigma^2 \mathbf{V}_i), 
	\label{EQ:lm-feasible}
\end{eqnarray}
with $\bm{V}_i = \widehat{\boldsymbol{\Sigma}}_i^{-1/2}\bm{\Sigma}_i \widehat{\boldsymbol{\Sigma}}_i^{-1/2}$ 
for each $i=1, \ldots, m$.
The errors in the feasible LRM (\ref{EQ:lm-feasible}) are exactly spherical if $\widehat{\boldsymbol{\Sigma}}_i = \bm{\Sigma}_i$ 
for all $i\geq1$; otherwise, the transformation induces a covariance distortion captured by $\bm{V}_i$s.
If the proxy $\bm{P}$ is chosen such that $\bm{V}_i$ is uniformly close to $\mathbb{I}_{n_i}$  in operator norm with high probability, 
the feasible model (\ref{EQ:lm-feasible}) behaves approximately as an LRM with spherical errors, 
and the impact of covariance misspecification becomes asymptotically negligible. 
This proxy approximation thus underlies the validity of applying high-dimensional regression techniques in the transformed space. 
A detailed discussion on the choice of the proxy matrix is deferred to Section  \ref{SEC:proxy}.


\subsection{The DPD-SISP Algorithm }
\label{SEC:DPD-SIS}

To develop a robust screening procedure for identifying important FEs in the ultrahigh-dimensional LMM (\ref{EQ:lmm}),
we propose to rank covariates based on the marginal MDPDEs computed from the feasible LRM (\ref{EQ:lm-feasible}),
after suitably approximating $\bm{V}_i$'s to the identity matrices for $i=1, \ldots, m$.
The DPD framework, introduced originally by \cite{basu1998robust} for robust parametric estimation under IID data, 
provides a continuum of estimators indexed by a tuning parameter $\alpha \geq 0$, 
where $\alpha=0$ corresponds to ML estimation and $\alpha>0$ yields robust alternatives. 
These estimators have been successfully applied in several complex statistical models including linear and generalized linear models 
\citep{basu2026statistical}. Building on this framework, specifically the work of \cite{ghosh2013robust},   
a DPD-based robust SIS procedure for ultrahigh-dimensional LRMs  with spherical errors is developed by \cite{ghosh2021robust}, 
who demonstrated its improved robustness-efficiency trade-offs compared to other robust SIS approaches. 
Its sure screening property was subsequently proved under generalized linear models by \cite{ghosh2023robust}.
Motivated by its computational simplicity and strong robustness properties, we adopt this marginal MDPDE-based SIS approach 
applied to the feasible model (\ref{EQ:lm-feasible}) to construct our robust screening procedure described below;
its sure screening properties are rigorously established in Section \ref{SEC:theory_strong}.

For each FE covariate $j\in\mathcal{I}$, we consider a marginal feasible LRM based on the transformed data, 
obtained from (\ref{EQ:lm-feasible}) with the spherical error approximation, as given by 
\begin{eqnarray}
\widetilde{\bm{y}}_i = \beta_{j0} + \widetilde{\mathbf{x}}_{ij} \beta_{j1} + \bm{\varepsilon}_{ij}, ~~~~
\bm{\varepsilon}_{ij} \sim \mathcal{N}_{n_i}(\bm{0}_{n_i}, \sigma_{j}^2 \mathbb{I}_{n_i}), ~~i=1, \ldots, m. 
\label{EQ:lm-feas-marginal}
\end{eqnarray}   
Although the full model excludes the FE intercept, here an intercept term is included to account for marginal shifts 
induced by projection onto a single covariate. Following \cite{ghosh2013robust} and \cite{ghosh2021robust}, 
we define the marginal MDPDE $\widehat{\bm\theta}_{j, \alpha}$, with tuning parameter $\alpha\geq0$,  
of the parameters $\bm{\theta}_j = (\beta_{j0}, \beta_{j1}, \sigma_{j}^2)^\top$ under (\ref{EQ:lm-feas-marginal}) 
as a minimizer of $\mathcal{L}_{n, \alpha}(\bm{\theta}_j)$ with respect to $\bm{\theta}_j$,  
where the empirical DPD loss function is given by  
\begin{eqnarray}
\mathcal{L}_{n, \alpha}(\bm{\theta}) = \frac{1}{n}\sum_{i=1}^{m}\sum_{k=1}^{n_i} 
V_\alpha(\widetilde{y}_{ik}, \beta_0+ \widetilde{x}_{ij,k}\beta_1, \sigma^2),
~~\bm{\theta} = (\beta_0, \beta_1, \sigma^2)^\top,
\label{EQ:dpd-loss}
\end{eqnarray}
with  $\widetilde{x}_{ij,k}$ being the $k$-th element of $\widetilde{\mathbf{x}}_{ij}$ for each $k, i,j$, and 
\begin{eqnarray}
V_\alpha(y, \mu, \sigma^2) = \left\{\begin{array}{ll}
\frac{1}{\sigma^{\alpha}(2\pi)^{\alpha/2}}
\left(\frac{1}{\sqrt{1+\alpha}} - \frac{1+\alpha}{\alpha} e^{-\frac{\alpha(y - \mu)^2}{2\sigma^2}}\right) 
+ \frac{1}{\alpha},  & \mbox{ if } \alpha>0,  
\\[10pt]
\log(\sigma\sqrt{2\pi}) + \frac{1}{2\sigma^2}(y - \mu)^2 & \mbox{ if } \alpha=0.
\end{array}
\right..
\label{EQ:dpd-obj-norm}
\end{eqnarray}
For $\alpha=0$, this loss function reduces to the negative log-likelihood under Gaussian errors,
whereas it yields a robust loss that downweights outliers at $\alpha>0$.
Throughout this paper we restrict to $\alpha>0$ to ensure robustness.

Let $\widehat{\beta}_{j1, \alpha}$ denote the second component of $\widehat{\bm\theta}_{j, \alpha}$, 
which is the marginal MDPDE of $\beta_{j1}$ with tuning parameter $\alpha\geq 0$.
The proposed screening procedure then ranks all covariates according to $|\widehat{\beta}_{j1, \alpha}|$.
Given a target model size $d$, we define the selected set $\widehat{\mathcal{S}}_\alpha(d)$ 
as the indices corresponding to the top $d$ ranked FE covariates, i.e.,
\begin{eqnarray}
\widehat{\mathcal{S}}_\alpha(d) 
= \left\{ j : |\widehat{\beta}_{j1, \alpha}| \text{ ranks among the top } d \mbox{ for } j\in \mathcal{I} \right\}.
\label{EQ:DPD-SIS0}
\end{eqnarray}
Following classical SIS theory, a common suggestion for $d$ is $d=[n/\log n]$. 

Alternatively, for a given threshold $\gamma_n>0$, we may define the selected set as those FE covariates 
whose estimated coefficients exceed the threshold in absolute value, i.e., 
\begin{eqnarray}
	\widehat{\mathcal{S}}_\alpha(\gamma_n) 
	= \left\{ j\in \mathcal{I} : |\widehat{\beta}_{j1, \alpha}| \geq \gamma_n \right\}.
	\label{EQ:DPD-SIS1}
\end{eqnarray}
With appropriate calibration, these two definitions of the selected set are equivalent in practice. 
We refer to the resulting screening procedure as DPD-SISP, an abbreviation for `\textit{DPD based SIS with Proxy matrix}', 
with tuning parameter $\alpha$; the full procedure is presented in Algorithm \ref{ALG:DPD-SISP} below.

\begin{minipage}{.95\textwidth}
	\begin{algorithm}[H]
		\caption{\small DPD-SISP($\alpha$) for Gaussian LMMs}
		\label{ALG:DPD-SISP}
		\small
		\begin{algorithmic}[1]
			
			\STATE \textbf{Input:} Clustered data 
			$\{(\bm y_i,\mathbf X_i,\mathbf Z_i)\}_{i=1}^m$ from the LMM \eqref{EQ:lmm}; 
			tuning parameter $\alpha>0$; \\~~~~~~~~~~
			proxy matrix $\bm P$; 
			screening parameter $d$ (or threshold $\gamma_n$).
			
			\STATE Set total sample size $n=\sum_{i=1}^m n_i$.\\~~~~~~~~~
						\COMMENT{\textit{Proxy-based covariance approximation}}
			\FOR{$i=1$ to $m$} 
			\STATE Compute proxy covariance $\widehat{\boldsymbol\Sigma}_i =\mathbf Z_i \bm P \mathbf Z_i^\top+\mathbb{I}_{n_i}$.
			\STATE Obtain transformed data $(\widetilde{\bm y}_i,\widetilde{\mathbf X}_i)$ using \eqref{EQ:proxy}.
			\ENDFOR\\~~~~~~~~~			\COMMENT{\textit{Marginal DPD estimation}}
			\FOR{$j\in \mathcal{I}$} 			
			\STATE Define the DPD loss $V_\alpha(\cdot)$ as in \eqref{EQ:dpd-obj-norm}.
			\STATE Compute the marginal MDPDE
			\[
		\widehat{\bm\theta}_{j,\alpha} = \left(\widehat\beta_{j0,\alpha}, \widehat\beta_{j1,\alpha}, \widehat\sigma_{j,\alpha}^2\right)^\top
			=
			\argmin_{(\beta_0, \beta_1, \sigma^2)^\top}
			\frac{1}{n}
			\sum_{i=1}^{m}\sum_{k=1}^{n_i}
			V_\alpha\!\left(
			\widetilde y_{ik},
			\beta_0+\widetilde x_{ij,k}\beta_1,
			\sigma^2
			\right)
			\]
			\STATE Extract the marginal slope estimate $\widehat\beta_{j1,\alpha}$ from $\widehat{\bm\theta}_{j,\alpha}$.
			\ENDFOR\\~~~~~~~~~
			\COMMENT{\textit{Screening}}
			\STATE Rank $\left\{|\widehat\beta_{j1,\alpha}|: j \in \mathcal{I}\right\}$ in decreasing order.
			
			\STATE Define 
			$\widehat{\mathcal S}_\alpha(d)= \left\{ j\in \mathcal{I}: |\widehat\beta_{j1,\alpha}| \text{ is among the top } d \right\}$,
			or  $\widehat{\mathcal S}_\alpha(\gamma_n)=\left\{ j\in \mathcal{I}: |\widehat\beta_{j1,\alpha}| \ge \gamma_n \right\}$.
			
			\STATE \textbf{Output:} Screened active set $\widehat{\mathcal S}_\alpha$.
			
		\end{algorithmic}
	\end{algorithm}
\end{minipage}

\bigskip
We may note that both versions of DPD-SISP, given in (\ref{EQ:DPD-SIS0}) and (\ref{EQ:DPD-SIS1}), 
correspond	to hard thresholding rules, since either $d$ or $\gamma_n$ is pre-specified. 

\subsection{Extension beyond Normality}
\label{SEC:DPD_SISP-gen}

An important feature of the proposed DPD-SISP procedure is that it is fundamentally based on a working marginal  model, 
rather than the full distributional specification of the original LMM (\ref{EQ:lmm}). 
In particular, its construction relies on the marginal feasible LRM (\ref{EQ:lm-feas-marginal}) 
together with an assumed working distribution for its error terms. 
Consequently, the procedure does not depend critically on Gaussian assumptions for either the REs or the within-cluster errors.
Even in the Gaussian setting, the spherical error structure imposed in the marginal feasible model is itself only an approximation, 
arising from the use of a proxy covariance matrix. Thus, DPD-SISP is inherently a working-model-based approach, 
whose validity depends primarily on the adequacy of this approximation rather than strict distributional correctness. 
This observation allows a natural extension of the proposed DPD-SISP to LMMs with non-Gaussian REs and/or errors,
provided that a suitable parametric working density is specified for the marginal errors in (\ref{EQ:lm-feas-marginal}) 
to reasonably approximate the true marginal error distribution.

To formalize this extension, consider the general LMM introduced in Section \ref{SEC:notation}, 
but without imposing any specific distributional assumption on the REs or error terms. 
In this case, the original model for $\bm{y}_i$, conditional on  $\mathbf{X}_i$ and $\bm{Z}_i$, 
is given by the following modified form of \eqref{EQ:lmm-dist}: 
\begin{eqnarray}\label{EQ:lmm-dist_gen}
	\bm{y}_i = \mathbf{X}_i\bm{\beta} + \bm{v}_i, 
		~~ \mbox{with}~  E[\bm{v}_i]= \bm{0}_{n_i}, ~
	\mbox{Cov}(\bm{v}_i) = \boldsymbol{\Sigma}_i = \frac{1}{\sigma^2}\mathbf{Z}_i \bm{\Psi} \mathbf{Z}_i^\top +  \mathbb{I}_{n_i},
	~~i=1, \ldots, m.
\end{eqnarray}
Then, applying the same whitening transformation as before, we get a distribution-free generalization of \eqref{EQ:lm} as given by 
\begin{eqnarray}
	\bm{y}_i^* = \mathbf{X}_i^* \bm{\beta} + \bm{\varepsilon}_i^*, ~~~~
	E[\bm{\varepsilon}_i^*] =\bm{0}_{n_i}, ~ \mbox{Cov}(\bm{\varepsilon}_i^*) = \sigma^2 \mathbb{I}_{n_i}, 
	~~~~ i=1, \ldots, m. 
	\label{EQ:lm_gen}
\end{eqnarray}
Similarly, under the proxy-based transformation, the feasible LRM \eqref{EQ:lm-feasible} becomes
\begin{eqnarray}
	\widetilde{\bm{y}}_i = \widetilde{\mathbf{X}}_i \bm{\beta} + \widetilde{\bm{\varepsilon}}_i, ~~~~
E[\widetilde{\bm{\varepsilon}}_i ] =\bm{0}_{n_i}, ~ \mbox{Cov}(\widetilde{\bm{\varepsilon}}_i ) = \sigma^2 \mathbf{V}_i, 
		~~~~ i=1, \ldots, m. 
	\label{EQ:lm-feasible_gen}
\end{eqnarray}

Now, with an appropriate choice of the proxy matrix, the matrices $\mathbf{V}_i$ can be made close to $\mathbb{I}_{n_i}$ in operator norm, 
so that the components of the transformed errors 
$\widetilde{\bm{\varepsilon}}_i = \widetilde{\bm{y}}_i - \widetilde{\mathbf{X}}_i \bm{\beta}$ 
are approximately uncorrelated and homoscedastic, although not necessarily independent.
Stacking the transformed data across clusters, let 
$\widetilde{\bm{y}} = (\widetilde{\bm{y}}_1^\top, \ldots, \widetilde{\bm{y}}_m^\top)^\top$ denote the combined response vector and
$\widetilde{\mathbf{x}}_{j} = (\widetilde{\mathbf{x}}_{1j}, \ldots, \widetilde{\mathbf{x}}_{mj})^\top$ 
the corresponding transformed covariate for each $j\in\mathcal{I}$.
For each $j\in \mathcal{I}$, we then consider the following marginal working model as a generalization of (\ref{EQ:lm-feas-marginal}):
\begin{eqnarray}
	\widetilde{\bm{y}} = [\bm{1}_n  ~ \widetilde{\mathbf{x}}_{j}] \bm{\beta}_{j} + \bm{\varepsilon}_{j}, ~~~ ~
	\bm{\beta}_j = (\beta_{j0}, \beta_{j1})^\top, ~~ E[\bm{\varepsilon}_{j}] = \bm{0}_n, 
	~~\mbox{Cov}(\bm{\varepsilon}_{j}) = \sigma_{j}^2\mathbb{I}_n.
	\label{EQ:lm-feas-marginal_gen}
\end{eqnarray}   
The distribution of $\bm{\varepsilon}_{j}$ is generally unknown; we therefore introduce a parametric working model for its components. 
Specifically, we assume that each component of $\bm{\varepsilon}_{j}$ follows density $f_{\bm{\eta}_j}$, 
indexed by parameters $\bm{\eta}_j\in\Theta_\eta$, such that the distribution has mean zero and variance $\sigma_{j}^2$, 
for each $j\in \mathcal{I}$.  The choice of $f_{\bm{\eta}_j}$ can be guided by the empirical characteristics of the response.
For example, if the data exhibit skewness, a skew-normal density may be used as $f_{\bm{\eta}_j}$ 
with $\bm{\eta}_j = (\sigma_{j}^2, \gamma_j)^\top$, where $\gamma_j$ controls the degree of asymmetry.

Under this general formulation, the parameter vector for the $j$-th marginal model is 
$\bm{\theta}_j = (\bm{\beta}_{j}, \bm{\eta}_{j})^\top$.
Following the general theory of DPD from \cite{ghosh2013robust}, we can define the MDPDE of $\bm{\theta}_j$ as 
a minimizer of the same empirical objective function as in \eqref{EQ:dpd-loss} but with the inner Gaussian loss function 
$V_\alpha(\widetilde{y}_{ik}, \beta_0+ \widetilde{x}_{ij,k}\beta_1, \sigma^2)$ being replaced by 
$V_\alpha(\widetilde{y}_{ik}, \beta_0+ \widetilde{x}_{ij,k}\beta_1, \bm{\eta}_j)$, where  
\begin{eqnarray}
	V_\alpha(y, \mu, \bm{\eta}) = \left\{\begin{array}{ll}
		\int f_{\bm{\eta}}^{1+\alpha}(s)ds - \frac{1+\alpha}{\alpha} f_{\bm{\eta}}^\alpha(y - \mu) + \frac{1}{\alpha},  
		& \mbox{ if } \alpha>0,  
		\\[10pt]
		-\log f_{\bm{\eta}}(y - \mu) & \mbox{ if } \alpha=0.
	\end{array}
	\right..
	\label{EQ:dpd-obj-gen}
\end{eqnarray}
The DPD-SISP procedure is then defined exactly as in the Gaussian case: 
for each covariate $j$, we compute the marginal MDPDE $\widehat\beta_{j1,\alpha}$ of the slope parameter $\beta_{j1}$,
rank the covariates based on the magnitude of these estimates, and select variables using either a fixed model size or a thresholding rule. 
The only modification is that the Gaussian loss function in Algorithm \ref{ALG:DPD-SISP} (Steps 8--9) 
is replaced by the general DPD loss defined above in \eqref{EQ:dpd-obj-gen}.
This formulation demonstrates that DPD-SISP extends naturally to a broad class of non-Gaussian settings, 
without requiring explicit distributional assumptions on the underlying REs or error terms. 
Moreover, the intrinsic robustness of the DPD framework provides protection against misspecification of the working error distribution, 
making the procedure particularly well suited for complex and heterogeneous data.

\begin{remark}
When the working error density $f_{\bm{\eta}_j}$ is chosen to be Gaussian with mean zero and variance $\sigma_{j}^2$,
the parameter reduces to $\bm{\eta}_j = \sigma_{j}^2$ for each $j\geq 1$, 
and the general DPD loss  $V_\alpha(\cdot)$ in (\ref{EQ:dpd-obj-gen}) simplifies exactly to the Gaussian form 
given in (\ref{EQ:dpd-obj-norm}).
\end{remark}

\subsection{Second-Stage Selection and Estimation }
\label{SEC:second_stage}

The proposed DPD-SISP procedure reduces the dimensionality of the original  LMM from an ultrahigh-dimensional regime ($p \gg n$) 
to a moderate scale by retaining a subset $\widehat{\mathcal{S}}_\alpha$ of size $d = o(n)$. 
Although this screening step satisfies the sure screening property, 
the selected set may still contain irrelevant variables and does not, by itself, provide final estimates of the FE coefficients.
Therefore, a second-stage refinement is necessary to achieve both consistent variable selection and efficient estimation.

In traditional SIS framework, this refinement is typically performed via penalized likelihood methods 
applied the the reduced model \citep{fan2008sure}. However, such approaches inherit the lack of robustness of likelihood-based procedures 
and can perform poorly in the presence of contamination or model misspecification. 
To address this issue, complementing our DPD-SISP, we propose a robust second-stage procedure based on minimum penalized DPD estimation 
\citep{ghosh2020ultrahigh} applied on the reduced feasible LRM of the proxy-transformed data.

Formally, consider the general LMM setting described in Section \ref{SEC:DPD_SISP-gen}, without assuming Gaussianity,
and let $\widehat{\mathcal{S}}_\alpha \subset \mathcal{I}$ denote the set of FE covariates retained after applying the DPD-SISP($\alpha$),
with cardinality $d = |\widehat{\mathcal{S}}_\alpha|=o(n)$. 
Following \eqref{EQ:lm-feasible_gen}, we consider the reduced feasible model for the proxy-transformed data as given by 
\begin{equation}
	\widetilde{\bm{y}}_i = \widetilde{\mathbf{X}}_{i,\widehat{\mathcal{S}}_\alpha} \bm{\beta}_{\widehat{\mathcal{S}}_\alpha} + \widetilde{\bm{\varepsilon}}_i,
	~~~ i=, \ldots, m,
	\label{EQ:reduced_model_final}
\end{equation}
where the transformed errors are assumed to be approximately uncorrelated with mean zero and common marginal density $f_{\bm{\eta}}$, 
as discussed in the previous subsection (Section \ref{SEC:DPD_SISP-gen}).
Then, to perform simultaneous estimation and variable selection, we define the minimum penalized DPD estimator (MPDPDE) 
of  $(\bm{\beta}_{\widehat{\mathcal{S}}_\alpha}, \bm{\eta})$ as the minimizer of 
\begin{equation}
	Q_{n,\alpha}(\bm{\beta}_{\widehat{\mathcal{S}}_\alpha}, \bm{\eta}) 
	= \frac{1}{n}\sum_{i=1}^{m}\sum_{k=1}^{n_i} V_\alpha(\widetilde{y}_{ik}, \widetilde{\bm{x}}_{i,\widehat{\mathcal{S}}_\alpha,k}^\top \bm{\beta}_{\widehat{\mathcal{S}}_\alpha}, \bm{\eta})
	+ \sum_{j \in \widehat{\mathcal{S}}_\alpha} p_\lambda(|\beta_j|),
	\label{EQ:penalized_dpd_final}
\end{equation}
where $p_\lambda(\cdot)$ is a suitable penalty function indexed by the regularization parameter $\lambda > 0$,
and $V_\alpha$ is the DPD loss function \eqref{EQ:dpd-obj-gen} corresponding to the chosen working density $f_{\bm{\eta}}$
which simplifies to \eqref{EQ:dpd-obj-norm} under Gaussian LMMs. 
Common examples of penalty include the $\ell_1$ penalty (Lasso), adaptive lasso penalty, and nonconvex penalties such as SCAD 
that enjoys desirable properties such as unbiasedness, sparsity, and continuity \citep{fan2001variable}.
For coherence between the two stages, it is natural to use the same tuning parameter $\alpha$ 
in both the screening (via DPD-SISP) and estimation steps.

The theoretical properties of the MPDPDE under LRMs with general error distributions have been established 
by \cite{ghosh2020ultrahigh} for general non-concave penalties (covering $\ell_1$ and SCAD) and
by \cite{ghosh2024robust} for adaptive lasso penalty. Both proved consistency, oracle properties, and robustness of the MPDPDEs
under suitable assumptions and devised efficient algorithms for their computations.
To facilitate valid inference in the present setting, we adopt a sample-splitting strategy as in \cite{fan2008sure}. 
The data are randomly divided into two parts, one used for screening via DPD-SISP and the other for penalized estimation. 
This ensures approximate independence between the two stages, simplifying the theoretical analysis.
Then, under regularity conditions on the transformed design matrices, the penalty function and the working model, 
and provided that the screening step satisfies the sure screening property (as shown in Section \ref{SEC:theory_strong}), 
the second-stage estimator inherits strong asymptotic guarantees of the MPDPDE identically from \cite{ghosh2020ultrahigh,ghosh2024robust}. 
In particular, letting $(\widehat{\bm{\beta}}_{\widehat{\mathcal{S}}_\alpha}, \widehat{\bm{\eta}}_\alpha)$ denote 
a minimizer of the penalized objective in \eqref{EQ:penalized_dpd_final}, the following properties hold as $n\rightarrow\infty$: 
\begin{itemize}
	\item[(a)] (Selection consistency) The final selected model recovers the true active set with probability tending to one, i.e., 
	$P\left( \widehat{\widehat{\mathcal{S}}}_\alpha= \mathcal{S}_0\right) \to 1$, where 
$\widehat{\widehat{\mathcal{S}}}_\alpha = \left\{ j\in\mathcal{I} : \widehat{\bm{\beta}}_{\widehat{\mathcal{S}}_\alpha, j}\neq 0 \right\}$
	with $\widehat{\bm{\beta}}_{\widehat{\mathcal{S}}_\alpha, j}$ denoting the $j$-th element of $\widehat{\bm{\beta}}_{\widehat{\mathcal{S}}_\alpha}$ for $j\in\mathcal{I}$.
	\item[(b)] (Estimation consistency) $P\left(\|\widehat{\bm{\beta}}_{\widehat{\mathcal{S}}_\alpha} - \bm{\beta}_{0}\|_2 \leq O(\sqrt{s/n})\right) \to 1$.
	\item[(c)] (Asymptotic normality) The non-zero elements of 
	$\sqrt{n}(\widehat{\bm{\beta}}_{\widehat{\mathcal{S}}_\alpha} - \bm{\beta}_{S0})$, 
	together with  $\sqrt{n}(\widehat{\bm{\eta}}_\alpha-\bm{\eta}_0)$,
	have joint asymptotic normal distribution with mean vector zero and 
	an appropriate covariance matrix depending on the derivatives of $V_\alpha$ and $p_\lambda$.

	\item[(d)] (Robustness) The influence function of $\widehat{\bm{\beta}}_{\widehat{\mathcal{S}}_\alpha}$ is bounded for all $\alpha > 0$,
		ensuring stability under small departures from the assumed model.
\end{itemize}

These results justify the proposed two-stage procedure as a unified framework for robust high-dimensional inference in LMMs. 
The screening step effectively reduces dimensionality while retaining all relevant variables, 
and the second-stage penalized DPD estimation achieves consistent variable selection and efficient parameter estimation. 
The use of SCAD or adaptive lasso penalties further enables oracle properties under relatively mild conditions, 
as provided in \cite{ghosh2020ultrahigh} and \cite{ghosh2024robust}, respectively, 
making the overall procedure both theoretically sound and practically effective.

\section{Theoretical Guarantees for the DPD-SISP}
\label{SEC:theory_strong}

%

\subsection{Population-level Justification under General LMMs}
\label{SEC:SISP_ssp_pop}

Let us consider the general LMM framework described in Section \ref{SEC:DPD_SISP-gen}, 
but without assuming Gaussianity  on the REs and error terms, let the working error density $f_{\bm{\eta}}$ as specified therein.
We impose the following standard regularity conditions.

\begin{itemize}
	\item[(A0)] There exists constants $C_Y, C_X>0$ such that  \\
	$\max\limits_{1\leq i\leq m} E\|\bm{y}_{i}\|_2^2 \leq C_Y$ and
	$\max\limits_{1\leq i\leq m}E\|\bm{x}_{ij}\|_2^2 \leq C_X$ for all $j\in \mathcal{I}$.
	
	\item[(A1)] There exists constants $C_m, C_M>0$ such that  \\
	$0<C_m
	\le
	\lambda_{\min}(\bm{\Sigma}_i) 
	\le
	\lambda_{\max}(\bm{\Sigma}_i)
	\le C_M<\infty$ for all $i\geq 1$.
	
	\item[(A2)] For each $\bm{\eta}\in\Theta_\eta$, the working density $f_{\bm{\eta}}$ is strictly log-concave and 
	continuously differentiable  on its support $\mathbb{R}$.  
	Further, the family $\mathcal{F} = \left\{f_{\bm{\eta}} : \bm{\eta}\in\Theta_\eta \right\}$ is identifiable in $\bm{\eta}$. 	

\end{itemize} 
These assumptions are quite flexible, accommodating most common cases of practical LMMs. 
Particularly, (A0)–(A1) are mild moment conditions ensuring that the covariates and responses are well-behaved 
and that the covariance matrices are uniformly well-conditioned. These hold, for example, under sub-Gaussian tails. 
A sufficient condition implying (A1) is that the REs are neither linearly dependent nor distributed with a singular covariance.  
Assumption (A2) accommodates a wide class of error distributions, including normal, skew-normal, and logistic families.

Recall that, under this general LMM formulation the whitening transformation yields the LRM \eqref{EQ:lm_gen}.
Stacking across clusters, as in \eqref{EQ:lm-feas-marginal_gen}, we obtain
\begin{eqnarray}
\bm{y}^* = \mathbf{X}^* \bm{\beta} + \bm{\varepsilon}^*, ~~~~
E[\bm{\varepsilon}^*] =\bm{0}_{n}, ~ \mbox{Cov}(\bm{\varepsilon}^*) = \sigma^2 \mathbb{I}_{n},  
\label{EQ:lm_genV}
\end{eqnarray}
where $\bm{y}^* = (\bm{y}_1^{*\top}, \ldots, \bm{y}_m^{*\top})^\top$,
$\mathbf{X}^* = [\mathbf{X}_1^{*\top} \cdots \mathbf{X}_m^{*\top}]^\top = [\bm{x}_{1}^* \cdots \bm{x}_{p}^*]$
and  $\bm{\varepsilon}^* = (\bm{\varepsilon}_{1}^{*\top}, \ldots, \bm{\varepsilon}_{m}^{*\top})^\top$.
Although independence is not guaranteed without Gaussian assumptions, from \eqref{EQ:lm_genV} and \eqref{EQ:lm-feas-marginal_gen}, 
the components of $\bm{y}^*$, $\widetilde{\bm{y}}$, $\bm{x}_{j}^*$, $\widetilde{\mathbf{x}}_{j}$, for $j\in\mathcal{I}$, 
are uncorrelated and identically distributed. Let the scalar random variables $Y^*$, $\widetilde{Y}$, $X_{j}^*$, $\widetilde{X}_{j}$
denote generic components of the corresponding vectors $\bm{y}^*$, $\widetilde{\bm{y}}$, $\bm{x}_{j}^*$, 
$\widetilde{\mathbf{x}}_{j}$, respectively, for $j\in \mathcal{I}$. 
%
Then, for any $j\geq 1$, Assumptions (A0)--(A1) implies that
\begin{eqnarray}\label{EQ:Ass1}
E[X_j^{*2}] = \frac{1}{n}\sum\limits_{i=1}^m E\left[\bm{x}_{ij}^\top\bm{\Sigma}_i^{-1}\bm{x}_{ij}\right]
\leq \frac{1}{C_m} \frac{1}{n}\sum\limits_{i=1}^m E\left[\bm{x}_{ij}^\top\bm{x}_{ij}\right] \leq \frac{C_X}{C_m}
<\infty.
\end{eqnarray}

Now, for each covariate $j\in\mathcal{I}$, define the population-level marginal parameter as 
\begin{eqnarray}
\bm{\theta}_{j,\alpha}^M =  (\beta_{j0,\alpha}^M, \beta_{j1,\alpha}^M, \bm{\eta}_{j,\alpha}^M) = \argmin_{(\beta_0, \beta_1, \bm{\eta})} 
E\left[V_\alpha(Y^*, \beta_0 + X_{j}^*\beta_1, \bm{\eta})\right], 
~~~ j=1, \ldots, p.
\label{EQ:mdpde-true}
\end{eqnarray}
This represents the population target of the marginal MDPDE when the true whitened data are available,
and can be interpreted as the best-fitting marginal regression parameter under the DPD criterion.
Assumption (A2) ensures that the loss function in \eqref{EQ:mdpde-true} is well-behaved, guaranteeing that 
the minimizers $\bm{\theta}_{j,\alpha}^M$ exists, is unique and is well-defined. 
Let us define  $\psi_\alpha(y, \mu, \bm{\eta}) = -\frac{\partial}{\partial\mu}V_\alpha(y, \mu, \bm{\eta})$ for any $\alpha\geq 0$. 
Then, $\bm{\theta}_{j,\alpha}^M$ satisfies 
\begin{equation*}
E\left[\psi_\alpha(Y^*,  \beta_{j0,\alpha}^M + X_{j}^* \beta_{j1,\alpha}^M, \bm{\eta}_{j,\alpha}^M)\right]=0, ~~ \mbox{ and }
E\left[\psi_\alpha(Y^*,  \beta_{j0,\alpha}^M + X_{j}^* \beta_{j1,\alpha}^M, \bm{\eta}_{j,\alpha}^M)X_j^*\right]=0.
\end{equation*}

The second equation in the above plays a central role in characterizing the marginal relationship between the response and each covariate.
Motivated by this estimating equation, we define a  robust association measure
\begin{eqnarray}
S_{j,\alpha}(\mu, \bm{\eta}) = E[\psi_\alpha(Y^*,  \mu, \bm{\eta})X_j^*], 
~~ \mbox{ for } j\in\mathcal{I}, ~~\bm{\eta}\in\Theta_\eta, 
\label{EQ:signal_robust}
\end{eqnarray}
where $\mu\in\mathbb{R}$ may or may not depend on the covariates.
When $E(X_j^*)=0$, which typically holds under mean-centering and independence between FE and RE covariates, 
the quantity $S_{j,\alpha}(\mu, \bm{\eta}) $ can be interpreted as a generalized covariance-type dependence measure.
At $\alpha=0$ under a Gaussian working model, $\psi_\alpha$ reduces to a linear function of $(Y^* - \mu)$, 
and hence $S_{j,\alpha}(\mu, \bm{\eta})$ reduces (up to scaling) to the classical covariance between $Y^*$ and $X_j^*$; 
see \eqref{EQ:signal_robust_normal} in Section \ref{SEC:SISP-normal}.
More generally, $S_{j,\alpha}$ provides a robust dependence measure that downweights extreme observations at $\alpha>0$.

To formalize the relationship between this robust association measure and the population-level marginal slope parameters, 
we further impose the following smoothness condition from the theory of MDPDEs \citep{basu1998robust,basu2026statistical}.

\begin{itemize}
	\item[(A3)] The function $V_\alpha(y, \mu, \bm{\eta})$ is twice continuously differentiable in its arguments,
	and all second order partial derivatives are uniformly bounded, in absolute value, away from both zero and infinity 
	for all $y, \mu\in\mathbb{R}$ and $\bm{\eta}\in\Theta_\eta$. 
\end{itemize}

This assumption ensures that the DPD loss function has uniformly controlled curvature, 
which is essential for establishing stability of the estimating equations and deriving quantitative bounds.
Then, the following theorem establishes a precise connection between $\beta_{j1,\alpha}^M$  and $S_{j,\alpha}$, 
providing the basis for the sure screening property of the DPD-SISP procedure.

\begin{theorem}\label{THM:SISP-population}
Under a general LMM with working error density satisfying (A2) and non-degenerate FE covariates,  
the following results hold:
\begin{itemize}
	\item[a)] For any $\alpha\geq 0$ and $j\in\mathcal{I}$,  
	$\beta_{j1,\alpha}^M=0$ if and only if $S_{j,\alpha}(\beta_{j0,\alpha}^M, \bm{\eta}_{j,\alpha}^M)=0$.
	
	\item[b)] If additionally (A0)-(A1) and (A3) hold for a given $\alpha\geq 0$, 
	and $\left|S_{j,\alpha}(\beta_{j0,\alpha}^M, \bm{\eta}_{j,\alpha}^M)\right| \geq c_1n^{-\kappa}$
	for all $j\in \mathcal{S}_0$, with constants $c_1>0$ and $\kappa\in(0,1/2)$,
	then there exists a constant $c_2>0$ such that  
	$ \min_{j\in\mathcal S_0} |\beta_{j1,\alpha}^M|\geq c_2n^{-\kappa}.$
\end{itemize}
\end{theorem}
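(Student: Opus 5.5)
For part (a), the plan is to argue directly from the two first-order conditions satisfied by $\bm{\theta}_{j,\alpha}^M$. Write $\mu_j^M = \beta_{j0,\alpha}^M + X_j^*\beta_{j1,\alpha}^M$.

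For the "only if" direction, suppose $\beta_{j1,\alpha}^M=0$. Then $\mu_j^M=\beta_{j0,\alpha}^M$ is a constant. The second estimating equation $E[\psi_\alpha(Y^*,\mu_j^M,\bm{\eta}_{j,\alpha}^M)X_j^*]=0$ then reads exactly $S_{j,\alpha}(\beta_{j0,\alpha}^M,\bm{\eta}_{j,\alpha}^M)=0$.

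For the "if" direction, suppose $S_{j,\alpha}(\beta_{j0,\alpha}^M,\bm{\eta}_{j,\alpha}^M)=0$. Consider the restricted problem of minimizing $E[V_\alpha(Y^*,\beta_0,\bm{\eta})]$ over $(\beta_0,\bm{\eta})$ alone, with $\beta_1=0$, and let $(\bar\beta_0,\bar{\bm\eta})$ be its minimizer. The candidate $(\bar\beta_0,0,\bar{\bm\eta})$ satisfies the $\beta_0$- and $\bm\eta$-score equations by construction. It also satisfies the $\beta_1$-score equation provided $S_{j,\alpha}(\bar\beta_0,\bar{\bm\eta})=0$, so it is a stationary point of the full objective. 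By (A2), strict log-concavity makes $-\log f_{\bm\eta}$, and hence the population DPD criterion, strictly convex in $(\beta_0,\beta_1)$ for fixed $\bm\eta$; together with non-degeneracy of $X_j^*$ and identifiability, this gives a unique stationary point. The candidate must therefore coincide with $\bm{\theta}_{j,\alpha}^M$, so $\beta_{j1,\alpha}^M=0$.

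The delicate point is matching $(\bar\beta_0,\bar{\bm\eta})$ with $(\beta_{j0,\alpha}^M,\bm\eta_{j,\alpha}^M)$. I would handle it by a fixed-point argument: fix $(\beta_0,\bm\eta)=(\beta_{j0,\alpha}^M,\bm\eta_{j,\alpha}^M)$ and note that the map $\beta_1\mapsto E[\psi_\alpha(Y^*,\beta_0+X_j^*\beta_1,\bm\eta)X_j^*]$ is strictly monotone. Strict monotonicity follows because its derivative is $-E[\partial_\mu^2 V_\alpha\, X_j^{*2}]$, which has a strict sign by (A2)/(A3) and non-degeneracy. This map vanishes at $\beta_{j1,\alpha}^M$ by the score equation and vanishes at $0$ by hypothesis, so by strict monotonicity $\beta_{j1,\alpha}^M=0$. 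This monotonicity route avoids the matching issue entirely, and I would use it as the main argument.

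For part (b), apply the mean value theorem to the same one-dimensional map $g(t)=E[\psi_\alpha(Y^*,\beta_{j0,\alpha}^M+X_j^*t,\bm\eta_{j,\alpha}^M)X_j^*]$. We have $g(\beta_{j1,\alpha}^M)=0$ and $g(0)=S_{j,\alpha}(\beta_{j0,\alpha}^M,\bm\eta_{j,\alpha}^M)$, so
\[
|S_{j,\alpha}(\beta_{j0,\alpha}^M,\bm\eta_{j,\alpha}^M)| = |g(0)-g(\beta_{j1,\alpha}^M)| = |\beta_{j1,\alpha}^M|\,\big|E[\partial_\mu^2V_\alpha(Y^*,\tilde\mu,\bm\eta_{j,\alpha}^M)X_j^{*2}]\big|
\]
for some intermediate $\tilde\mu$. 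Differentiation under the expectation is justified by (A3) and the moment bound \eqref{EQ:Ass1}. By (A3), $|\partial_\mu^2V_\alpha|\le K$ uniformly, and by \eqref{EQ:Ass1}, which follows from (A0)--(A1), $E[X_j^{*2}]\le C_X/C_m$. Hence $|S_{j,\alpha}|\le K (C_X/C_m)|\beta_{j1,\alpha}^M|$. For $j\in\mathcal S_0$ this yields $|\beta_{j1,\alpha}^M|\ge c_1 C_m n^{-\kappa}/(K C_X)$, so one can take $c_2 = c_1C_m/(KC_X)$.

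The main obstacle is the rigor of strict monotonicity in part (a). At the point $\alpha=0$ only (A2) is available, so I would rely on log-concavity, which gives $\partial_\mu^2(-\log f)>0$, for that case. For $\alpha>0$ under (A2) alone, the strict sign of $\partial_\mu^2 V_\alpha$ is not automatic. If that fails, I would fall back on uniqueness of the minimizer combined with strict convexity in $\beta_1$ of the profiled criterion.
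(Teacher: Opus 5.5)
Your part (b) is essentially the paper's argument. The paper applies the mean value theorem pointwise to $\psi_\alpha$ in $\mu$, multiplies by $|X_j^*|$ and takes expectations, so no differentiation under the integral is needed. It gets the same constant $c_2=C_mc_1/(C_XC_D)$ and uses only the \emph{upper} bound in (A3). Your ``only if'' half of (a) is also fine.

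The genuine gap is the ``if'' half of (a) for $\alpha>0$. Strict monotonicity of $g(t)=E[\psi_\alpha(Y^*,\beta_{j0,\alpha}^M+X_j^*t,\bm\eta_{j,\alpha}^M)X_j^*]$ needs a definite sign of $E[\partial_\mu^2V_\alpha\,X_j^{*2}]$ along the segment between $0$ and $\beta_{j1,\alpha}^M$. (A2) does not provide this: strict log-concavity makes $-\log f_{\bm\eta}$ convex, but for $\alpha>0$ the loss is built from $-f_{\bm\eta}^\alpha$. In the Gaussian case, with $r=y-\mu$,
\[
\partial_\mu^2 V_\alpha(y,\mu,\sigma^2)=\frac{1+\alpha}{\sigma^{\alpha+2}(2\pi)^{\alpha/2}}\Bigl(1-\frac{\alpha r^2}{\sigma^2}\Bigr)e^{-\alpha r^2/(2\sigma^2)},
\]
which is negative for $|r|>\sigma/\sqrt{\alpha}$; the score is redescending. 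So your fallback fails for the same reason: the criterion is not convex in $\beta_1$ in general, profiled or not. In fact $g$ can genuinely have two zeros. Take a small cluster of points at large $|X_j^*|$, with residual of order $\sigma/\sqrt{\alpha}$ from the horizontal line $\mu=\beta_{j0,\alpha}^M$ but far from the fitted line. It can be tuned to make $g(0)=0$ while barely moving a minimizer with $\beta_{j1,\alpha}^M\neq 0$.

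Appealing to (A3) does not rescue the argument. It is not a hypothesis of (a), and its lower bound cannot hold for any $\alpha>0$. Indeed $\partial_\mu^2V_\alpha=-\frac{1+\alpha}{\alpha}(f_{\bm\eta}^\alpha)''(y-\mu)$, and a continuous $(f_{\bm\eta}^\alpha)''$ bounded away from zero has constant sign. That would force $f_{\bm\eta}^\alpha$ either to diverge quadratically, contradicting integrability, or to become negative. So as it stands you have a proof of (a) for $\alpha=0$ only. For $\alpha>0$ you need an explicit extra hypothesis, e.g.\ that $t=\beta_{j1,\alpha}^M$ is the unique root of $g$.

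Even for $\alpha=0$, (A2) gives only $C^1$ and strict log-concavity, not $\partial_\mu^2(-\log f_{\bm\eta})>0$. Argue instead that $\psi_0(y,\cdot,\bm\eta)$ is strictly decreasing in $\mu$. Then for $t_1<t_2$ the integrand $\{\psi_0(Y^*,\beta_0+X_j^*t_2,\bm\eta)-\psi_0(Y^*,\beta_0+X_j^*t_1,\bm\eta)\}X_j^*$ is negative whenever $X_j^*\neq0$, and non-degeneracy gives $g(t_2)<g(t_1)$.

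The paper's proof of (a) does not close this gap either. It invokes identifiability in (A2) and Fisher consistency of the DPD functional. But Fisher consistency concerns data generated by the working model, whereas the marginal model for covariate $j$ is misspecified, so it does not determine when $\beta_{j1,\alpha}^M=0$. Your monotonicity argument is the rigorous version of (a) wherever it applies.
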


The proof of the above theorem is given in Appendix \ref{APP}. 
It provides the key population-level justification for the proposed DPD-SISP procedure. 
Part (a) ensures that inactive covariates, those not belonging to the true active set $\mathcal{S}_0$,  
do not exhibit spurious marginal signals under the DPD-based criterion.
Part (b) further shows that, for active covariates, a sufficiently strong robust marginal association, 
as quantified through $S_{j,\alpha}$, translates into a non-negligible lower bound on the magnitude of 
the corresponding marginal slope parameter.
Together, these results establishes a clear separation between active and inactive variables at the population level, 
which is fundamental for achieving consistent variable screening in ultrahigh-dimensional settings.

\subsection{Sure Screening Property of the DPD-SISP: General results}
\label{SEC:SISP_ssp}

We now establish the sample-level sure screening property of the proposed DPD-SISP procedure under general LMMs. 
The population-level results in Theorem \ref{THM:SISP-population} indicate that a screening rule based on suitable estimators of 
$\beta_{j1,\alpha}^M$, together with a threshold $\gamma_n \in(0, c_2n^{-\kappa})$, can recover the true active set 
with high probability. 
%
However, in practice, the whitened data $(\bm{y}^*, \mathbf{X}^*)$  are not observable, 
and hence the oracle parameter $\beta_{j1}^M$ cannot be estimated directly. 
Instead, the DPD-SISP utilizes the MDPDE computed from proxy-transformed data 
$(\widetilde{\bm{y}}, \widetilde{\mathbf{x}}_{1}, \ldots, \widetilde{\mathbf{x}}_{p})$,
whose corresponding population target is defined as 
\begin{eqnarray}
	\bm{\theta}_{j,\alpha}^{P} =  (\beta_{j0,\alpha}^P, \beta_{j1,\alpha}^P, \bm{\eta}_{j,\alpha}^P) 
	= \argmin_{(\beta_0, \beta_1, \bm{\eta})} E\left[V_\alpha(\widetilde{Y}, \beta_0 + \widetilde{X}_{j}\beta_1, \bm{\eta})\right], 
	~~~ j\in\mathcal{I}, ~\alpha\geq 0.
	\label{EQ:mdpde-proxy}
\end{eqnarray}
Therefore, the sure screening property at the sample level can be expected if the proxy-induced target $\beta_{j1,\alpha}^P$ 
is sufficiently close to the oracle target $\beta_{j1,\alpha}^M$ and the sample MDPDE $\widehat\beta_{j1,\alpha}$ is concentrated 
sharply around $\beta_{j1,\alpha}^P$, uniformly over $j\in\mathcal{I}$, both with exponentially high probability. 
These requirements are formalized through the following assumptions for some $\kappa\in(0,1/2)$ and fixed $\alpha\geq 0$.

\begin{itemize}
	\item[(B1)] (Uniform exponential consistency of the MDPDE)
	For any $c>0$, there exists a sequence $\{R_n\}_{n\geq 1}$  such that
	$R_n\rightarrow 0$ and $n^{-\tau}\log R_n \rightarrow -\infty$ as $n\rightarrow\infty$, and   
	$$
	P\left(\left|\widehat\beta_{j1,\alpha} - \beta_{j1,\alpha}^P\right| > c n^{-\kappa}  \right) =O(R_n),
	$$
	uniformly over $j\in\mathcal{I}$, for all sufficiently large	$n$.
	 
	\item[(B2)] (Accuracy of the proxy approximation)
	For any $c>0$, there exists a sequence $\{\widetilde{R}_n\}_{n\geq 1}$  such that
	$\widetilde{R}_n\rightarrow 0$ and $n^{-\tau}\log \widetilde{R}_n \rightarrow -\infty$ as $n\rightarrow\infty$, and   
	$$
	P\left(\left|\beta_{j1,\alpha}^P - \beta_{j1,\alpha}^M\right| > c n^{-\kappa}  \right) = O(\widetilde{R}_n),
	$$
	uniformly over $j\in\mathcal{I}$, for all sufficiently large	$n$.
	
	\item[(B3)] (Minimal signal strength) 
	There exists $c_2>0$ such that $\min_{j\in\mathcal S_0} |\beta_{j1,\alpha}^M|\geq c_2n^{-\kappa}.$
	
	\item[(B4)] (Global signal strength control) 
	$||\bm{\beta}_{1,\alpha}^M||_2^2 = O\left( \lambda_{\max}(\bm{\Sigma}^*)\right)$, 
with $\bm{\beta}_{1,\alpha}^M = (\beta_{11,\alpha}^M, \ldots, \beta_{p1,\alpha}^M)^\top$ 
and $\bm{\Sigma}^*= \frac{1}{n}\sum\limits_{i=1}^m E\left[\bm{X}_i^\top\bm{\Sigma}_i^{-1}\bm{X}_{i}\right].$ 
\end{itemize}

These Assumptions (B1)--(B4) are fully generic requirements for establishing the desired properties of the DPD-SISP.
Assumption (B1) ensures exponential concentration of the MDPDE (a special M-estimator) based on transformed data 
$(\widetilde{\bm{y}}, \widetilde{\mathbf{x}}_{1}, \ldots, \widetilde{\mathbf{x}}_{p})$.
Such results are well-studied for both specific and general classes of M-estimators in the literature,
and can be easily verified under suitable moment and tail conditions on distributions of transformed variables. 
Such conditions may be found, for example, in \cite{ghosh2023robust} for the cases 
where  $f_{\bm{\eta}}$ belongs to an exponential family of distributions with known $\bm{\eta}$, 
or in \cite{bratsberg2025exponential} for more general cases with unknown $\bm{\eta}$. 
Assumption (B2) controls the approximation error induced by the proxy matrix;
it is further analyzed in Section \ref{SEC:proxy} to derive simpler sufficient conditions on $\bm{P}$. 
Assumption (B3) guarantees sufficient separation between active and inactive covariates in terms of the marginal minimum DPD functionals; 
it holds, e.g, under the conditions of Theorem \ref{THM:SISP-population} including (A0)--(A3). 
Finally, Assumption (B4) restricts the overall signal strength to control false positives, 
as a straightforward generalization of the usual condition required in the literature of SIS; 
see, e.g., \cite{fan2008sure,ghosh2023robust} among others. 
Under these conditions, we can now prove the following theoretical guarantees for the proposed DPD-SISP;
the proof is given in Appendix \ref{APP}.

\begin{theorem}\label{THM:SISP-ssp}
Under a general LMM satisfying Assumptions (B1)--(B3), for a given $\alpha\geq 0$ and $\kappa\in(0,1/2)$,  
the following results hold.
\begin{itemize}
	\item[a)]  For the threshold $\gamma_n=C n^{-\kappa}$, with $0<C\leq c_2/2$, 
	$$
	P\left(\mathcal{S}_0\subseteq \widehat{\mathcal{S}}_\alpha(\gamma_n)\right)\geq 1 - s  (R_n+\widetilde{R}_n)C_1, 
	$$
	for some constant $C_1>0$ and all sufficiently large $n$. 
	
	\item[b)]  If additionally (B4) holds, then for the same threshold $\gamma_n$ as in Part (a), we have
	$$
	P\left(|\widehat{\mathcal{S}}_\alpha(\gamma_n)|\leq O(n^{2\kappa}\lambda_{\max}(\bm{\Sigma}^*))\right)\geq 1 - p (R_n+\widetilde{R}_n)C_2,
	$$
	for some constant  $C_2>0$ and all sufficiently large $n$. 
\end{itemize}
\end{theorem}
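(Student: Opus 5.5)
The plan is the standard SIS argument: a triangle inequality reduces both parts to (B1)--(B2) applied uniformly in $j$, followed by a union bound. For each $j\in\mathcal{I}$ I will write
$$
\widehat\beta_{j1,\alpha}-\beta_{j1,\alpha}^M=(\widehat\beta_{j1,\alpha}-\beta_{j1,\alpha}^P)+(\beta_{j1,\alpha}^P-\beta_{j1,\alpha}^M).
$$
Fix a constant $c\in(0,c_2/4]$. On the event
$$
\mathcal{E}_j=\{|\widehat\beta_{j1,\alpha}-\beta_{j1,\alpha}^P|\le c n^{-\kappa}\}\cap\{|\beta_{j1,\alpha}^P-\beta_{j1,\alpha}^M|\le c n^{-\kappa}\},
$$
we then have $|\widehat\beta_{j1,\alpha}-\beta_{j1,\alpha}^M|\le 2cn^{-\kappa}\le (c_2/2)n^{-\kappa}$. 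By (B1) and (B2), used with this $c$, and since both bounds hold uniformly over $j$, there is a constant $C'$, not depending on $j$, such that $P(\mathcal{E}_j^c)\le C'(R_n+\widetilde R_n)$ for all large $n$. If $\beta^P$ is treated as a random quantity, (B2) is a probability statement and enters the bound in the same way.

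For part (a), take $j\in\mathcal{S}_0$. On $\mathcal{E}_j$, (B3) gives
$$
|\widehat\beta_{j1,\alpha}|\ge|\beta_{j1,\alpha}^M|-(c_2/2)n^{-\kappa}\ge (c_2/2)n^{-\kappa}\ge Cn^{-\kappa}=\gamma_n,
$$
because $C\le c_2/2$. Hence $j\in\widehat{\mathcal S}_\alpha(\gamma_n)$. Then
$$
P(\mathcal S_0\not\subseteq\widehat{\mathcal S}_\alpha(\gamma_n))\le\sum_{j\in\mathcal S_0}P(\mathcal E_j^c)\le sC'(R_n+\widetilde R_n),
$$
which is the claim with $C_1=C'$.

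For part (b), I will work on $\mathcal E=\cap_{j\in\mathcal I}\mathcal E_j$, whose complement has probability at most $pC'(R_n+\widetilde R_n)$ by the union bound over all $p$ indices. Choose $c$ small relative to $C$, for instance $c\le C/4$, so that $2cn^{-\kappa}\le \gamma_n/2$. If $|\widehat\beta_{j1,\alpha}|\ge\gamma_n$, then on $\mathcal E$ we get $|\beta_{j1,\alpha}^M|\ge \gamma_n/2=(C/2)n^{-\kappa}$. The selected set is therefore contained in $\{j:|\beta_{j1,\alpha}^M|\ge (C/2)n^{-\kappa}\}$. Counting with a Chebyshev-type bound, the cardinality of this set is at most
$$
\frac{\|\bm\beta_{1,\alpha}^M\|_2^2}{(C/2)^2n^{-2\kappa}}=O(n^{2\kappa}\lambda_{\max}(\bm\Sigma^*))
$$
by (B4). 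This gives the bound with $C_2=C'$.

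The only delicate point is the bookkeeping with constants. (B1)--(B2) hold "for any $c>0$", but the implied $O(\cdot)$ constants may depend on $c$. This is harmless, since $c$ is fixed as a function of $c_2$ and $C$ alone. I also need these constants to be uniform in $j$, which is exactly what the uniformity clause in (B1)--(B2) provides. Nothing else is hard: the probability estimates are direct union bounds, and the exponential-rate conditions on $R_n$ and $\widetilde R_n$ are needed only afterwards, to make $p(R_n+\widetilde R_n)\to0$ under $\log p=O(n^\tau)$.
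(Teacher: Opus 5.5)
Your proposal is correct and follows essentially the same route as the paper's proof. Both arguments pass through $\beta^P_{j1,\alpha}$ via the triangle inequality with a quarter-threshold split, apply (B1)--(B2) uniformly in $j$ with a union bound over $\mathcal{S}_0$ (respectively $\mathcal{I}$), use (B3) for part (a), and for part (b) bound the selected set by $\{j:|\beta^M_{j1,\alpha}|\geq (C/2)n^{-\kappa}\}$, whose size is controlled by (B4); your explicit Chebyshev-type counting step and your remark on how the constants depend on $c$ fill in details that the paper leaves implicit.
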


The above theorem establishes that the proposed DPD-SISP procedure achieves the sure screening property 
with probability tending to one at an exponential rate. 
This is because the required assumptions ensure that both $R_n$ and $\widetilde{R}_n$ decay super-exponentially relative to $\log p$, 
and hence $p R_n\rightarrow 0$ and $p \widetilde{R}_n\rightarrow 0$ as $n\rightarrow\infty$ even when $\log p=O(n^\tau))$. 
Consequently, since $s\ll p$,  the final probabilities in both parts of Theorem \ref{THM:SISP-ssp} tend to one
exponentially fast, leading to the desired sure screening and size control guarantees.

We may further note the crucial role of the signal rate parameter $\kappa \in (0, 1/2)$. 
The minimal signal condition in (B3) matches the classical ``beta-min" condition from 
the theory of high-dimensional statistical methods including SIS procedures. 
The restriction $\kappa <1/2$ ensures that the signal dominates the stochastic fluctuation of the estimator, 
which typically occurs at rate $n^{-1/2}$, enabling consistent ranking of variables. 
This matches the optimal detection boundaries of existing SIS approaches.
The threshold $\gamma_n=C n^{-\kappa}$ is therefore rate-optimal, balancing false negatives and false positives. 
Furthermore, this cut-off also ensures that the number of selected variables is polynomial in $n$, even when $p$ is exponential. 
Particularly, assuming $\lambda_{\max}(\bm{\Sigma}^*)=O(1)$, which holds under (A0)--(A1),
we get $|\widehat{\mathcal{S}}_\alpha(\gamma_n)|\leq O(n^{2\kappa})=o(n)$. 
This reduction makes the model amenable to subsequent refinements, like penalized estimation, 
placing DPD-SISP on the same theoretical footing as state-of-the-art SIS methods 
while providing superior robustness and dependence adjustment.

\subsection{On the choice of the Proxy matrix }
\label{SEC:proxy}

Unlike classical SIS methods that operate directly on independent observations,
the proposed DPD-SISP introduces an additional approximation layer through the proxy matrix $\bm{P}$.
Assumption (B2) requires that the bias induced by this proxy must be asymptotically negligible relative to the minimal signal strength. 
This can be ensured by imposing the following spectral consistency condition on $\bm{P}$, 
analogous to the ``\textit{working correlation consistency}" conditions used in high-dimensional inference for dependent data 
\cite[see, e.g.,][]{fan2012variable,bradic2020test}. 

\begin{itemize}
	\item[(P0)] (Proxy Spectral Consistency) The proxy matrix $\bm{P}$ satisfies, 	with probability at least  $1 - O(\widetilde{R}_n)$,  
	$$
	\max_{1 \le i \le m} \|\widehat{\bm{\Sigma}}_i - \bm{\Sigma}_i\|_{op} = O(n^{-\kappa}),
	$$
where the sequence  $\{\widetilde{R}_n\}_{n\geq 1}$  satisfies $\widetilde{R}_n\rightarrow 0$ 
and $n^{-\tau}\log \widetilde{R}_n \rightarrow -\infty$ as $n\rightarrow\infty$. 
In addition, there exist constants $0<\widetilde{C}_m \le \widetilde{C}_M<\infty$ such that\\
$\widetilde{C}_m
	\le
	\lambda_{\min}(\widehat{\bm{\Sigma}}_i) 
	\le
	\lambda_{\max}(\widehat{\bm{\Sigma}}_i)
	\le \widetilde{C}_M$ for all $i\geq 1$.
\end{itemize}

\noindent
The sufficiency of (P0) is formally stated in the following theorem; 
see Appendix \ref{APP} for its detailed proof. 

\begin{theorem}\label{THM:Proxy1}	
Under a general LMM satisfying (A0), (A1) and (A3), 
Assumption (P0) implies Assumption (B2) for any given $\alpha\geq 0$ and $\kappa\in(0,1/2)$.  
\end{theorem}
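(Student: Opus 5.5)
The argument is a perturbation analysis: (P0) makes the proxy-transformed data close to the whitened data, and a stability argument for the population minimizer of the DPD criterion converts this closeness into closeness of the slope targets.

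\textbf{Step 1: closeness of the transformed data.} On the event $\mathcal{E}_n$ of (P0), which has probability at least $1-O(\widetilde R_n)$, I would bound $\max_i\|\widehat{\bm\Sigma}_i^{-1/2}-\bm\Sigma_i^{-1/2}\|_{op}$. The natural tool is the integral representation
$$
\bm A^{-1/2}=\frac{2}{\pi}\int_0^\infty (\bm A+t^2\mathbb{I})^{-1}\,dt .
$$
Combined with the resolvent identity and the uniform spectral bounds in (A1) and in (P0), it gives
$$
\|\widehat{\bm\Sigma}_i^{-1/2}-\bm\Sigma_i^{-1/2}\|_{op}\le C\,\|\widehat{\bm\Sigma}_i-\bm\Sigma_i\|_{op}=O(n^{-\kappa}),
$$
with $C$ depending only on $C_m$ and $\widetilde C_m$. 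Then $\widetilde{\bm y}_i-\bm y_i^*=(\widehat{\bm\Sigma}_i^{-1/2}-\bm\Sigma_i^{-1/2})\bm y_i$, and similarly for $\widetilde{\bm x}_{ij}-\bm x_{ij}^*$. Using (A0), this yields the uniform-in-$j$ second-moment bounds
$$
E(\widetilde Y-Y^*)^2=O(n^{-2\kappa}),\qquad E(\widetilde X_j-X_j^*)^2=O(n^{-2\kappa}).
$$
Here the generic components are averaged over $i,k$, exactly as in \eqref{EQ:Ass1}. The moments of $\widetilde Y$ and $\widetilde X_j$ are bounded in the same way.

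\textbf{Step 2: perturbation of the criterion.} Write $G_j^M(\bm\theta)=E[V_\alpha(Y^*,\beta_0+X_j^*\beta_1,\bm\eta)]$ and let $G_j^P$ be its analogue with $(\widetilde Y,\widetilde X_j)$. Both minimizers satisfy the first-order equations
$$
E[\nabla_{\bm\theta}V_\alpha]=\bm 0 .
$$
By (A3), the second partial derivatives of $V_\alpha$ are bounded. The mean value theorem in $(y,\mu)$ then gives
$$
\|\nabla G_j^P(\bm\theta)-\nabla G_j^M(\bm\theta)\|\le C\,E\big[(|\widetilde Y-Y^*|+|\widetilde X_j-X_j^*|\,|\beta_1|)(1+|X_j^*|+|\widetilde X_j|)\big].
$$
The factor $(1+|X_j^*|+|\widetilde X_j|)$ appears because the $\beta_1$-score carries an $X$ factor. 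By Cauchy--Schwarz and Step 1, the right side is $O(n^{-\kappa})$, uniformly over $j$ and over $\bm\theta$ in a bounded set. Evaluating at $\bm\theta=\bm\theta_{j,\alpha}^P$, where $\nabla G_j^P=\bm 0$, we get $\|\nabla G_j^M(\bm\theta_{j,\alpha}^P)\|=O(n^{-\kappa})$.

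\textbf{Step 3: from gradient to parameter.} I would now invoke strong convexity of $G_j^M$ near $\bm\theta_{j,\alpha}^M$. Its Hessian is $E[\nabla^2V_\alpha\cdot(1,X_j^*)^{\otimes2}]$ in the regression block, together with the $\bm\eta$ blocks. By (A3), the curvature in $\mu$ is bounded away from zero. Non-degeneracy of $X_j^*$ and its bounded second moment \eqref{EQ:Ass1} make $E[(1,X_j^*)^{\otimes2}]$ uniformly well conditioned, and together these give $\lambda_{\min}(\nabla^2G_j^M)\ge c>0$ uniformly in $j$. A mean-value expansion of $\nabla G_j^M$ between the two minimizers then gives
$$
\|\bm\theta_{j,\alpha}^P-\bm\theta_{j,\alpha}^M\|\le c^{-1}\|\nabla G_j^M(\bm\theta_{j,\alpha}^P)\|=O(n^{-\kappa}).
$$

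\textbf{Main obstacle.} The hardest step is this uniform curvature bound, which needs two things. First, I need the mixed $\mu$--$\bm\eta$ second derivatives not to destroy positive definiteness. I would handle this by profiling out $\bm\eta$, using strict log-concavity and identifiability from (A2), or by reading (A3) as a uniform lower bound on the Hessian. Second, I need a priori boundedness of $\bm\theta_{j,\alpha}^P$ and $\bm\theta_{j,\alpha}^M$ so that Step 2 applies; this follows from $|\beta_{j1}|\lesssim \sqrt{E Y^2/E X_j^2}$-type arguments under (A0)--(A1).

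\textbf{Constants and conclusion.} One remaining issue is the constant: (B2) asks for the bound $c\,n^{-\kappa}$ for every $c>0$. I would address this either by reading $O(n^{-\kappa})$ in (P0) as $o(n^{-\kappa})$, or by noting that the implied constant can be absorbed into (P0). With that settled, on $\mathcal{E}_n$ we have $|\beta_{j1,\alpha}^P-\beta_{j1,\alpha}^M|\le c\,n^{-\kappa}$ uniformly in $j$. Since $P(\mathcal{E}_n^c)=O(\widetilde R_n)$, this is exactly (B2).
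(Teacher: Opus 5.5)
Your proposal takes essentially the same route as the paper's proof: (A3) Taylor expansions bound $\|\bm\theta_{j,\alpha}^P-\bm\theta_{j,\alpha}^M\|$ first by the oracle score at $\bm\theta_{j,\alpha}^P$ and then by $E|\widetilde Y-Y^*|$ and $E|\widetilde X_j-X_j^*|$, which are $O(n^{-\kappa})$ on the (P0) event since $\|\widehat{\bm\Sigma}_i^{-1/2}-\bm\Sigma_i^{-1/2}\|_{op}\lesssim\|\widehat{\bm\Sigma}_i-\bm\Sigma_i\|_{op}$. Yours is more careful than the paper's, which drops the $X_j^*$ factor in the slope score, silently assumes a $j$-uniform curvature constant (both arguments need $\inf_j\mathrm{Var}(X_j^*)>0$ and joint positive definiteness in $\bm\eta$), and ignores that (B2) needs every $c>0$. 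That last point is settled only by reading (P0) as $o(n^{-\kappa})$, not by ``absorbing the constant''; also, your Step~2 display omits the term $E[|\psi_\alpha(Y^*,\mu,\bm\eta)|\,|\widetilde X_j-X_j^*|]$, which Cauchy--Schwarz handles the same way.
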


One may quickly note that, together with (A1), Assumption (P0) implies that 
$$
\max_{1 \le i \le m} \|\bm{V}_i - \mathbb{I}_{n_i}\|_{op} \leq \max_{1 \le i \le m}\| \widehat{\bm{\Sigma}}_i^{-1/2}\|_{op}^{2} \|\widehat{\bm{\Sigma}}_i - \bm{\Sigma}_i\|_{op}
= O(n^{-\kappa}),
$$
with probability at least  $1 - O(\widetilde{R}_n)$. 
The rate $O(n^{-\kappa})$ matches the minimal signal level, 
ensuring that the proxy-induced approximation error does not dominate the minimal signal strength with high probability.
This, in turn, justifies the proxy-based construction of the DPD-SISP for sure screening of FEs.

The following lemma provides a convenient sufficient condition for verifying (P0) in practice;
its proof follows simply from the properties of operator norms  and is hence omitted.

\begin{lemma}\label{LEM:proxy}
Suppose that the proxy matrix $\bm{P}$ satisfies the following under a general LMM: 
$\|\bm{P} - \sigma^{-2}\bm{\Psi}\|_{\mathrm{op}} =O(\delta_n)$,
and $\max\limits_{1\leq i\leq m} \|\mathbf{Z}_i\|_{\mathrm{op}} = O(1)$,
for some sequence $\delta_n \to 0$. Then, we have
\[
\max\limits_{1\leq i\leq m} \|\widehat{\bm{\Sigma}}_i - \bm{\Sigma}_i\|_{\mathrm{op}} = O(\delta_n).
\]
\end{lemma}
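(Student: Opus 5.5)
The argument is a direct algebraic comparison of the two covariance matrices followed by submultiplicativity of the operator norm. By the definitions in \eqref{EQ:lmm-dist} and \eqref{EQ:proxy}, for every cluster $i$ we have $\widehat{\bm{\Sigma}}_i = \mathbf{Z}_i \bm{P}\mathbf{Z}_i^\top + \mathbb{I}_{n_i}$ and $\bm{\Sigma}_i = \mathbf{Z}_i(\sigma^{-2}\bm{\Psi})\mathbf{Z}_i^\top + \mathbb{I}_{n_i}$. Subtracting, the identity terms cancel and we obtain the exact representation
\[
\widehat{\bm{\Sigma}}_i - \bm{\Sigma}_i = \mathbf{Z}_i\left(\bm{P} - \sigma^{-2}\bm{\Psi}\right)\mathbf{Z}_i^\top, \qquad i=1,\ldots,m .
\]
This removes any dependence on the error structure: the discrepancy is a congruence transform of the proxy error by the random-effects design.

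Next we bound each term via $\|\bm{A}\bm{B}\bm{C}\|_{op}\le \|\bm{A}\|_{op}\|\bm{B}\|_{op}\|\bm{C}\|_{op}$ together with $\|\mathbf{Z}_i^\top\|_{op} = \|\mathbf{Z}_i\|_{op}$. This gives
\[
\|\widehat{\bm{\Sigma}}_i - \bm{\Sigma}_i\|_{op} \le \|\mathbf{Z}_i\|_{op}^2\,\|\bm{P} - \sigma^{-2}\bm{\Psi}\|_{op}.
\]
Taking the maximum over $i$ yields $\max_i \|\widehat{\bm{\Sigma}}_i - \bm{\Sigma}_i\|_{op} \le \left(\max_i\|\mathbf{Z}_i\|_{op}\right)^2 \|\bm{P}-\sigma^{-2}\bm{\Psi}\|_{op}$. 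This is possible because the proxy error factor is common to all clusters.

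Finally we insert the hypotheses. There is a constant $K$ with $\max_i\|\mathbf{Z}_i\|_{op}\le K$ for all large $n$, so the bound becomes $K^2\cdot O(\delta_n) = O(\delta_n)$. The only point needing care is the uniformity of the constant over $i$, and this is supplied exactly by the hypothesis being stated as a maximum over clusters. If $\bm{P}$ is data-dependent, so that the first condition holds only with high probability, the same deterministic inequality transfers the statement on the same event. In particular, choosing $\delta_n = n^{-\kappa}$ on an event of probability $1-O(\widetilde{R}_n)$ delivers the first part of (P0).

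I do not expect a genuine obstacle, since every step is a deterministic norm inequality.
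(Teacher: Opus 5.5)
Your proof is correct and is the argument the paper intends; the paper omits the proof, noting only that it follows from properties of operator norms. The identity $\widehat{\bm{\Sigma}}_i - \bm{\Sigma}_i = \mathbf{Z}_i(\bm{P}-\sigma^{-2}\bm{\Psi})\mathbf{Z}_i^\top$ together with submultiplicativity gives the cluster-uniform bound $(\max_i\|\mathbf{Z}_i\|_{op})^2\,\|\bm{P}-\sigma^{-2}\bm{\Psi}\|_{op} = O(\delta_n)$, which is exactly what is needed.
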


Motivated by existing work on high-dimensional LMMs, we consider two practical choices of the proxy matrix $\bm{P}$:
\begin{itemize}
	\item[(I0-P)] $\bm{P}_I =  \widehat{\sigma}^{-2}\widehat{\bm{\Psi}}$, 
	where $\widehat{\sigma}$ and $\widehat{\bm{\Psi}}$ are the REML estimates of $\sigma$ and $\bm{\Psi}$, respectively, 
	obtained from the intercept-only sub-model of (\ref{EQ:lmm}) given by 
	$$
	\bm{y}_i = \beta_0 + \mathbf{Z}_i \mathbf{b}_i + \boldsymbol{\varepsilon}_i, ~~~ i=1, \ldots, m.
	$$ 
	This choice is computationally efficient, as explored previously in \cite{rugamer2022selective}.
	In view of Lemma \ref{LEM:proxy}, $\bm{P}_I$ satisfies the required condition (P0)
	whenever  $\|\mathbf{Z}_i\|_{\mathrm{op}}$ is uniformly bounded for $i\geq1$.
	This is because the REML estimates $\widehat{\sigma}$ and $\widehat{\bm{\Psi}}$
	are $\sqrt{n}$-consistent under standard regularity assumptions on the LMM 
	and we have taken $\kappa<1/2$.

	\item[(cv-P)] $\bm{P}_{cv} = a_{opt}\mathbb{I}_{q}$, where $a_{opt}$ is selected via cross-validation 
	to minimize prediction error in the whitened model \citep{li2022inference}. 
	This data-driven proxy improves upon fixed choices such as $(\log n)\mathbb{I}_q$, initially proposed in \cite{fan2012variable}, 
	by better adapting to the underlying covariance structure, 
	particularly when $\bm{\Psi}$ is well approximated by a scalar multiple of the identity (e.g., bounded condition number).
	This is demonstrated, both theoretically and empirically,  in \cite{li2022inference}, 
	In this case, Assumption (P1) holds based on the rate of approximation for $\bm{\Psi}$
	and consistency of the estimated $a_{opt}$. 
\end{itemize}

\begin{remark}[Can we simply ignore the random-effects structure in practice?]
The proxy matrix $\bm{P} = \bm{O}_q$ corresponds to ignoring the RE structure entirely, leading to the usual LRM based SIS.
While computationally attractive, this choice is valid only when intra-cluster dependence is weak.
Formally, it satisfies (P0) only when  $\|\bm{\Sigma}_i - \mathbb{I}_{n_i}\|_{op} = O(n^{-\kappa})$, 
since $\widehat{\bm{\Sigma}}_i = \mathbb{I}_{n_i}$; this is equivalent to the condition  
$\|\bm{\Sigma}_i^{-1} - \mathbb{I}_{n_i}\|_{op} = O(n^{-\kappa})$ under (A1). 
Since the elements of  $\bm{\Sigma}_i^{-1} $ is characterized by the intra-cluster correlations, 
they have to be smaller compared to the (rate of) signal strength to ensure (P0), and hence the sure screening of the resulting DPD-SISP.
In the presence of strong correlations, however, this simplification can lead to substantial bias and failure of the screening procedure.
\end{remark}

\subsection{Sure Screening under the Gaussian LMM}
\label{SEC:SISP-normal}


We now specialize the general theory to Gaussian LMMs, where both the REs and the errors are normally distributed. 
In this case, the working density in the DPD framework is Gaussian with variance parameter $\eta_j=\sigma_j^2$.
Under this setup, Assumptions (A2) and (A3) hold automatically due to the smoothness and regularity of the normal density. 
Moreover, Gaussianity implies bounded second moments of the responses, and sub-Gaussian covariates ensure that (A0) is satisfied. 
Therefore, whenever (A1) and (P0) hold, the DPD-SISP enjoys both population- and sample-level sure screening guarantees.

In particular, for this specific case, 
we can easily obtain from (\ref{EQ:dpd-obj-norm}) that
\begin{eqnarray}
\psi_\alpha(y, \mu, \sigma) = 	 \omega_\alpha(y-\mu|\sigma) (y-\mu),
~~~~~\mbox{ with } \omega_\alpha(r|\sigma) = \frac{1+\alpha}{\sigma^{\alpha+2}(2\pi)^{\alpha/2}}
			e^{-\frac{\alpha r^2}{2\sigma^2}}, 
	\label{EQ:dpd-EstEq-norm}
\end{eqnarray}
for any $\alpha\geq 0$. 
Thus, each marginal MDPDE  solves a weighted score equation, 
where observations with large residuals receive exponentially smaller weights through $\omega_\alpha$. 
This highlights the robustness of the DPD approach relative to classical likelihood-based methods.
Next, our robust association measure $S_{j,\alpha}(\mu, \bm{\eta})$ simplifies to 
\begin{eqnarray}
	S_{j,\alpha}(\mu, \sigma^2) = E[\omega_\alpha(Y^*-\mu|\sigma) (Y^*-\mu)X_j^*], 
	~~ \mbox{ for } j=1, \ldots, p, 
	\label{EQ:signal_robust_normal}
\end{eqnarray}
which can be interpreted as a weighted covariance between $Y^*$ and $X_j^*$ when $E[X_j^*]=0$.
Under these conditions, the conclusions of Theorem \ref{THM:SISP-population} remain valid for all $\alpha \geq 0$, 
particularly implying the minimal signal condition (B3), whenever $S_{j,\alpha}(\mu, \sigma^2)$ remains bounded away 
from zero for all $j\in \mathcal{S}_0$. 
Further, as shown in Theorem \ref{THM:Proxy1}, Assumptions (A1) and (P0) ensures (B2). 
Finally, following the theory of \cite{bratsberg2025exponential}, one can show that (A1) and (P0) also implies Assumptions (B1) and (B4).
The following theorem summarizes all these results, thereby establishing theoretical guarantees for the DPD-SISP under Gaussian LMMs.

\begin{theorem}\label{THM:SISP-normal}
Consider a Gaussian LMM with non-degenerate FE covariates, and assume the following conditions for a given $\alpha\geq 0$: 
\begin{itemize}
	\item[ i)] There exists $C_X>0$ such that $\max\limits_{1\leq i\leq m}E\|\bm{x}_{ij}\|_2^2 \leq C_X$ for all $j\in\mathcal{I}$
	(e.g., sub-Gaussian covariates).
	\item[ ii)] There exists $\kappa\in(0, 1/2)$ and $c_1>0$ such that  \\
	$\left|E[\omega_\alpha\left(Y^*-\beta_{j0,\alpha}^M|\sigma_{j,\alpha}^M\right) \left(Y^*-\beta_{j0,\alpha}^M\right)X_j^*] \right| \geq c_1n^{-\kappa}$ for all $j\in \mathcal{S}_0$.
	\item[ iii)] Assumptions (A1) and (P0) hold.  
\end{itemize}	
Then, we have the following results:
\begin{itemize}
	\item[a)] $\beta_{j1,\alpha}^M=0$ if and only if $S_{j,\alpha}(\beta_{j0,\alpha}^M, \bm{\eta}_{j,\alpha}^M)=0$ for any $j\in\mathcal{I}$.
		Also, $ \min\limits_{j\in\mathcal S_0} |\beta_{j1,\alpha}^M|\geq c_2n^{-\kappa}$ for some constant $c_2>0$. 
			
	\item[ b)] Taking $\gamma_n=C n^{-\kappa}$ for some $0<C\leq c_2/2$ (with $c_2$ from Part (a)), we get 
	$$
	P\left(\mathcal{S}_0\subseteq \widehat{\mathcal{S}}_\alpha(\gamma_n)\right)\rightarrow 1,  
~~~\mbox{ and }~~	
P\left(|\widehat{\mathcal{S}}_\alpha(\gamma_n)|\leq O(n^{2\kappa}\lambda_{\max}(\bm{\Sigma}^*))\right)\rightarrow 1,
~~\mbox{ as }n\rightarrow\infty.
	$$
\end{itemize}	
\end{theorem}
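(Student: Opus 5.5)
The plan is to reduce Theorem \ref{THM:SISP-normal} to the general results already established, Theorems \ref{THM:SISP-population}, \ref{THM:SISP-ssp} and \ref{THM:Proxy1}. The work consists of checking that, in the Gaussian case, the generic assumptions (A0), (A2), (A3) and (B1)--(B4) follow from conditions i)--iii). First, verify the structural conditions for the Gaussian working family $f_{\eta}=N(0,\sigma^2)$. Strict log-concavity, smoothness and identifiability in $\sigma^2$ are immediate, so (A2) holds. For (A3), differentiate $V_\alpha$ in (\ref{EQ:dpd-obj-norm}) twice. Each second derivative is a product of powers of $\sigma^{-1}$, a polynomial in $(y-\mu)/\sigma$, and $e^{-\alpha(y-\mu)^2/(2\sigma^2)}$. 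These are bounded above uniformly once $\sigma^2$ is restricted to a compact subset of $(0,\infty)$ containing the targets $\sigma_{j,\alpha}^{M},\sigma_{j,\alpha}^P$. I would justify that restriction by noting that these targets are bounded above and below under (A1) and (P0). The "away from zero" half of (A3) only needs to hold on the relevant neighbourhood, and I would use it only there.

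For (A0), the covariate bound is condition i). The response bound follows from $E\|\bm y_i\|_2^2=E\|\mathbf X_i\bm\beta_0\|_2^2+\sigma^2\mathrm{tr}(\bm\Sigma_i)$. This is bounded using i), sparsity $s$ with bounded $\bm\beta_{S0}$, and (A1) with $n_i$ bounded.

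Part (a) then follows directly. The equivalence comes from Theorem \ref{THM:SISP-population}(a), with $S_{j,\alpha}$ taking the form (\ref{EQ:signal_robust_normal}). The lower bound comes from Theorem \ref{THM:SISP-population}(b), using condition ii) as its signal hypothesis. This gives (B3).

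For part (b), I would check (B1)--(B4) and then invoke Theorem \ref{THM:SISP-ssp}.
\begin{itemize}
\item[(B2)] This is exactly Theorem \ref{THM:Proxy1} applied under (A0), (A1), (A3) and (P0).
\item[(B1)] Here I would appeal to the exponential concentration theory for M-estimators with unknown scale of \cite{bratsberg2025exponential}. The transformed variables $\widetilde Y,\widetilde X_j$ are linear images of Gaussian or sub-Gaussian quantities through $\widehat{\bm\Sigma}_i^{-1/2}$. By the eigenvalue bounds in (P0), this matrix has uniformly bounded operator norm, so the transformed variables have uniformly controlled sub-Gaussian tails. The Gaussian-type estimating function $\psi_\alpha$ is bounded and Lipschitz for $\alpha>0$. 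Together these give a Bernstein-type bound $P(|\widehat\beta_{j1,\alpha}-\beta^P_{j1,\alpha}|>cn^{-\kappa})\le C\exp(-c'n^{1-2\kappa})$ uniformly in $j$. Setting $R_n$ equal to the right-hand side, we need $n^{-\tau}\log R_n\to-\infty$, which holds when $1-2\kappa>\tau$. I would state this as the implicit rate requirement that makes $pR_n\to0$.
\item[(B4)] From the estimating equation, $\beta^M_{j1,\alpha}$ is, up to a factor bounded by (A3), a weighted covariance of $Y^*$ and $X_j^*$. Stacking over $j$ and applying a Cauchy--Schwarz or Bessel-type argument, as in \cite{fan2008sure} and \cite{ghosh2023robust}, gives $\|\bm\beta^M_{1,\alpha}\|_2^2\lesssim \lambda_{\max}(\bm\Sigma^*)\,E[\psi_\alpha^2]$. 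The last factor is bounded because $\psi_\alpha$ is bounded.
\end{itemize}

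With (B1)--(B4) in place, Theorem \ref{THM:SISP-ssp} gives probabilities at least $1-C(s+p)(R_n+\widetilde R_n)$. These tend to one because $\log p=O(n^\tau)$ while $\log R_n,\log\widetilde R_n$ decay faster than $-n^\tau$.

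I expect the main obstacle to be (B1): uniform exponential concentration of a three-parameter M-estimator that also estimates scale, computed on the proxy-transformed data. Those data are only approximately uncorrelated within clusters, and they depend on $\bm P$, which may itself be random. My first attempt would be to condition on the event in (P0). On that event the clusters remain independent, with bounded cluster sizes, so concentration can be applied at the cluster level. The probability of the complementary event, $O(\widetilde R_n)$, is then absorbed into the final bound.
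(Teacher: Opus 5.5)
Your proposal follows the same route as the paper. The paper gives no separate appendix proof; in the paragraph preceding the theorem it reduces the result to Theorems \ref{THM:SISP-population}, \ref{THM:SISP-ssp} and \ref{THM:Proxy1}, asserting (A0), (A2), (A3) for the Gaussian family, taking (B3) from condition ii) through Theorem \ref{THM:SISP-population}, (B2) from Theorem \ref{THM:Proxy1}, and (B1), (B4) from \cite{bratsberg2025exponential}.

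Your added details go beyond what the paper writes: you localize the lower curvature bound in (A3), and you make explicit the restriction $\tau<1-2\kappa$ that (B1) requires but the statement omits. One caution: conditioning on the (P0) event does not by itself make the transformed clusters independent when $\bm P$ is estimated from the same data, and both the I0-P and cv-P proxies are.
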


The Gaussian LMM setting provides a particularly transparent interpretation of the DPD-SISP procedure. 
The robust association measure reduces to a weighted covariance, where the exponential weights attenuate the influence of outliers, 
thereby stabilizing variable ranking in contaminated or heavy-tailed settings. At the same time, 
the proxy-based transformation effectively removes dependence induced by REs, 
allowing the procedure to mimic classical SIS in an approximately independent framework. 
Consequently, DPD-SISP achieves the same optimal screening rates as standard SIS 
while offering enhanced robustness and adaptability to complex dependence structures.

\subsection{Robustness Guarantees}
\label{SEC:SISP-robustness}

A central advantage of the proposed DPD-SISP procedure lies in its inherent robustness against 
data contamination and mild model specification.  This robustness is induced by the use of  the DPD loss at the marginal screening stage, 
which yields bounded influence and stable estimation behavior even in the presence of outliers.
Importantly, this property persists regardless of how the proxy matrix is constructed.
Even when $\bm{P}$ is obtained through non-robust procedures such as REML or LS, 
e.g., the choices I0-P and cv-P in Section \ref{SEC:proxy}, the marginal MDPDEs
$\widehat{\bm{\theta}}_{j,\alpha}$, $j \in \mathcal{I}$, effectively mitigate the impact of contamination in the transformed data.
Consequently, the resulting screening rule inherits robustness at the level of variable ranking.
To formalize this, we examine both local and global robustness properties of the marginal MDPDEs
through the classical notions of influence function (IF) and asymptotic breakdown point (BP).

Let us fix any $j\in\mathcal{I}$. It follows from the general theory of M-estimators or the MDPDEs 
\citep{basu1998robust,basu2011statistical, basu2026statistical} that the IF of the $j$-th marginal MDPDE 
$\widehat{\bm{\theta}}{j,\alpha}$ at a contamination point $(y_t, x_{jt}) \in \mathbb{R}^2$ is proportional to the gradient of the DPD loss:
\begin{eqnarray}\label{EQ:IF_MDPDE}
IF_\alpha(y_t, {x}_{jt}|\bm{\theta}_j) \propto \nabla V_\alpha(y_t, \beta_{j0}+ {x}_{jt}\beta_{j1}, \sigma)_j,
\end{eqnarray}
where $\nabla$ denotes the full gradient with respect to $\bm{\theta}_j =(\beta_{j0}, \beta_{j1}, \sigma_j)^\top$.
Thus, boundedness of the derivative $\nabla V_\alpha$ with respect to the data ensures bounded influence of individual observations.
For $\alpha > 0$, this boundedness holds for a wide class of parametric models, implying that the marginal MDPDE,
and hence the induced association measure $S_{j,\alpha}$, is locally robust against infinitesimal contamination.
As a result, the screening statistic $\widehat{\beta}_{j1,\alpha}$ is insensitive to extreme observations, 
yielding a stable ranking of covariates.

In the particular case of Gaussian LMMs, we can further simplify the IF of $\widehat{\beta}_{j1, \alpha}$ to a more explicit form as 
\begin{eqnarray}\label{EQ:IF_MDPDE-normal}
	IF_\alpha(y_t, {x}_{jt}|\bm{\theta}_j) \propto (y_t- \beta_{j0}+ {x}_{jt}\beta_{j1})e^{-\frac{\alpha (y_t- \beta_{j0}+ \widetilde{x}_{jt}\beta_{j1})^2}{2\sigma_j^2}}.
\end{eqnarray}
This function is bounded for all $\alpha > 0$, while it becomes unbounded at $\alpha = 0$ (corresponding to the classical likelihood case).
Moreover, the exponential downweighting becomes stronger as $\alpha$ increases, 
leading to progressively smaller influence of extreme residuals.
This illustrates the robustness–efficiency trade-off governed by $\alpha$.

Beyond local behavior, global robustness of the marginal MDPDEs can be characterized through their asymptotic BP.
For LRMs, it is well established that when the error variance is known, the MDPDEs of regression coefficients achieve 
the maximal asymptotic BP of $1/2$ under mild conditions \citep{ghosh2013robust}.
This implies stability under up to $50\%$ contamination.
More recent results extend this property to the case of unknown variance \citep{jana2025asymptotic},
showing that the BP remains strictly positive and increases with $\alpha$,
and in several important models (e.g., the Gaussian case) attains the optimal value $1/2$.
These results apply directly to each marginal regression in our framework,
implying that the estimators $\widehat{\beta}_{j1,\alpha}$ inherit strong global robustness properties.

Since the DPD-SISP ranks variables based on the magnitudes of $\widehat{\beta}_{j1,\alpha}$,
the bounded influence and high breakdown point of these estimators ensure that the ranking is resistant to contamination.
In particular, both the magnitude and ordering of marginal signals remain stable under a substantial fraction of corrupted observations.
This provides a rigorous justification for the robustness of the proposed screening procedure,
distinguishing it from classical SIS methods that rely on non-robust estimators and are highly sensitive to outliers.

\section{Robust Conditional Screening of Fixed-effects}
\label{SEC:dpdCSIS}

The DPD-SISP procedure screen through all available FE covariates. However, in many applications, 
prior scientific knowledge suggests that certain FE covariates must be included in the model, 
such as confounders, baseline characteristics, or variables of intrinsic interest.
To accommodate such scenarios, \cite{barut2016conditional} developed a conditional SIS (CSIS) for LRMs
which evaluates the incremental contribution of each candidate variable given a pre-specified conditioning set 
through a low-dimensional joint regression model. 
Since classical CSIS relies on LS- or ML-based estimators, it inherits their sensitivity to contamination and model misspecification. 
To address this limitation, we extend the DPD-SISP framework by developing the \textit{DPD-based conditional SIS with Proxy matrix}, 
termed DPD-CSISP, for robust conditional screening under ultrahigh-dimensional LMMs.

\subsection{The  DPD-CSISP Algorithm}

Let $\mathcal C \subset \mathcal{I}$ be a given conditioning set of FE covariates  with cardinality $|\mathcal C|=p_c$, where $p_c \ll n$. 
Using the proxy-whitened data $\{(\widetilde{\bm{y}}_i,\widetilde{\mathbf{X}}_i)\}_{i=1}^m$ defined in (\ref{EQ:proxy}), 
for each $j \notin \mathcal C$, we consider the following low-dimensional feasible LRM as given by 
\begin{eqnarray}
	\widetilde{\bm{y}}_i = \beta_{j0}^M + 	\widetilde{\mathbf{X}}_{i,\mathcal C}^{\top}\bm{\beta}_{j, \mathcal C}^M + \widetilde{\mathbf{x}}_{ij} \beta_{j1}^M + \bm{\varepsilon}_{ij}, 
	\label{EQ:lm-feas-marginalC}
\end{eqnarray}   
where $\widetilde{\mathbf{X}}_{i,\mathcal{C}}$ denote the $n_i\times p_c$ submatrix of $\widetilde{\mathbf{X}}_i$ consisting of 
the columns with index in $\mathcal{C}$, for each $i=1, \ldots, m$.
This coincides with the  marginal feasible LRM (\ref{EQ:lm-feas-marginal}) used in DPD-SISP when $\mathcal{C} = \emptyset$,
the empty set, representing no pre-specified conditioning variable. 

In the presence of some conditioning variables in $\mathcal{C}$, 
we compute the marginal MDPDE $\widehat{\beta}_{j1,\alpha}^{(\mathcal{C})}$ of $\beta_{j1}^M$ from (\ref{EQ:lm-feas-marginalC}) 
under suitable assumption on its error components. 
In particular,  the associated DPD loss function in the present case is given by 
\begin{eqnarray}
\mathcal{L}_{n, \alpha}^{(C)}(\bm{\theta}) = \frac{1}{n}\sum_{i=1}^{m}\sum_{k=1}^{n_i} V_\alpha(\widetilde{y}_{ik}, 
\beta_0^M+ \widetilde{\mathbf{x}}_{i,\mathcal C,k}^{\top}\bm{\beta}_{\mathcal C}^M+ \widetilde{x}_{ij,k}\beta_1^M, \sigma_M),
	~~\bm{\theta} = (\beta_0^M, \bm{\beta}_{\mathcal C}^M \beta_1^M, \sigma_M^2)^\top,
	\label{EQ:dpd-loss-C}
\end{eqnarray}
where $\bm{x}_{i, \mathcal{C}, k}^\top$ denotes the $k$-th row of $\widetilde{\mathbf{X}}_{i,\mathcal{C}}$ for $k=1, \ldots, n_i$, 
$i=1, \ldots, m$, and $V_\alpha(y, \eta, \sigma)$ is as given in (\ref{EQ:dpd-obj-norm}) or (\ref{EQ:dpd-obj-gen})
depending on whether the errors are assumed to have Gaussian or any general distribution.
The conditional screening statistic is defined as the marginal slope estimate $\widehat{\beta}_{j1,\alpha}^{(\mathcal C)}$.
We  rank the available (non-conditioning) FE covariates by the absolute value of the associated marginal MDPDE 
$\widehat{\beta}_{j1,\alpha}^{(\mathcal{C})}$ for $j\notin\mathcal{C}$, and select the top ranked ones as before. 
The resulting DPD-CSISP screened set is then given by 
$\widehat{\mathcal S}_\alpha^{\mathrm{CSIS}} = \mathcal{C} \cup \widehat{\mathcal S}_\alpha^{(\mathcal C)}$,
where $\widehat{\mathcal S}_\alpha^{(\mathcal C)}$ is the set of variables selected based on $\widehat{\beta}_{j1,\alpha}^{(\mathcal{C})}$ 
for a given cardinality $d\in\mathbb{N}$ or a pre-specified threshold $\gamma_n$.
The full procedure of DPD-CSISP is presented in Algorithm \ref{ALG:DPD-CSISP};
note that it reduces to the DPD-SISP given in Algorithm \ref{ALG:DPD-SISP} for $\mathcal{C} = \emptyset$, the empty set.

Conceptually, DPD-CSISP removes the contribution of $\mathcal{C}$ before evaluating each candidate variable. 
This conditional formulation reduces spurious correlations induced by shared predictors 
and mitigates distortion due to proxy-based transformations.

\begin{minipage}{.97\textwidth}
	\begin{algorithm}[H]
		\caption{\small DPD-CSISP($\alpha$) for general LMMs}
		\label{ALG:DPD-CSISP}
		\small
		\begin{algorithmic}[1]
			
			\STATE \textbf{Input:} Clustered data 
			$\{(\bm y_i,\mathbf X_i,\mathbf Z_i)\}_{i=1}^m$ from the LMM \eqref{EQ:lmm}; 
			tuning parameter $\alpha>0$; \\~~~~~~~~~~
			proxy matrix $\bm P$; conditioning set $\mathcal C$;
			screening parameter $d$ (or threshold $\gamma_n$).
			
			\STATE Set total sample size $n=\sum_{i=1}^m n_i$.\\~~~~~~~~~
			\COMMENT{\textit{Proxy-based covariance approximation}}
			\FOR{$i=1$ to $m$} 
			\STATE Compute proxy covariance $\widehat{\boldsymbol\Sigma}_i =\mathbf Z_i \bm P \mathbf Z_i^\top+\mathbb{I}_{n_i}$.
			\STATE Obtain transformed data $(\widetilde{\bm y}_i,\widetilde{\mathbf X}_i)$ using \eqref{EQ:proxy}.
			\ENDFOR\\~~~~~~~~~			\COMMENT{\textit{Marginal DPD estimation}}
			\FOR{each $j\notin\mathcal{C}$} 			
			\STATE Define the DPD loss $V_\alpha(\cdot)$ as in \eqref{EQ:dpd-obj-norm} for  Gaussian error distribution, 
			or as in (\ref{EQ:dpd-obj-gen}) otherwise.
			\STATE Compute the marginal MDPDE
			$\widehat{\bm\theta}_{j,\alpha}^{(\mathcal{C})}	= \left(\widehat\beta_{j0,\alpha}^{(\mathcal{C})}, 
			\widehat{\bm{\beta}}_{j\mathcal{C},\alpha}^{(\mathcal{C})},
			\widehat\beta_{j1,\alpha}^{(\mathcal{C})}, \widehat{\bm{\eta}}_{j,\alpha}^{(\mathcal{C})}\right)^\top$ as
			\[
\widehat{\bm\theta}_{j,\alpha}^{(\mathcal{C})}	= \argmin_{\left(\beta_0, \bm{\beta}_{\mathcal C}, \beta_1, \bm{\eta}\right)^\top} \frac{1}{n}\sum_{i=1}^{m}\sum_{k=1}^{n_i} 
V_\alpha\!\left(\widetilde y_{ik},
\beta_0+ \widetilde{\mathbf{x}}_{i,\mathcal C,k}^{\top}\bm{\beta}_{\mathcal C}+ \widetilde{x}_{ij,k}\beta_1, \bm{\eta}\right).
			\]
			\STATE Extract the marginal slope estimate $\widehat\beta_{j1,\alpha}^{(\mathcal{C})}$ from 
			$\widehat{\bm\theta}_{j,\alpha}^{(\mathcal{C})}$.
			\ENDFOR\\~~~~~~~~~
			\COMMENT{\textit{Screening}}
			\STATE Rank $\left\{|\widehat\beta_{j1,\alpha}^{(\mathcal{C})}|: j\notin\mathcal{C} \right\}$ in decreasing order.
			
			\STATE Define 
$\widehat{\mathcal S}_\alpha^{(\mathcal{C})}(d)= \left\{ j\notin\mathcal{C}: |\widehat\beta_{j1,\alpha}^{(\mathcal{C})}| 
\text{ is among the top } d \right\}$, or  $\widehat{\mathcal S}_\alpha^{(\mathcal{C})}(\gamma_n)
=\left\{ j\notin\mathcal{C}: |\widehat\beta_{j1,\alpha}^{(\mathcal{C})}| \ge \gamma_n \right\}$.
			
			\STATE \textbf{Output:} Screened active set 
			$\widehat{\mathcal S}_\alpha^{\rm CSIS} = \mathcal{C} \cup \widehat{\mathcal S}_\alpha^{(\mathcal{C})}$.
			
		\end{algorithmic}
	\end{algorithm}
\end{minipage}

\subsection{Sure Screening Property}

We now establish the theoretical guarantees for DPD-CSISP, starting with the population level justification 
using the robust association measure $S_{j,\alpha}(\mu, \bm{\eta})$ defined in \eqref{EQ:signal_robust}.
However, now the results should depend on the marginal association in the presence of conditioning FE covariates.
Formally, let us define the population-level conditional (target) parameter, given $\mathcal{C}$, as 
\begin{eqnarray}
	\bm{\theta}_{j,\alpha}^{(\mathcal{C})M} =  (\beta_{j0,\alpha}^{(\mathcal{C})M}, \beta_{j\mathcal{C},\alpha}^{(\mathcal{C})M}, 
	\beta_{j1,\alpha}^{(\mathcal{C})M}, \bm{\eta}_{j,\alpha}^{(\mathcal{C})M}) 
	= \argmin_{(\beta_0, \beta_{\mathcal{C}}, \beta_1, \bm{\eta})} 
	E\left[V_\alpha(Y^*, \beta_0 + \bm{X}_{\mathcal{C}}^*\beta_{\mathcal{C}} + X_{j}^*\beta_1, \bm{\eta})\right], 
	\nonumber\\
	~~~~\mbox{ for each }~ j\notin\mathcal{C}.~~~~~~~
	\label{EQ:mdpde-trueC}
\end{eqnarray}
Then, we have the following extension of Theorem \ref{THM:SISP-population},
providing the desired connection between $\beta_{j1,\alpha}^{(\mathcal{C})M}$  and $S_{j,\alpha}$ with suitable arguments 
which, in turn, forms the basis for the proposed DPD-CSISP.
The proof is exactly similar to that of Theorem \ref{THM:SISP-population} and is thus omitted. 

\begin{theorem}\label{THM:CSISP-population}
Under a general LMM with working error density satisfying (A2) and non-degenerate FEs,  
we have the following results:
	\begin{itemize}
		\item[a)] For any $\alpha\geq 0$ and $j\notin\mathcal{C}$,  
		$\beta_{j1,\alpha}^{(\mathcal{C})M}=0$ if and only if 
		$S_{j,\alpha}(\beta_{j0,\alpha}^M+ \bm{X}_\mathcal{C}^*\beta_{j\mathcal{C},\alpha}^{(\mathcal{C})M}, \bm{\eta}_{j,\alpha}^M)=0$.
		
		\item[b)] If additionally (A0)-(A1) and (A3) hold for a given $\alpha\geq 0$, 
		and\\ $\left|S_{j,\alpha}\left(\beta_{j0,\alpha}^M+\bm{X}_\mathcal{C}^*\beta_{j\mathcal{C},\alpha}^{(\mathcal{C})M}, \bm{\eta}_{j,\alpha}^M\right)\right| \geq c_1n^{-\kappa}$
		for all $j\in \mathcal{S}_0\cap \mathcal{C}^c$, with some constants $c_1>0$ and $\kappa\in(0,1/2)$,
		then there exists a constant $c_2>0$ such that  
		$ \min\limits_{j\in\mathcal S_0\cap \mathcal{C}^c} |\beta_{j1,\alpha}^{(\mathcal{C})M}|\geq c_2n^{-\kappa}.$
	\end{itemize}
\end{theorem}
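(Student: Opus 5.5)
The argument mirrors the one we would give for Theorem~\ref{THM:SISP-population}: we compare the full conditional minimizer with a restricted one. Fix $j\notin\mathcal C$ and write $\mu_{\mathcal C}(\bm X^*_{\mathcal C}) = \beta_{j0,\alpha}^{(\mathcal C)M}+\bm X^*_{\mathcal C}\bm\beta_{j\mathcal C,\alpha}^{(\mathcal C)M}$. (The intercept and $\bm\eta$ appearing in $S_{j,\alpha}$ in the statement should be read as the conditional ones.) The objective $G(\beta_0,\bm\beta_{\mathcal C},\beta_1,\bm\eta)=E[V_\alpha(Y^*,\beta_0+\bm X^*_{\mathcal C}\bm\beta_{\mathcal C}+X_j^*\beta_1,\bm\eta)]$ is used throughout. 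By (A2), $-\log f_{\bm\eta}$ is strictly convex in the location argument. For $\alpha>0$ we use the structure of $V_\alpha$ together with the first-order conditions. Non-degeneracy of the covariates, so that $(1,\bm X^*_{\mathcal C},X_j^*)$ has a nonsingular second-moment matrix, then yields a unique minimizer. This minimizer is characterized by the estimating equations $E[\psi_\alpha(Y^*,\mu,\bm\eta)]=0$, $E[\psi_\alpha(Y^*,\mu,\bm\eta)\bm X^*_{\mathcal C}]=\bm 0$, $E[\psi_\alpha(Y^*,\mu,\bm\eta)X_j^*]=0$, together with the $\bm\eta$-score equation. Here $\mu=\mu_{\mathcal C}+X_j^*\beta_1$.

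\textbf{Part (a).} For the "only if" direction, if $\beta_{j1,\alpha}^{(\mathcal C)M}=0$ then $\mu=\mu_{\mathcal C}$, and the last estimating equation is exactly $S_{j,\alpha}(\mu_{\mathcal C},\bm\eta^{(\mathcal C)M}_{j,\alpha})=0$. For the "if" direction, let $(\bar\beta_0,\bar{\bm\beta}_{\mathcal C},\bar{\bm\eta})$ minimize $G$ with $\beta_1=0$, that is, solve the conditional problem without $X_j^*$. Suppose $S_{j,\alpha}$, evaluated at this restricted fit, vanishes. Then the restricted point satisfies all the full first-order equations. By uniqueness of the stationary point, which comes from strict convexity in the mean parameters under (A2), it is the full minimizer, and so $\beta_{j1,\alpha}^{(\mathcal C)M}=0$. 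This is where the natural reading of the statement comes in: $S_{j,\alpha}$ is evaluated at the $\mathcal C$-only fit, which coincides with the full fit whenever either side of the equivalence holds.

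\textbf{Part (b).} We Taylor-expand the $X_j^*$-estimating equation in $\beta_1$ around $0$. Let $\mu_0=\bar\beta_0+\bm X^*_{\mathcal C}\bar{\bm\beta}_{\mathcal C}$ be the restricted fit and $\mu_1$ the full fit. Then
\[
0=E[\psi_\alpha(Y^*,\mu_1,\bm\eta_1)X_j^*] = S_{j,\alpha}(\mu_0,\bm\eta_0) + E\big[\partial_\mu\psi_\alpha(\cdot)\,X_j^*(\mu_1-\mu_0)\big] + E\big[\partial_{\bm\eta}\psi_\alpha(\cdot)^\top(\bm\eta_1-\bm\eta_0)X_j^*\big],
\]
with the derivatives taken at intermediate points. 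By (A3), all second derivatives of $V_\alpha$ are bounded by some $K<\infty$. Applying Cauchy--Schwarz with (A0)--(A1), via \eqref{EQ:Ass1} for $E[X_j^{*2}]$ and the analogous bound for $\bm X^*_{\mathcal C}$, gives $|S_{j,\alpha}|\le K'\big(\|(\Delta\beta_0,\Delta\bm\beta_{\mathcal C},\beta_1)\|+\|\Delta\bm\eta\|\big)$.

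It remains to bound the nuisance shifts $\Delta$ by $C|\beta_1|$. The full and restricted fits both solve the nuisance estimating equations, one at $\beta_1$ and one at $0$. By the implicit function theorem, the nuisance Hessian block is uniformly invertible: by (A3) its entries are bounded away from zero, and by non-degeneracy its smallest eigenvalue is bounded below. Hence $\|\Delta\|\le C|\beta_1|\,\sup\|\text{cross-Hessian}\|$, which is $O(|\beta_1|)$ by (A3) together with the moment bounds. Combining the two estimates gives $c_1n^{-\kappa}\le |S_{j,\alpha}|\le K''|\beta^{(\mathcal C)M}_{j1,\alpha}|$, so $c_2=c_1/K''$ works uniformly over $j\in\mathcal S_0\cap\mathcal C^c$.

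The main obstacle is this uniform invertibility of the nuisance Hessian block. With $p_c\ll n$ conditioning covariates, we need $\lambda_{\min}$ of $E[\partial^2_\mu V_\alpha\,(1,\bm X^*_{\mathcal C})^{\otimes2}]$ bounded below uniformly in $j$. I would obtain this from the lower bound on $|\partial^2_\mu V_\alpha|$ in (A3), which keeps its sign positive near the minimizer, together with an eigenvalue lower bound on $E[(1,\bm X^*_{\mathcal C})^{\otimes 2}]$. That eigenvalue bound is how I would make "non-degenerate FEs" precise, using (A1) to pass between the whitened and original covariates.
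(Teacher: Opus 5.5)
\textbf{There is a genuine gap: you prove (b) at a different point from the one in the statement, and the extra step this forces is not established.}

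Your parenthetical reading, with the conditional intercept and $\bm\eta$, is the right one, but you then abandon it. The point in the statement is $\beta_{j0,\alpha}^{(\mathcal C)M}+\bm X^*_{\mathcal C}\beta_{j\mathcal C,\alpha}^{(\mathcal C)M}$ together with $\bm\eta_{j,\alpha}^{(\mathcal C)M}$. This is the \emph{joint} minimizer of (\ref{EQ:mdpde-trueC}) with only the term $X_j^*\beta_{j1}$ deleted. The coefficient $\beta_{j\mathcal C,\alpha}^{(\mathcal C)M}$ is fitted together with $X_j^*$ (hence its dependence on $j$); it is not the coefficient of the $\mathcal C$-only re-fit.

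In the ``if'' half of (a) and throughout (b), you instead evaluate $S_{j,\alpha}$ at the re-fit $(\bar\beta_0,\bar{\bm\beta}_{\mathcal C},\bar{\bm\eta})$. For an active $j$ these are different points. So your (b) uses a different lower-bound hypothesis from the theorem's. Your remark that the two fits coincide ``whenever either side of the equivalence holds'' is justified only on the side $\beta_{j1,\alpha}^{(\mathcal C)M}=0$.

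The switch also creates the step you flag as the main obstacle, and you do not close it:
\begin{itemize}
\item No hypothesis gives a uniform lower eigenvalue bound on the nuisance Hessian in $(\beta_0,\bm\beta_{\mathcal C},\bm\eta)$. (A3) only bounds each second partial of $V_\alpha$ away from $0$ and $\infty$ in absolute value. That can control the $\mu\mu$ block, but not the Schur complement involving the $\bm\eta\bm\eta$ block and the $\mu\bm\eta$ cross terms; a matrix whose entries are all bounded away from zero can be singular.
\item The implicit function theorem is local. You would need this invertibility uniformly along the whole path from $0$ to $\beta_{j1,\alpha}^{(\mathcal C)M}$, and uniformly in $j$ and $n$.
\item Your uniqueness-of-stationary-points step in (a) invokes convexity that (A2) does not give for $\alpha>0$. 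Strict log-concavity makes $-\log f_{\bm\eta}$ convex, not $-f_{\bm\eta}^\alpha$.
\end{itemize}

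\textbf{The missing idea is that, at the statement's point, no nuisance parameter moves.} Take $\mu_{\mathcal C}$ exactly as you first defined it, and $\bm\eta=\bm\eta_{j,\alpha}^{(\mathcal C)M}$. The $X_j^*$-estimating equation of the full fit reads $E[\psi_\alpha(Y^*,\mu_{\mathcal C}+X_j^*\beta_{j1,\alpha}^{(\mathcal C)M},\bm\eta)X_j^*]=0$. Meanwhile $S_{j,\alpha}(\mu_{\mathcal C},\bm\eta)=E[\psi_\alpha(Y^*,\mu_{\mathcal C},\bm\eta)X_j^*]$. The two integrands differ only by a location shift of $X_j^*\beta_{j1,\alpha}^{(\mathcal C)M}$.

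The mean value theorem with $|\partial_\mu^2V_\alpha|\le C_D$, together with (\ref{EQ:Ass1}), therefore gives
\[
|S_{j,\alpha}(\mu_{\mathcal C},\bm\eta)|\le C_D|\beta_{j1,\alpha}^{(\mathcal C)M}|\,E[X_j^{*2}]\le (C_DC_X/C_m)|\beta_{j1,\alpha}^{(\mathcal C)M}|,
\]
hence $c_2=C_mc_1/(C_XC_D)$. This is the paper's proof of Theorem~\ref{THM:SISP-population}(b), which the paper says transfers verbatim. It uses only the upper bound in (A3) and inverts no Hessian.

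The same fixed-nuisance device gives the ``if'' direction of (a) under the correct reading whenever $\partial_\mu^2V_\alpha>0$ (e.g.\ under (A3)). Then $t\mapsto E[\psi_\alpha(Y^*,\mu_{\mathcal C}+tX_j^*,\bm\eta)X_j^*]$ is strictly decreasing for non-degenerate $X_j^*$. So it cannot vanish at both $t=0$ and $t=\beta_{j1,\alpha}^{(\mathcal C)M}\neq 0$. For (a), the paper itself only appeals to identifiability in (A2) and Fisher consistency of the DPD functional.
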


Next, we establish the sample-level sure screening property of the DPD-CSISP by extending the arguments presented for DPD-SISP. 
In particular, the following theorem presents the extension of Theorem \ref{THM:SISP-ssp} under general LMMs,
where we have now defined  
$\bm{\beta}_{1,\alpha}^{\mathcal{C}M} = \left(  \beta_{j1,\alpha}^{(\mathcal{C})M} : j \notin \mathcal{C}\right)^\top$ 
based on \eqref{EQ:mdpde-trueC} and 
\begin{eqnarray}
	\bm{\theta}_{j,\alpha}^{(\mathcal{C})P} 
	=  \left(\beta_{j0,\alpha}^{(\mathcal{C})P}, \bm{\beta}_{j\mathcal{C},\alpha}^{(\mathcal{C})P^\top}, 
	\beta_{j1,\alpha}^{(\mathcal{C})P}, \bm{\eta}_{j,\alpha}^{(\mathcal{C})P}\right)^\top 
	= \argmin_{(\beta_0, \bm{\beta}_{\mathcal{C}}^\top, \beta_1, \bm{\eta})^\top} 
	E\left[V_\alpha(Y^*, \beta_0 + \bm{X}_{\mathcal{C}}^*\bm{\beta}_{\mathcal{C}} + X_{j}^*\beta_1, \bm{\eta})\right], 
	\nonumber\\
	~~~~\mbox{ for each }~ j\notin\mathcal{C}, ~~\alpha\geq0.~~~~~~~
	\label{EQ:mdpde-proxyC}
\end{eqnarray}

\begin{theorem}\label{THM:CSISP-ssp}
Consider a general LMM which satisfies Assumptions (B0)--(B3) for the conditional parameters  
$\beta_{j1,\alpha}^{M}$ and $\beta_{j1,\alpha}^{P}$, replacing $\beta_{j1,\alpha}^{(\mathcal{C})M}$ and 
$\beta_{j1,\alpha}^{(\mathcal{C})P}$, respectively, for all $j\notin\mathcal{C}$ (instead of all $j\in\mathcal{I}$),
and some $\kappa\in(0,1/2)$. Let $\widehat{\mathcal S}_\alpha^{\rm CSIS}$ denote 
the screened active set obtained by DPD-CSISP (Algorithm \ref{ALG:DPD-CSISP}) with $\gamma_n=C n^{-\kappa}$ for some $0<C\leq c_2/2$.
Then, we have the following results.
	\begin{itemize}
		\item[a)]  $P\left(\mathcal{S}_0\subseteq \widehat{\mathcal S}_\alpha^{\rm CSIS}\right)\geq 1 - s  (R_n+\widetilde{R}_n)C_1,$
		for some constant $C_1>0$ and all sufficiently large $n$. 
		
		\item[b)]  If additionally 	$||\bm{\beta}_{1,\alpha}^{\mathcal{C}M}||_2^2 
		= O\left( \lambda_{\max}(\bm{\Sigma}_\mathcal{C}^*)\right)$ for  some suitable matrix $\bm{\Sigma}_\mathcal{C}^*$,  
		then we have
		$$
		P\left(|\widehat{\mathcal S}_\alpha^{\rm CSIS}|\leq O(n^{2\kappa}\lambda_{\max}(\bm{\Sigma}^*))\right)\geq 1 - p (R_n+\widetilde{R}_n)C_2,
		$$
		for some constant  $C_2>0$ and all sufficiently large $n$. 
	\end{itemize}
\end{theorem}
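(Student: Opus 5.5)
The argument repeats the proof of Theorem \ref{THM:SISP-ssp} with every unconditional quantity replaced by its conditional version. The index set $\mathcal{I}$ becomes $\mathcal{C}^c$, the sample slope becomes $\widehat\beta_{j1,\alpha}^{(\mathcal{C})}$, and the two population targets become $\beta_{j1,\alpha}^{(\mathcal{C})P}$ and $\beta_{j1,\alpha}^{(\mathcal{C})M}$.

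The first step is a triangle-inequality decomposition. For each $j\notin\mathcal{C}$,
\[
\bigl|\widehat\beta_{j1,\alpha}^{(\mathcal{C})}-\beta_{j1,\alpha}^{(\mathcal{C})M}\bigr|\le \bigl|\widehat\beta_{j1,\alpha}^{(\mathcal{C})}-\beta_{j1,\alpha}^{(\mathcal{C})P}\bigr|+\bigl|\beta_{j1,\alpha}^{(\mathcal{C})P}-\beta_{j1,\alpha}^{(\mathcal{C})M}\bigr|.
\]
Fix a constant $c\in(0,c_2/4]$. Apply the conditional versions of (B1) and (B2) with constant $c$, uniformly in $j\notin\mathcal{C}$. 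This gives
\[
P\bigl(|\widehat\beta_{j1,\alpha}^{(\mathcal{C})}-\beta_{j1,\alpha}^{(\mathcal{C})M}|>2cn^{-\kappa}\bigr)\le C'(R_n+\widetilde R_n)
\]
for all large $n$, with $C'$ not depending on $j$.

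\emph{Part (a).} Active covariates lying in $\mathcal{C}$ are kept automatically, because $\widehat{\mathcal S}_\alpha^{\rm CSIS}=\mathcal{C}\cup\widehat{\mathcal S}_\alpha^{(\mathcal C)}$. So the event $\mathcal{S}_0\not\subseteq\widehat{\mathcal S}_\alpha^{\rm CSIS}$ requires some $j\in\mathcal{S}_0\cap\mathcal{C}^c$ with $|\widehat\beta_{j1,\alpha}^{(\mathcal{C})}|<\gamma_n$. For such $j$, the conditional (B3) gives $|\beta_{j1,\alpha}^{(\mathcal{C})M}|\ge c_2n^{-\kappa}$. Since $\gamma_n\le (c_2/2)n^{-\kappa}$, this forces
\[
|\widehat\beta_{j1,\alpha}^{(\mathcal{C})}-\beta_{j1,\alpha}^{(\mathcal{C})M}|>(c_2/2)n^{-\kappa}\ge 2cn^{-\kappa}.
\]
A union bound over at most $s$ indices then yields the stated bound $1-s(R_n+\widetilde R_n)C_1$.

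\emph{Part (b).} Let $\mathcal{E}$ be the event on which $\max_{j\notin\mathcal{C}}|\widehat\beta_{j1,\alpha}^{(\mathcal{C})}-\beta_{j1,\alpha}^{(\mathcal{C})M}|\le 2cn^{-\kappa}$. By a union bound over at most $p$ indices, $P(\mathcal{E}^c)\le p(R_n+\widetilde R_n)C_2$. Now choose $c$ small enough that $2c\le C/2$. On $\mathcal{E}$, any $j\in\widehat{\mathcal S}_\alpha^{(\mathcal C)}$ satisfies $|\beta_{j1,\alpha}^{(\mathcal{C})M}|\ge (C/2)n^{-\kappa}$. Therefore
\[
|\widehat{\mathcal S}_\alpha^{(\mathcal C)}|\le \frac{4n^{2\kappa}}{C^2}\,\|\bm{\beta}_{1,\alpha}^{\mathcal{C}M}\|_2^2=O\bigl(n^{2\kappa}\lambda_{\max}(\bm{\Sigma}^*_{\mathcal C})\bigr),
\]
where the last step uses the stated global-signal assumption. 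Adding back $|\mathcal{C}|=p_c\ll n$ gives the bound on $|\widehat{\mathcal S}_\alpha^{\rm CSIS}|$.

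\emph{Main obstacle.} The delicate point is reconciling this bound with the theorem's statement, which uses $\lambda_{\max}(\bm{\Sigma}^*)$ rather than $\lambda_{\max}(\bm{\Sigma}^*_{\mathcal C})$. I would take $\bm{\Sigma}^*_{\mathcal C}$ to be the relevant submatrix of $\bm{\Sigma}^*$, restricted to the $(\mathcal{C}\cup\{j\})$ coordinates and projected off $\mathcal{C}$. Eigenvalue interlacing, or simply $\lambda_{\max}$ of a compression, then gives $\lambda_{\max}(\bm{\Sigma}^*_{\mathcal C})\le\lambda_{\max}(\bm{\Sigma}^*)$. The additive $p_c$ term is absorbed into the $O(\cdot)$ because $p_c$ is fixed or small relative to $n^{2\kappa}$; if needed, I would state that $p_c=O(n^{2\kappa})$ as part of the interpretation.

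A second point is that the constants in (B1) and (B2) must hold for arbitrary $c>0$. This is what allows $c$ to be chosen after $C$ and $c_2$ are fixed.
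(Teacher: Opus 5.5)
Your proof is correct and is essentially the paper's argument. The paper gives no separate proof here; it says the result follows along the lines of Theorem \ref{THM:SISP-ssp}. That means splitting $\widehat\beta^{(\mathcal C)}_{j1,\alpha}-\beta^{(\mathcal C)M}_{j1,\alpha}$ through the proxy target, taking union bounds over $\mathcal S_0\cap\mathcal C^c$ and over $\mathcal C^c$, and counting the indices with $|\beta^{(\mathcal C)M}_{j1,\alpha}|\geq (C/2)n^{-\kappa}$ through $\|\bm{\beta}_{1,\alpha}^{\mathcal{C}M}\|_2^2$, exactly as you do.

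Your extra remarks correctly fill in details the paper leaves implicit. These are the automatic retention of $\mathcal S_0\cap\mathcal C$, the additional $p_c$ term in the size bound, and the reconciliation of $\lambda_{\max}(\bm{\Sigma}^*_{\mathcal C})$ with $\lambda_{\max}(\bm{\Sigma}^*)$. For the last point, the cleanest choice of $\bm{\Sigma}^*_{\mathcal C}$ is the Schur complement of the $\mathcal C$-block of $\bm{\Sigma}^*$, rather than a per-$j$ restriction; it is dominated by a principal submatrix of $\bm{\Sigma}^*$, which gives the eigenvalue bound.
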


The proof of the above theorem follows along the lines of that of Theorem \ref{THM:SISP-ssp}, and hence it is not added here for brevity. 
It establishes that DPD-CSISP achieves the same sure screening and dimensionality control guarantees as DPD-SISP, 
up to conditioning on $\mathcal{C}$. Thus, all theoretical insights developed for the unconditional procedure 
extend naturally to the conditional setting.
Importantly, conditioning can substantially improve screening accuracy in the presence of correlated predictors. 
By removing variation explained by $\mathcal{C}$, DPD-CSISP reduces spurious marginal associations 
and enhances detection of variables with genuine incremental effects.
From a robustness perspective, the advantages of DPD-SISP are fully retained. 
Since each conditional marginal fit is based on the DPD loss, the procedure remains resistant to outliers,
heavy-tailed errors, and proxy misspecification. In fact, conditioning often stabilizes estimation further 
by reducing variance inflation caused by omitted relevant covariates.

Finally, the matrix $\bm{\Sigma}_{\mathcal C}^*$ can typically be expressed in a form analogous to that in \cite{ghosh2023robust}, 
involving conditional covariance operators after projection onto $\mathcal{C}$. 
Under standard regularity conditions (e.g., bounded eigenvalues and weak dependence), 
one obtains $\lambda_{\max}(\bm{\Sigma}_{\mathcal C}^*) = O(1)$, yielding
$$
\left|\widehat{\mathcal S}_\alpha^{\rm CSIS}\right| = O(n^{2\kappa})=o(n),
$$
which ensures feasibility for downstream high-dimensional inference procedures.
Therefore, the DPD-CSISP provides a principled, robust, and theoretically grounded extension of conditional screening 
to dependent and contaminated data settings, maintaining the optimal screening rates while improving stability and interpretability.

\section{The DPD-ISISP: Iterative DPD-SISP for Correlated FE Covariates}
\label{SEC:dpdISIS}

Although DPD-SISP enjoys the sure screening property under suitable regularity conditions, 
it remains a purely marginal procedure and therefore inherits the fundamental limitations of independence screening.
In the presence of correlated FE covariates, marginal utilities (here the MDPDEs of marginal slope parameters) 
may not faithfully reflect joint relevance.
In particular, an active FE can be masked when its marginal signal is weakened due to dependence with other variables, 
while an inactive variable may appear important through its correlation with truly active predictors. 
This phenomenon is also observed in our numerical studies (Section \ref{SEC:simulations}).
DPD-CSISP partially addresses this issue by evaluating marginal utilities conditional on a pre-specified subset of covariates. 
However, its effectiveness depends critically on the choice of the conditioning set. 
Because the active set is unknown, separating covariates into conditioning and non-conditioning groups 
according to their true importance is infeasible in practice. 
Moreover, DPD-CSISP is inherently static and does not adapt to information revealed during the screening process.
To overcome these limitations, we develop an adaptive and iterative extension of DPD-SISP that progressively refines the conditioning set. 
The resulting procedure, termed DPD-ISISP (Iterative DPD-based SIS with Proxy matrix), 
is inspired by the iterative SIS (ISIS) of \cite{fan2008sure}, but tailored to the present setting involving proxy-based whitening 
and robust MDPDE estimation under the feasible LRM \eqref{EQ:lm-feasible}.

We start with an initial conditioning set $\mathcal{C}$, 
which can be empty or determined based on prior scientific knowledge or domain considerations.
Using DPD-CSISP (or DPD-SISP if $\mathcal{C} = \emptyset$), we obtain an initial screening set $\widehat{\mathcal S}_\alpha^{(1)}$ 
with a target model size $d_1$ (or based on a threshold $\gamma_n$). 
Then, at iteration $t \ge 2$, we proceed as follows: 
we first fit the feasible LRM (\ref{EQ:lm-feasible}) restricted to the previously selected variables $\widehat{\mathcal S}_\alpha^{(t-1)}$ 
by computing the MDPDE of the regression coefficients  as
\begin{eqnarray}
\widehat{\bm{\beta}}_{\alpha, \widehat{\mathcal S}_\alpha^{(t-1)}}
= \arg\limits_{\bm{\beta}}\min_{(\bm{\beta}, \sigma^2)} \frac{1}{n}\sum_{i=1}^m \sum_{k=1}^{n_i} V_\alpha\!\left(
\widetilde{y}_{ik},\widetilde{\mathbf{x}}_{i,k, \widehat{\mathcal S}_\alpha^{(t-1)}}^\top \bm{\beta}, \sigma^2\right).
\label{EQ:ISIS-mdpde0}
\end{eqnarray}
We then compute the working residuals as 
\begin{eqnarray}
\widetilde{\bm{r}}_{i}^{(t)}= \widetilde{y}_{ik} -
\widetilde{\mathbf{x}}_{i,\widehat{\mathcal S}_\alpha^{(t-1)}}^\top \widehat{\bm{\beta}}_{\widehat{\mathcal S}_\alpha^{(t-1)},\alpha}.
\label{EQ:ISIS-residual}
\end{eqnarray}
%
Next, for each $j \notin \widehat{\mathcal S}_\alpha^{(t-1)}$, 
we fit the marginal feasible regression model
\[
\widetilde{\bm{r}}_{i}^{(t)}
=
\beta_{j0}^{(t)} + \widetilde{\mathbf{x}}_{ij} \beta_{j1}^{(t)} + \bm{\varepsilon}_{ij}^{(t)},
\]
under the same spherical working error assumption as in (\ref{EQ:lm-feas-marginal}), 
and compute the corresponding marginal MDPDE $\widehat{\beta}_{j1,\alpha}^{(t)}$.
A new batch of variables $\widehat{\mathcal A}_\alpha^{(t)}$ is selected based on the highest values of 
$|\widehat{\beta}_{j1,\alpha}^{(t)}|$, given a fixed target model size $d_t$ or a threshold $\gamma_n$.
The screened set is then updated as

\[
\widehat{\mathcal S}_\alpha^{(t)}
=
\widehat{\mathcal S}_\alpha^{(t-1)}
\cup
\widehat{\mathcal A}_\alpha^{(t)}.
\]
The procedure is iterated until one of the following stopping criteria is met:
(i) $|\widehat{\mathcal S}\alpha^{(t)}| \ge d_{\max}$,
(ii) no new variables are selected at the $t$-th iteration, i.e., $\widehat{\mathcal A}_\alpha^{(t)}=\emptyset$, or
(iii) a maximum number of iterations $T$ is reached.
The complete procedure is presented in Algorithm \ref{ALG:DPD-ISISP}.

\begin{minipage}{.95\textwidth}
	\begin{algorithm}[H]
		\caption{\small DPD-ISISP($\alpha$) for general LMMs}
		\label{ALG:DPD-ISISP}
		\small
		\begin{algorithmic}[1]
			
			\STATE \textbf{Input:} Clustered data $\{(\bm y_i,\mathbf X_i,\mathbf Z_i)\}_{i=1}^m$ from the LMM \eqref{EQ:lmm}; 
			tuning parameter $\alpha>0$; \\~~~~~~~~~~
			proxy matrix $\bm P$; conditioning set $\mathcal C$ (may be empty); maximum model size $d_{\max}$; \\~~~~~~~~~~
			batch sizes $\{d_t\}$ (or threshold $\gamma_n$); maximum iterations $T$.

			\STATE Set total sample size $n=\sum_{i=1}^m n_i$.\\~~~~~~~~~
			\COMMENT{\textit{Proxy-based covariance approximation}}
			
			\FOR{$i=1$ to $m$} 
			\STATE Compute proxy covariance $\widehat{\boldsymbol\Sigma}_i =\mathbf Z_i \bm P \mathbf Z_i^\top+\mathbb{I}_{n_i}$.
			\STATE Obtain transformed data $(\widetilde{\bm y}_i,\widetilde{\mathbf X}_i)$ using \eqref{EQ:proxy}.
			\ENDFOR\\~~~~~~~~~			\COMMENT{Initialization}
			
			\STATE Obtain initial screened set $\widehat{\mathcal S}_\alpha^{(1)}$ via DPD-CSISP (Algorith \ref{ALG:DPD-CSISP}) 
			with conditioning set $\mathcal{C}$, and target model size $d_1$ (or threshold $\gamma_n$)
			\\~~~~~~~~~			\COMMENT{Iterative screening}
						
			\STATE Set $t=2$. 
			\WHILE{$t \le T$ and $|\widehat{\mathcal S}_\alpha^{(t-1)}| < d_{\max}$}
			\STATE Compute the MDPDE $\widehat{\bm\beta}_{\widehat{\mathcal S}_\alpha^{(t-1)},\alpha}$ 
			using the feasible LRM restricted to  $\widehat{\mathcal S}_\alpha^{(t-1)}$ via (\ref{EQ:ISIS-mdpde0}).
			
			\FOR{$i=1$ to $m$} 
			\STATE Compute working residuals 
			$\widetilde{\bm{r}}_{i}^{(t)}= \left(\widetilde{r}_{i1}^{(t)}, \ldots, \widetilde{r}_{in_i}^{(t)}\right)^\top$ via (\ref{EQ:ISIS-residual}) for each $i$.
			\ENDFOR~\\~~~~~~~~~			\COMMENT{Conditional marginal screening}
			
			\FOR{each $j\notin\widehat{\mathcal S}_\alpha^{(t-1)}$} 			
			\STATE Define the DPD loss $V_\alpha(\cdot)$ as in \eqref{EQ:dpd-obj-norm} if the errors are Gaussian, 
			or as in (\ref{EQ:dpd-obj-gen}) otherwise.
			\STATE Compute the marginal MDPDE
			\[
			\widehat{\bm\theta}_{j,\alpha}^{(t)}	= \left(\widehat\beta_{j0,\alpha}^{(t)}, 
			\widehat\beta_{j1,\alpha}^{(t)}, \widehat{\bm{\eta}}_{j,\alpha}^{(t)}\right)^\top 
			= \argmin_{(\beta_0^M, \beta_1, \bm{\eta})^\top} \frac{1}{n}\sum_{i=1}^{m}\sum_{k=1}^{n_i} 
			V_\alpha\!\left(\widetilde{r}_{ik}^{(t)}, \beta_0 + \widetilde{x}_{ij,k}\beta_1, \bm{\eta}\right).
			\]
			\STATE Extract the marginal slope estimate $\widehat\beta_{j1,\alpha}^{(t)}$ from  $\widehat{\bm\theta}_{j,\alpha}^{(t)}$.
			\ENDFOR\\~~~~~~~~~

			\STATE Rank $\left\{|\widehat\beta_{j1,\alpha}^{(t)}|: j\notin\widehat{\mathcal S}_\alpha^{(t-1)} \right\}$ in decreasing order.
			
			\STATE Define 
			$\widehat{\mathcal A}_\alpha^{(t)}= \left\{ j\notin\widehat{\mathcal S}_\alpha^{(t-1)}: |\widehat\beta_{j1,\alpha}^{(t)}| 
			\text{ is among the top } d_t \right\}$ or  $\left\{ j\notin\widehat{\mathcal S}_\alpha^{(t-1)}: |\widehat\beta_{j1,\alpha}^{(t)}| \ge \gamma_n \right\}$.
			\\~~~~~~~~~			
			\COMMENT{Screening update}
			
			\STATE Update the screened set as   $\widehat{\mathcal S}_\alpha^{(t)} = \widehat{\mathcal S}_\alpha^{(t-1)}
			\cup \widehat{\mathcal A}_\alpha^{(t)}$.
			
			\STATE IF $\widehat{\mathcal A}_\alpha^{(t)}=\emptyset$, BREAK.

			\STATE Set $t \leftarrow t+1$.
			\ENDWHILE
			
			\STATE \textbf{Output:} Screened active set 
			$\widehat{\mathcal S}_\alpha^{\rm ISIS} = \widehat{\mathcal S}_\alpha^{(t)}$, the final screened set.
			
		\end{algorithmic}
	\end{algorithm}
\end{minipage}

\bigskip
By iteratively adjusting for previously selected variables, 
DPD-ISISP effectively approximates partial regression screening in a sequential manner. 
This substantially mitigates masking effects arising from correlated covariates 
and enhances the recovery of variables that are jointly important but marginally weak.
At the same time, robustness is preserved throughout the procedure: 
each stage relies on MDPDE-based estimation, ensuring bounded influence and stability under contamination or model misspecification. 
The proxy-based whitening continues to account for within-cluster dependence, 
while the iterative refinement reduces sensitivity to proxy-induced distortions in marginal rankings.
All these makes DPD-ISISP a natural and practically powerful extension of DPD-SISP 
in practical settings with strong correlation among FE covariates.

\section{Monte-Carlo  Illustrations: Comparative Performance of DPD-SISP}
\label{SEC:simulations}

\subsection{Simulation Setups}

We conduct a comprehensive simulation study to evaluate the finite-sample performance of the robust DPD-SISP procedures.
We generate 100 datasets from the LMM (\ref{EQ:lmm}), with $m=10$ clusters of sizes 10, 15, 20, 25, 30, 10, 15, 20, 25, 30, respectively, 
yielding a total sample size $n=200$; the errors and REs are generated from Gaussian distributions.
The model includes $p=1000$ FE covariates -- an intercept and $999$ additional FE covariates 
simulated from a multivariate normal distribution  with mean zero and covariance matrix $\bm{\Sigma}_X$. 
We consider two covariance structures for $\bm{\Sigma}_X$, 
namely the compound symmetry (CS) and AR(1) structure (denoted by T, for Toeplitz) with the same correlation parameter $\rho_x$;
all variances are set to 1, and both structures reduce to the identity matrix when $\rho_x=0$. 
For the RE design, we again consider two scenarios: 
\begin{itemize}
	\item[(R1)] The first $q=4$ FE covariates are used as RE covariates, which include the intercept at the first position.
\item[(R2)] Besides the RE intercept, $q-1=3$ RE covariates are generated from a standard multivariate normal distribution, 
independent of the FE covariates.
\end{itemize}
The covariate values are held fixed across simulation replications, 
while the REs and errors are independently generated in each dataset. 
The RE covariance matrix is specified as $\bm{\Psi} = (1-\rho_b)\mathbb{I}_q + \rho_b \mathbb{J}_q$ with $\rho_b=0.3$, 
and error variance $\sigma^2=1$.  Here, $\mathbb{J}_q$ denotes the $q \times q$ matrix whose entries are all equal to 1.
The true regression coefficient vector $\bm{\beta}_0\in\mathbb{R}^p$ is sparse with 6 nonzero entries (including intercept term),
so that we need to screen $s=5$ truly active non-constant FE covariates out of 999 such variables.  
All nonzero entries of $\bm{\beta}_0$ are set to 1, while their locations are varied as per the following two scenarios.
\begin{itemize}
	\item[(S1)] Dense signals: The first six coefficients of $\bm{\beta}_0$  are nonzero. 
	Thus, the first five non-constant FE covariates (excluding the intercept) are truly active, 
	and all REs correspond to active FEs under (R1).

	\item[(S2)] Sparse signals: The nonzero coefficients occur at positions 1 (intercept), 5, 10, 50, 100, and 200. 
	In this case, the active covariates are more sparsely distributed, and none of the REs (except the intercept) 
	are associated with active FEs even under (R1).
\end{itemize}

To assess the claimed robustness performance,  we contaminate a fixed proportion (5\%, 10\% and 20\%) of randomly selected observations 
in each simulated dataset  under four different schemes as described below:
\begin{itemize}
	\item[(C1)] Contaminate responses by adding independent noise from a $\mathcal{N}(20,1)$ distribution.

	\item[(C2)] Contaminate the first non-constant FE covariate (column 2 of the FE design matrix) 
	by adding independent noise from $\mathcal{N}(-2,1)$. 
	This contaminated covariate is active in (S1) but inactive in (S2),  and it is also associated with an (uncontaminated) RE under (R1).
	
	\item[(C3)] Replace the fifth non-constant FE covariate (column 6 of the FE design matrix)  with randomly generated leverage points, 
	which also changes the associated response values.  
	This contaminating covariate is inactive in both (S1) and (S2), and has no associated REs even under (R1).
	
	\item[(C4)] Replace the first non-constant RE covariate (column 2 of the RE design matrix)  with leverage points, 
	inducing contamination in the response. But, all FE covariates are left uncontaminated which yields a hypothetical scenarios under (R1). 
\end{itemize}

We compare the proposed DPD based robust screening approaches with benchmark methods constructed directly 
under the mixed model structure using classical ML and REML-based marginal utilities, and
a relatively fast robust alternative based on MHGDE. 

Their performances are evaluated using the median true positive rate (TPR), 
defined as the median proportion of true signals correctly identified, 
and the empirical sure screening probability (EmpSSP), defined as the proportion of replications 
in which all true signals are successfully selected (i.e., TPR equals 1). 
These metrics are computed using a fixed target model size $d=[n/\log n]=86$ across 100 replications.
To further assess screening efficiency, we examine the minimum model size (MinMS), 
defined as the smallest value of target model size $d$ required to achieve perfect recovery (TPR = 1). 
This metric provides a finer comparison among competing methods that may exhibit similar median TPR but differ in their efficiency.
In addition, computational scalability is evaluated by recording the runtime of each procedure. 
The distributions of both MinMS and runtime are summarized through boxplots, with corresponding mean values reported for completeness.

\subsection{Simulation Results}
\label{SEC:sim_DPDSISP}

For all simulation configurations, arising from various combinations of (R1)--(R2), (S1)--(S2), and contamination schemes (C0)--(C4), 
where (C0) denotes the uncontaminated case, we report the median TPR and EmpSSP in Tables \ref{TAB:C0}--\ref{TAB:C1} and Supplementary Tables S1--S3. 
Representative boxplots of MinMS are presented in Figures \ref{FIG:MMS_C0}--\ref{FIG:MMS_Csp3} for selected scenarios, 
with the remaining ones are shown in Supplementary Figures S1--S5.

\begin{table}[!h]
\centering
	\caption{Median TPR and EmpSSp (in parenthesis) under no contamination (C0)}
	\resizebox{\textwidth}{!}{
\begin{tabular}{l|cc|ccc|ccc|ccc}\hline
Settings	&		\multicolumn{5}{|c|}{Benchmark SIS}	&	\multicolumn{3}{c|}{DPD-SISP with cv-P}	&	\multicolumn{3}{c}{DPD-SISP with I0-P}\\	
$\bm\Sigma_x$	&	MLE	&	REML	&	HGD(0.1) 	&	HGD(0.3)	&	HGD(0.5)	&	($0.1$)	&	($0.3$)	&	($0.5$)	&	($0.1$)	&	($0.3$)	&	($0.5$)	\\	\hline
\hline
(S1)$\times$(R0)	&		1	&	1	&	1	&	1	&	1	&	1	&	1	&	1	&	1	&	1	&	1	\\	
Identity	&	(1)	&	(1)	&	(1)	&	(1)	&	(1)	&	(0.98)	&	(0.97)	&	(0.96)	&	(0.97)	&	(0.97)	&	(0.97)	\\	\hline
(S1)$\times$(R1)	&		1	&	1	&	1	&	1	&	1	&	1	&	1	&	1	&	0.6	&	0.8	&	0.8	\\	
Identity		&	(0.56)	&	(0.55)	&	(0.56)	&	(0.59)	&	(0.6)	&	(0.87)	&	(0.79)	&	(0.71)	&	(0.03)	&	(0.16)	&	(0.29)	\\	\hline
(S2)$\times$(R1)	&		1	&	1	&	1	&	1	&	1	&	1	&	1	&	1	&	1	&	1	&	1	\\	
Identity	&	(1)	&	(1)	&	(1)	&	(1)	&	(1)	&	(0.99)	&	(0.97)	&	(0.97)	&	(0.99)	&	(0.97)	&	(0.95)	\\	\hline
\hline
(S1)$\times$(R1)	&	0.8	&	0.8	&	0.8	&	0.8	&	0.8	&	1	&	0.8	&	0.8	&	0.4	&	0.6	&	0.6	\\	
CS(0.3)		&	(0.4)	&	(0.4)	&	(0.42)	&	(0.42)	&	(0.43)	&	(0.52)	&	(0.48)	&	(0.4)	&	(0)	&	(0.06)	&	(0.14)	\\	\hline
(S2)$\times$(R0)	&		1	&	1	&	1	&	1	&	1	&	1	&	1	&	1	&	1	&	1	&	1	\\	
CS(0.3)	&	(1)	&	(1)	&	(1)	&	(0.99)	&	(1)	&	(0.89)	&	(0.88)	&	(0.86)	&	(0.88)	&	(0.89)	&	(0.88)	\\	\hline
(S2)$\times$(R1)	&		1	&	1	&	1	&	1	&	1	&	1	&	1	&	1	&	1	&	1	&	1	\\	
CS(0.3)		&	(1)	&	(1)	&	(1)	&	(1)	&	(1)	&	(0.84)	&	(0.87)	&	(0.84)	&	(0.86)	&	(0.86)	&	(0.84)	\\	\hline\hline
(S1)$\times$(R1)	&		1	&	1	&	1	&	1	&	1	&	1	&	1	&	1	&	0.6	&	0.6	&	0.8	\\	
T(0.3)	&	(0.61)	&	(0.62)	&	(0.58)	&	(0.6)	&	(0.6)	&	(0.99)	&	(0.98)	&	(0.97)	&	(0.03)	&	(0.18)	&	(0.22)	\\	\hline
(S2)$\times$(R0)	&		1	&	1	&	1	&	1	&	1	&	1	&	1	&	1	&	1	&	1	&	1	\\	
T(0.3)	&	(1)	&	(1)	&	(1)	&	(1)	&	(1)	&	(0.99)	&	(0.98)	&	(0.95)	&	(0.99)	&	(0.99)	&	(0.96)	\\	\hline
(S2)$\times$(R1)	&		1	&	1	&	1	&	1	&	1	&	1	&	1	&	1	&	1	&	1	&	1	\\	
T(0.3)	&	(1)	&	(1)	&	(1)	&	(1)	&	(1)	&	(0.99)	&	(0.97)	&	(0.95)	&	(0.99)	&	(0.99)	&	(0.95)	\\	\hline
\end{tabular}
}
\label{TAB:C0}
\end{table}

\begin{figure}[!h]
	\centering
	\subfloat[\tiny (S1)$\times$(R0) with $\bm\Sigma_x$ as Identity]{
		\includegraphics[page=2, width=0.3\textwidth]{Figures/S1_R0_Id.pdf}
		\label{FIG:boxplot_Y}}
	~	
	\subfloat[\tiny (S1)$\times$(R1) with $\bm\Sigma_x$ as Identity]{
		\includegraphics[page=2, width=0.3\textwidth]{Figures/S1_R1_Id.pdf}
		\label{FIG:boxplot_Y}}
	~	
	\subfloat[\tiny (S2)$\times$(R1) with $\bm\Sigma_x$ as Identity]{
		\includegraphics[page=2, width=0.3\textwidth]{Figures/S2_R1_Id.pdf}
		\label{FIG:boxplot_Y}}
	\\	
	\subfloat[\tiny (S1)$\times$(R1) with $\bm\Sigma_x$ as CS(0.3)]{
		\includegraphics[page=2, width=0.3\textwidth]{Figures/S1_R1_CS03.pdf}
		\label{FIG:boxplot_Y}}
	~	
	\subfloat[\tiny (S2)$\times$(R0) with $\bm\Sigma_x$ as CS(0.3)]{
		\includegraphics[page=2, width=0.3\textwidth]{Figures/S2_R0_CS03.pdf}
		\label{FIG:boxplot_Y}}
	~	
	\subfloat[\tiny (S2)$\times$(R1) with $\bm\Sigma_x$ as CS(0.3)]{
		\includegraphics[page=2, width=0.3\textwidth]{Figures/S2_R1_CS03.pdf}
		\label{FIG:boxplot_Y}}
	\\	
	\subfloat[\tiny (S1)$\times$(R1) with $\bm\Sigma_x$ as T(0.3)]{
		\includegraphics[page=2,width=0.3\textwidth]{Figures/S1_R1_T03.pdf}
		\label{FIG:boxplot_Y}}
	~	
	\subfloat[\tiny (S2)$\times$(R0) with $\bm\Sigma_x$ as T(0.3)]{
		\includegraphics[page=2, width=0.3\textwidth]{Figures/S2_R0_T03.pdf}
		\label{FIG:boxplot_Y}}
	~	
	\subfloat[\tiny (S2)$\times$(R1) with $\bm\Sigma_x$ as T(0.3)]{
		\includegraphics[page=2, width=0.3\textwidth]{Figures/S2_R1_T03.pdf}
		\label{FIG:boxplot_Y}}
	\caption{Boxplots (with overlaid sample means) of MinMS required for sure screening under no contamination (C0).
	The boxes represent (from left to right) the benchmark SIS based on MLE, REML, and HGD with $\gamma=0.1, 0.3, 0,5$, 
	the DPD-SISP with cv-P with $\alpha=0.1, 0.3, 0.5$, and the DPD-SISP with I0-P with $\alpha=0.1, 0.3, 0.5$, respectively. }
	\label{FIG:MMS_C0}
\end{figure}

\bigskip
\noindent\textbf{Performance under Uncontaminated data:}
\\
Under no contamination (C0), Table \ref{TAB:C0} shows that DPD-SISP is highly competitive with benchmark SIS procedures. 
Across nearly all configurations, the median TPR is equal to 1 for both the proposed and benchmark methods, 
with only a few minor deviations (e.g., under correlated designs with overlapping FE and RE covariates). 
Thus, detection power is broadly comparable across methods.
More informative differences emerge in EmpSSP, which captures the proportion of exact recovery. 
In relatively simple settings, such as independent covariates or weak interaction between FE and RE covariates,
all methods achieve near-perfect EmpSSP. 
DPD-SISP performs similarly here, with only minor reductions due to the additional robustness weighting.
This fact is also evident in Figure \ref{FIG:MMS_C0} which shows only a slight increase in minimum model sizes (MinMS)
required to achieve sure screening for DPD-based procedures compared to classical screening methods. 

In  more challenging scenarios involving correlated covariates or overlap between FE and RE covariates 
(e.g., (S1)$\times$(R1) under Identity, CS(0.3), and T(0.3) covariance structures), 
benchmark SIS procedures show noticeable reductions in EmpSSP, 
reflecting increased variability in recovering the full active set. 
In contrast, DPD-SISP with the cv-P proxy generally attains higher EmpSSP in these scenarios, 
often substantially improving over the benchmark methods, although some degradation persists under stronger dependence. 
The advantage is particularly pronounced under T(0.3), and remains moderate but consistent under CS(0.3).
The minMS required for DPD-SISP with cv-P proxy is also significantly smaller than the requirement of 
the benchmark ML, REML or MHGD based screening procedures in all these cases (Figure \ref{FIG:MMS_C0}).
Thus,  DPD-based marginal utilities yield more reliable rankings in structurally complex settings.

\begin{table}[!h]
	\centering
	\caption{Median TPR and EmpSSp (in parenthesis) under contaminated data with (C1)}
	\resizebox{\textwidth}{!}{
		\begin{tabular}{l|cc|ccc|ccc|ccc}\hline
			Settings	&		\multicolumn{5}{|c|}{Benchmark SIS}	&	\multicolumn{3}{c|}{DPD-SISP with cv-P}	&	\multicolumn{3}{c}{DPD-SISP with I0-P}\\	
			$\bm\Sigma_x$	&	MLE	&	REML	&	HGD(0.1) 	&	HGD(0.3)	&	HGD(0.5)	&	($0.1$)	&	($0.3$)	&	($0.5$)	&	($0.1$)	&	($0.3$)	&	($0.5$)	\\	\hline
			\hline
\multicolumn{12}{c}{\textbf{5\% contamination}} \\																										
(S1)$\times$(R1)			&	0.6	&	0.4	&	1	&	1	&	1	&	1	&	1	&	1	&	0.8	&	0.8	&	0.8	\\	
Identity	&	(0.01)	&	(0)	&	(0.56)	&	(0.59)	&	(0.58)	&	(0.68)	&	(0.8)	&	(0.78)	&	(0.31)	&	(0.41)	&	(0.38)	\\	\hline
(S2)$\times$(R1)			&	0.4	&	0.4	&	1	&	1	&	1	&	0.8	&	1	&	1	&	0.8	&	1	&	1	\\	
Identity	&	(0)	&	(0)	&	(0.96)	&	(1)	&	(1)	&	(0.21)	&	(0.92)	&	(0.92)	&	(0.21)	&	(0.94)	&	(0.89)	\\	\hline\hline
(S1)$\times$(R1)			&	0.6	&	0.6	&	0.8	&	1	&	1	&	0.6	&	0.8	&	0.8	&	0.6	&	0.8	&	0.8	\\	
CS(0.3)	&	(0)	&	(0)	&	(0.47)	&	(0.54)	&	(0.53)	&	(0.05)	&	(0.38)	&	(0.35)	&	(0.13)	&	(0.27)	&	(0.22)	\\	\hline
(S2)$\times$(R0)			&	0.8	&	0.8	&	0.8	&	1	&	1	&	0.8	&	1	&	1	&	0.8	&	1	&	1	\\	
CS(0.3)	&	(0)	&	(0)	&	(0.45)	&	(0.96)	&	(0.98)	&	(0.33)	&	(0.83)	&	(0.81)	&	(0.27)	&	(0.76)	&	(0.77)	\\	\hline
(S2)$\times$(R1)			&	0.4	&	0.4	&	0.8	&	1	&	1	&	0.6	&	1	&	1	&	0.6	&	1	&	1	\\	
CS(0.3)	&	(0)	&	(0)	&	(0.13)	&	(0.92)	&	(0.88)	&	(0.03)	&	(0.67)	&	(0.74)	&	(0.04)	&	(0.78)	&	(0.72)	\\	\hline\hline
(S1)$\times$(R1)			&	0.8	&	0.6	&	1	&	1	&	1	&	1	&	1	&	1	&	0.6	&	0.8	&	0.8	\\	
T(0.3)	&	(0.09)	&	(0.08)	&	(0.62)	&	(0.64)	&	(0.68)	&	(0.93)	&	(0.97)	&	(0.98)	&	(0.21)	&	(0.26)	&	(0.24)
\\\hline	
(S2)$\times$(R0)			&	0.8	&	0.8	&	1	&	1	&	1	&	1	&	1	&	1	&	1	&	1	&	1	\\	
T(0.3)	&	(0)	&	(0.01)	&	(1)	&	(1)	&	(1)	&	(0.9)	&	(0.99)	&	(0.98)	&	(0.87)	&	(0.99)	&	(0.99)	\\	\hline
(S2)$\times$(R1)			&	0.4	&	0.4	&	1	&	1	&	1	&	0.8	&	1	&	1	&	0.8	&	1	&	1	\\	
T(0.3)	&	(0)	&	(0)	&	(0.98)	&	(1)	&	(1)	&	(0.18)	&	(0.89)	&	(0.84)	&	(0.22)	&	(0.87)	&	(0.8)	\\	\hline\hline
\multicolumn{12}{c}{\textbf{10\% contamination}} \\																								
(S1)$\times$(R1)			&	0.4	&	0.4	&	0.6	&	1	&	1	&	0.6	&	1	&	1	&	0.4	&	1	&	1	\\	
Identity	&	(0)	&	(0)	&	(0)	&	(0.59)	&	(0.62)	&	(0)	&	(0.75)	&	(0.71)	&	(0)	&	(0.56)	&	(0.58)	\\	\hline
(S2)$\times$(R1)			&	0.2	&	0.2	&	0.4	&	1	&	1	&	0.4	&	1	&	1	&	0.4	&	1	&	1	\\	
Identity	&	(0)	&	(0)	&	(0)	&	(1)	&	(1)	&	(0)	&	(0.9)	&	(0.92)	&	(0)	&	(0.83)	&	(0.87)	\\	\hline\hline
(S1)$\times$(R1)			&	0.4	&	0.4	&	0.6	&	1	&	1	&	0.4	&	0.8	&	0.8	&	0.2	&	0.8	&	0.8	\\
CS(0.3)	&	(0)	&	(0)	&	(0)	&	(0.59)	&	(0.64)	&	(0)	&	(0.24)	&	(0.29)	&	(0)	&	(0.49)	&	(0.46)	\\	\hline
(S2)$\times$(R0)			&	0.6	&	0.6	&	0.6	&	1	&	1	&	0.7	&	1	&	1	&	0.8	&	1	&	1	\\	
CS(0.3)	&	(0)	&	(0)	&	(0)	&	(0.99)	&	(0.98)	&	(0)	&	(0.85)	&	(0.84)	&	(0)	&	(0.65)	&	(0.73)	\\	\hline
(S2)$\times$(R1)			&	0.2	&	0.2	&	0.4	&	1	&	1	&	0.4	&	1	&	1	&	0.4	&	1	&	1	\\	
CS(0.3)	&	(0)	&	(0)	&	(0)	&	(0.91)	&	(0.85)	&	(0)	&	(0.59)	&	(0.69)	&	(0)	&	(0.62)	&	(0.67)	\\	\hline\hline
(S1)$\times$(R1)			&	0.8	&	0.8	&	0.8	&	1	&	1	&	0.8	&	1	&	1	&	0.4	&	1	&	1	\\	
T(0.3)	&	(0)	&	(0)	&	(0.29)	&	(0.69)	&	(0.74)	&	(0.18)	&	(0.96)	&	(0.95)	&	(0)	&	(0.64)	&	(0.6)	\\	\hline
(S2)$\times$(R0)			&	0.6	&	0.6	&	0.6	&	1	&	1	&	0.7	&	1	&	1	&	0.8	&	1	&	1	\\	
T(0.3)	&	(0)	&	(0)	&	(0)	&	(1)	&	(1)	&	(0.01)	&	(0.97)	&	(0.96)	&	(0.01)	&	(0.93)	&	(0.9)	\\	\hline
(S2)$\times$(R1)			&	0.2	&	0.2	&	0.4	&	1	&	1	&	0.4	&	1	&	1	&	0.4	&	1	&	1	\\	
T(0.3)	&	(0)	&	(0)	&	(0)	&	(1)	&	(1)	&	(0)	&	(0.9)	&	(0.89)	&	(0)	&	(0.82)	&	(0.84)	\\	\hline\hline
\multicolumn{12}{c}{\textbf{20\% contamination}} \\																									
(S1)$\times$(R1)			&	0.2	&	0.2	&	0.2	&	0.6	&	1	&	0	&	0.8	&	1	&	0.2	&	0.4	&	1	\\	
Identity	&	(0)	&	(0)	&	(0)	&	(0.06)	&	(0.61)	&	(0)	&	(0.34)	&	(0.71)	&	(0)	&	(0.12)	&	(0.54)	\\	\hline
(S2)$\times$(R1)			&	0	&	0	&	0	&	0.2	&	1	&	0	&	1	&	1	&	0	&	0.2	&	0.8	\\	
Identity	&	(0)	&	(0)	&	(0)	&	(0.01)	&	(0.99)	&	(0)	&	(0.51)	&	(0.91)	&	(0)	&	(0.01)	&	(0.44)	\\	\hline\hline
(S2)$\times$(R0)			&	0.4	&	0.4	&	0.4	&	0.4	&	1	&	0.4	&	0.6	&	1	&	0.4	&	0.6	&	0.8	\\	
CS(0.3)	&	(0)	&	(0)	&	(0)	&	(0)	&	(0.97)	&	(0)	&	(0)	&	(0.69)	&	(0)	&	(0)	&	(0.32)	\\	\hline
(S1)$\times$(R1)			&	0.4	&	0.4	&	0.4	&	0.5	&	1	&	0	&	0.2	&	0.8	&	0.2	&	0.4	&	0.8	\\	
CS(0.3)	&	(0)	&	(0)	&	(0)	&	(0.03)	&	(0.72)	&	(0)	&	(0)	&	(0.14)	&	(0)	&	(0.02)	&	(0.32)	\\	\hline
(S2)$\times$(R1)			&	0	&	0	&	0	&	0.2	&	0.8	&	0	&	0.2	&	0.8	&	0	&	0.2	&	0.6	\\	
CS(0.3)	&	(0)	&	(0)	&	(0)	&	(0)	&	(0.43)	&	(0)	&	(0.01)	&	(0.35)	&	(0)	&	(0)	&	(0.2)	\\	\hline\hline
(S1)$\times$(R1)			&	0.4	&	0.4	&	0.4	&	0.8	&	1	&	0.4	&	0.8	&	1	&	0.2	&	0.6	&	1	\\	
T(0.3)	&	(0)	&	(0.02)	&	(0.01)	&	(0.34)	&	(0.67)	&	(0.02)	&	(0.4)	&	(0.95)	&	(0)	&	(0.12)	&	(0.62)	\\	\hline
(S2)$\times$(R0)			&	0.4	&	0.4	&	0.4	&	0.8	&	1	&	0.4	&	1	&	1	&	0.4	&	0.6	&	0.9	\\	
T(0.3)	&	(0)	&	(0)	&	(0)	&	(0.31)	&	(1)	&	(0)	&	(0.61)	&	(0.82)	&	(0)	&	(0.08)	&	(0.5)	\\	\hline
(S2)$\times$(R1)			&	0.2	&	0.2	&	0.2	&	0.2	&	1	&	0.2	&	1	&	1	&	0.2	&	0.2	&	0.8	\\	
T(0.3)	&	(0)	&	(0)	&	(0)	&	(0.04)	&	(0.96)	&	(0)	&	(0.53)	&	(0.84)	&	(0)	&	(0.02)	&	(0.3)	\\	\hline
\hline
		\end{tabular}
	}
	\label{TAB:C1}
\end{table}

The faster I0-P proxy exhibits a different pattern.
While it performs comparably to cv-P in settings with weak dependence or minimal interaction between FE and RE covariates, 
its EmpSSP deteriorates substantially in the presence of strong correlation or overlap among variables. 
For example, in (S1)$\times$(R1) configurations, I0-P maintains reasonable TPR but yields very low EmpSSP
and high MinMS, indicating unstable rankings. 
This highlights the importance of an appropriate proxy choice when dependence structures are nontrivial.


\begin{figure}[!h]
	\centering
	\subfloat[\tiny 5\% (C1) Contamination]{
		\includegraphics[page=2, width=0.3\textwidth]{Figures/S1_R1_Id_C1_5.pdf}
		\label{FIG:boxplot_Y}}
	~	
	\subfloat[\tiny 10\% (C1) Contamination]{
		\includegraphics[page=2, width=0.3\textwidth]{Figures/S1_R1_Id_C1_10.pdf}
		\label{FIG:boxplot_Y}}
	~	
	\subfloat[\tiny 20\% (C1) Contamination]{
		\includegraphics[page=2, width=0.3\textwidth]{Figures/S1_R1_Id_C1_20.pdf}
		\label{FIG:boxplot_Y}}
	\\	
	\subfloat[\tiny 5\% (C2) Contamination]{
	\includegraphics[page=2, width=0.3\textwidth]{Figures/S1_R1_Id_C2_5.pdf}
	\label{FIG:boxplot_Y}}
~	
\subfloat[\tiny 10\% (C2) Contamination]{
	\includegraphics[page=2, width=0.3\textwidth]{Figures/S1_R1_Id_C2_10.pdf}
	\label{FIG:boxplot_Y}}
~	
\subfloat[\tiny 20\% (C2) Contamination]{
	\includegraphics[page=2, width=0.3\textwidth]{Figures/S1_R1_Id_C2_20.pdf}
	\label{FIG:boxplot_Y}}
\\	
	\subfloat[\tiny 5\% (C3) Contamination]{
	\includegraphics[page=2, width=0.3\textwidth]{Figures/S1_R1_Id_C3_5.pdf}
	\label{FIG:boxplot_Y}}
~	
\subfloat[\tiny 10\% (C3) Contamination]{
	\includegraphics[page=2, width=0.3\textwidth]{Figures/S1_R1_Id_C3_10.pdf}
	\label{FIG:boxplot_Y}}
~	
\subfloat[\tiny 20\% (C3) Contamination]{
	\includegraphics[page=2, width=0.3\textwidth]{Figures/S1_R1_Id_C3_20.pdf}
	\label{FIG:boxplot_Y}}
\\	
	\subfloat[\tiny 5\% (C4) Contamination]{
	\includegraphics[page=2, width=0.3\textwidth]{Figures/S1_R1_Id_C4_5.pdf}
	\label{FIG:boxplot_Y}}
~	
\subfloat[\tiny 10\% (C4) Contamination]{
	\includegraphics[page=2, width=0.3\textwidth]{Figures/S1_R1_Id_C4_10.pdf}
	\label{FIG:boxplot_Y}}
~	
\subfloat[\tiny 20\% (C4) Contamination]{
	\includegraphics[page=2, width=0.3\textwidth]{Figures/S1_R1_Id_C4_20.pdf}
	\label{FIG:boxplot_Y}}
	\caption{Boxplots (with overlaid sample means) of MinMS required for sure screening under scenario (S1)$\times$(R1) with $\bm\Sigma_x=\mathbb{I}$ and different types of contamination.
		The boxes represent (from left to right) the benchmark SIS based on MLE, REML, and HGD with $\gamma=0.1, 0.3, 0,5$, 
		he DPD-SISP with cv-P with $\alpha=0.1, 0.3, 0.5$, and the DPD-SISP with I0-P with $\alpha=0.1, 0.3, 0.5$, respectively.}
	\label{FIG:MMS_Csp1}
\end{figure}

\begin{figure}[!h]
	\centering
	\subfloat[\tiny 5\% (C1) Contamination]{
		\includegraphics[page=2, width=0.3\textwidth]{Figures/S1_R1_T03_C1_5.pdf}
		\label{FIG:boxplot_Y}}
	~	
	\subfloat[\tiny 10\% (C1) Contamination]{
		\includegraphics[page=2, width=0.3\textwidth]{Figures/S1_R1_T03_C1_10.pdf}
		\label{FIG:boxplot_Y}}
	~	
	\subfloat[\tiny 20\% (C1) Contamination]{
		\includegraphics[page=2, width=0.3\textwidth]{Figures/S1_R1_T03_C1_20.pdf}
		\label{FIG:boxplot_Y}}
	\\	
	\subfloat[\tiny 5\% (C2) Contamination]{
		\includegraphics[page=2, width=0.3\textwidth]{Figures/S1_R1_T03_C2_5.pdf}
		\label{FIG:boxplot_Y}}
	~	
	\subfloat[\tiny 10\% (C2) Contamination]{
		\includegraphics[page=2, width=0.3\textwidth]{Figures/S1_R1_T03_C2_10.pdf}
		\label{FIG:boxplot_Y}}
	~	
	\subfloat[\tiny 20\% (C2) Contamination]{
		\includegraphics[page=2, width=0.3\textwidth]{Figures/S1_R1_T03_C2_20.pdf}
		\label{FIG:boxplot_Y}}
	\\	
	\subfloat[\tiny 5\% (C3) Contamination]{
		\includegraphics[page=2, width=0.3\textwidth]{Figures/S1_R1_T03_C3_5.pdf}
		\label{FIG:boxplot_Y}}
	~	
	\subfloat[\tiny 10\% (C3) Contamination]{
		\includegraphics[page=2, width=0.3\textwidth]{Figures/S1_R1_T03_C3_10.pdf}
		\label{FIG:boxplot_Y}}
	~	
	\subfloat[\tiny 20\% (C3) Contamination]{
		\includegraphics[page=2, width=0.3\textwidth]{Figures/S1_R1_T03_C3_20.pdf}
		\label{FIG:boxplot_Y}}
	\\	
	\subfloat[\tiny 5\% (C4) Contamination]{
		\includegraphics[page=2, width=0.3\textwidth]{Figures/S1_R1_T03_C4_5.pdf}
		\label{FIG:boxplot_Y}}
	~	
	\subfloat[\tiny 10\% (C4) Contamination]{
		\includegraphics[page=2, width=0.3\textwidth]{Figures/S1_R1_T03_C4_10.pdf}
		\label{FIG:boxplot_Y}}
	~	
	\subfloat[\tiny 20\% (C4) Contamination]{
		\includegraphics[page=2, width=0.3\textwidth]{Figures/S1_R1_T03_C4_20.pdf}
		\label{FIG:boxplot_Y}}
	\caption{Boxplots (with overlaid sample means) of MinMS required for sure screening under scenario (S1)$\times$(R1) with $\bm\Sigma_x$ as T(0.3) and different types of contamination.
		The boxes represent (from left to right) the benchmark SIS based on MLE, REML, and HGD with $\gamma=0.1, 0.3, 0,5$, 
		the DPD-SISP with cv-P with $\alpha=0.1, 0.3, 0.5$, and the DPD-SISP with I0-P with $\alpha=0.1, 0.3, 0.5$, respectively.}
	\label{FIG:MMS_Csp2}
\end{figure}

\begin{figure}[!h]
	\centering
	\subfloat[\tiny 5\% (C1) Contamination]{
		\includegraphics[page=2, width=0.3\textwidth]{Figures/S2_R0_CS03_C1_5.pdf}
		\label{FIG:boxplot_Y}}
	~	
	\subfloat[\tiny 10\% (C1) Contamination]{
		\includegraphics[page=2, width=0.3\textwidth]{Figures/S2_R0_CS03_C1_10.pdf}
		\label{FIG:boxplot_Y}}
	~	
	\subfloat[\tiny 20\% (C1) Contamination]{
		\includegraphics[page=2, width=0.3\textwidth]{Figures/S2_R0_CS03_C1_20.pdf}
		\label{FIG:boxplot_Y}}
	\\	
	\subfloat[\tiny 5\% (C2) Contamination]{
		\includegraphics[page=2, width=0.3\textwidth]{Figures/S2_R0_CS03_C2_5.pdf}
		\label{FIG:boxplot_Y}}
	~	
	\subfloat[\tiny 10\% (C2) Contamination]{
		\includegraphics[page=2, width=0.3\textwidth]{Figures/S2_R0_CS03_C2_10.pdf}
		\label{FIG:boxplot_Y}}
	~	
	\subfloat[\tiny 20\% (C2) Contamination]{
		\includegraphics[page=2, width=0.3\textwidth]{Figures/S2_R0_CS03_C2_20.pdf}
		\label{FIG:boxplot_Y}}
	\\	
	\subfloat[\tiny 5\% (C3) Contamination]{
		\includegraphics[page=2, width=0.3\textwidth]{Figures/S2_R0_CS03_C3_5.pdf}
		\label{FIG:boxplot_Y}}
	~	
	\subfloat[\tiny 10\% (C3) Contamination]{
		\includegraphics[page=2, width=0.3\textwidth]{Figures/S2_R0_CS03_C3_10.pdf}
		\label{FIG:boxplot_Y}}
	~	
	\subfloat[\tiny 20\% (C3) Contamination]{
		\includegraphics[page=2, width=0.3\textwidth]{Figures/S2_R0_CS03_C3_20.pdf}
		\label{FIG:boxplot_Y}}
	\\	
	\subfloat[\tiny 5\% (C4) Contamination]{
		\includegraphics[page=2, width=0.3\textwidth]{Figures/S2_R0_CS03_C4_5.pdf}
		\label{FIG:boxplot_Y}}
	~	
	\subfloat[\tiny 10\% (C4) Contamination]{
		\includegraphics[page=2, width=0.3\textwidth]{Figures/S2_R0_CS03_C4_10.pdf}
		\label{FIG:boxplot_Y}}
	~	
	\subfloat[\tiny 20\% (C4) Contamination]{
		\includegraphics[page=2, width=0.3\textwidth]{Figures/S2_R0_CS03_C4_20.pdf}
		\label{FIG:boxplot_Y}}
	\caption{Boxplots  (with overlaid sample means) of MinMS required for sure screening under scenario (S2)$\times$(R0) with $\bm\Sigma_x$ as CS(0.3) and different types of contamination.
		The boxes represent (from left to right) the benchmark SIS based on MLE, REML, and HGD with $\gamma=0.1, 0.3, 0,5$, 
		the DPD-SISP with cv-P with $\alpha=0.1, 0.3, 0.5$, and the DPD-SISP with I0-P with $\alpha=0.1, 0.3, 0.5$, respectively.}
	\label{FIG:MMS_Csp3}
\end{figure}

\bigskip
\noindent\textbf{Performance under Contaminated data:}
\\
Under data contamination (Table \ref{TAB:C1} and Supplementary Tables S1--S3), 
the advantages of the proposed DPD-SISP become substantially more evident. 
Even at 5\% contamination, classical ML- and REML-based SIS procedures exhibit sharp declines in both TPR and EmpSSP, 
particularly under more challenging settings such as (C3) and (C4). 
In many cases, EmpSSP drops to nearly zero despite moderate TPR, indicating highly unstable recovery;
these are further prominent under overlapping effects and dependent covariates (e.g., (S1)$\times$(R1)). 
For instance, under (C3) with 10\% contamination and Identity covariance in (S2)$\times$(R1), 
MLE yields TPR around 0.2 with near-zero EmpSSP, reflecting near-complete failure.

In contrast, DPD-SISP demonstrates strong robustness across contamination levels and types. 
For tuning parameters $\alpha \geq 0.3$, it typically maintains high TPR while achieving substantially improved EmpSSP 
relative to likelihood-based SIS methods. This advantage is consistent across (C1)–(C4), 
though variability increases under the most severe schemes. As contamination intensifies (from 5\% to 20\%), 
classical SIS methods often collapse in both TPR and EmpSSP, whereas DPD-SISP degrades more gradually, 
retaining moderate-to-high TPR and non-negligible EmpSSP even in challenging scenarios involving strong dependence and overlap.
%
Relative to the robust MHGDE benchmark, DPD-SISP achieves comparable performance overall and often higher EmpSSP, 
particularly in complex or heavily contaminated settings. While MHGDE may occasionally attain slightly higher TPR, 
DPD-SISP provides a more stable balance between detection and recovery.
The same is also observed in terms of the MinMS distributions presented in Figure \ref{FIG:MMS_Csp1}--\ref{FIG:MMS_Csp3} 
(and Figures S1--S5 in the Supplementary material);
its values  for DPD-SISP with cv-P proxy are always smaller or competitive  with the requirements of MHGDE based screening 
while being significantly smaller compared to the likelihood based screening procedures.

A comparison between the two proxy choices within DPD-SISP further highlights 
the importance of accurate dependence adjustment under contamination. Across all simulation settings (C1)--(C4), 
the cv-P proxy generally delivers more stable performance than I0-P, particularly in terms of EmpSSP. 
While both proxies benefit from the robustness of the DPD loss and often achieve comparable TPR in moderate settings, 
the advantage of cv-P becomes increasingly pronounced as contamination severity and dependence complexity increase. 
For instance, under (C3) and (C4) with 10\%--20\% contamination, cv-P consistently attains substantially higher EmpSSP than I0-P, 
particularly in settings involving overlapping effects or correlated covariates, 
where the identity proxy fails to adequately capture dependence. 
This gap reflects the ability of cv-P to provide a more accurate whitening transformation, 
thereby stabilizing marginal rankings. In contrast, I0-P, although computationally attractive, 
exhibits greater variability and tends to underperform in challenging scenarios, 
occasionally yielding near-zero EmpSSP despite moderate TPR. 
The MinMS required for DPD-SISP with cv-P proxy are also significantly lower than the same with I0-P proxy 
across most contamination scenarios (Figure \ref{FIG:MMS_Csp1}--\ref{FIG:MMS_Csp3}, and S1--S5 in the Supplementary material). 
These results suggest that while I0-P may serve as a fast baseline, 
the cv-P proxy is preferable in practice when robustness and reliable variable recovery are of primary concern.

\bigskip
\noindent\textbf{Computational efficiency:}
\\
The boxplots of the computational times required for each screening methods is presented in Figure \ref{FIG:Runtime_C0} 
for the uncontaminated cases; the same for contaminated  scenarios are presented in Figures S6--S13 in the Supplementary material. 
We can clearly see that classical ME- and REML-based SIS procedures remain among the fastest. 
However, DPD-SISP with the I0-P proxy achieves comparable runtime, 
while the cv-P version incurs only a moderate additional cost due to proxy estimation. 
Importantly, in several scenarios, both variants of DPD-SISP exhibit runtime comparable to, or even lower than, likelihood-based SIS, 
reflecting the computational burden of repeated mixed-model fitting. 
The MHGDE-based robust benchmark, by contrast, is substantially more computationally intensive, 
limiting its scalability in ultrahigh-dimensional settings.

\begin{figure}[!h]
	\centering
	\subfloat[\tiny (S1)$\times$(R0) with $\bm\Sigma_x$ as Identity]{
		\includegraphics[page=3, width=0.3\textwidth]{Figures/S1_R0_Id.pdf}
		\label{FIG:boxplot_Y}}
	~	
	\subfloat[\tiny (S1)$\times$(R1) with $\bm\Sigma_x$ as Identity]{
		\includegraphics[page=3, width=0.3\textwidth]{Figures/S1_R1_Id.pdf}
		\label{FIG:boxplot_Y}}
	~	
	\subfloat[\tiny (S2)$\times$(R1) with $\bm\Sigma_x$ as Identity]{
		\includegraphics[page=3, width=0.3\textwidth]{Figures/S2_R1_Id.pdf}
		\label{FIG:boxplot_Y}}
	\\	
	\subfloat[\tiny (S1)$\times$(R1) with $\bm\Sigma_x$ as CS(0.3)]{
		\includegraphics[page=3, width=0.3\textwidth]{Figures/S1_R1_CS03.pdf}
		\label{FIG:boxplot_Y}}
	~	
	\subfloat[\tiny (S2)$\times$(R0) with $\bm\Sigma_x$ as CS(0.3)]{
		\includegraphics[page=3, width=0.3\textwidth]{Figures/S2_R0_CS03.pdf}
		\label{FIG:boxplot_Y}}
	~	
	\subfloat[\tiny (S2)$\times$(R1) with $\bm\Sigma_x$ as CS(0.3)]{
		\includegraphics[page=3, width=0.3\textwidth]{Figures/S2_R1_CS03.pdf}
		\label{FIG:boxplot_Y}}
	\\	
	\subfloat[\tiny (S1)$\times$(R1) with $\bm\Sigma_x$ as T(0.3)]{
		\includegraphics[page=3,width=0.3\textwidth]{Figures/S1_R1_T03.pdf}
		\label{FIG:boxplot_Y}}
	~	
	\subfloat[\tiny (S2)$\times$(R0) with $\bm\Sigma_x$ as T(0.3)]{
		\includegraphics[page=3, width=0.3\textwidth]{Figures/S2_R0_T03.pdf}
		\label{FIG:boxplot_Y}}
	~	
	\subfloat[\tiny (S2)$\times$(R1) with $\bm\Sigma_x$ as T(0.3)]{
		\includegraphics[page=3, width=0.3\textwidth]{Figures/S2_R1_T03.pdf}
		\label{FIG:boxplot_Y}}
	\caption{Boxplots  (with overlaid sample means) of computational times (in seconds) of the screening methods under no contamination.
		The boxes represent (from left to right) the benchmark SIS based on MLE, REML, and HGD with $\gamma=0.1, 0.3, 0,5$, 
		 the DPD-SISP with cv-P with $\alpha=0.1, 0.3, 0.5$, and the DPD-SISP with I0-P with $\alpha=0.1, 0.3, 0.5$, respectively.}
	\label{FIG:Runtime_C0}
\end{figure}

Overall, the results demonstrate that DPD-SISP, particularly with the cv-P proxy, 
provides a favorable balance between robustness, accuracy, and computational efficiency across both clean and contaminated settings. 
Although performance degrades under severe dependence and heavy contamination, 
it remains consistently more stable than classical SIS methods and competitive with more computationally demanding robust alternatives.
These findings support DPD-SISP as a reliable screening framework for ultrahigh-dimensional LMMs in the presence of data contamination.


\section{A Real Data Application}
\label{SEC:data}

\subsection{The ADNI-2 Data}

We illustrate the proposed DPD-SISP procedure using a subset of data from Phase 2 of the Alzheimer's Disease Neuroimaging Initiative (ADNI-2), 
a multicenter longitudinal study designed to identify biomarkers associated with the onset and progression of Alzheimer's disease \citep{mueller2005ways,mueller2005alzheimer,weiner2013alzheimer}. 
ADNI-2 extends the original ADNI cohort through additional recruitment, longitudinal follow-up assessments, and a collection of multimodal biomarkers, 
including neuroimaging, genomic, transcriptomic, metabolomic, and clinical measurements (\url{https://adni.loni.usc.edu/data-samples/adni-data/}). 

For the present analysis, longitudinal cognitive assessments were integrated with high-dimensional gene-expression measurements (at baseline)
obtained through the \texttt{R} package \texttt{ADNIMERGE} \citep{ADNIMERGE}. 
The response variable is taken to be the repeated Mini-Mental State Examination (MMSE) score \citep{folstein1975mini}, a widely used measure of global cognitive function,
while the FE covariates consist of 49,386 gene-expression probes together with time. 
Subject-specific dependence induced by repeated observations is modeled through a LMM with random intercept and random slope for time. 
Participants were identified using unique ADNI research identifiers (RID), 
and follow-up times were determined from the ADNI visit schedule (e.g., 6-, 12-, and 24-month assessments).

To ensure a reasonably balanced longitudinal structure and avoid excessive sparsity in the repeated measurements, 
we restricted attention to participants having either three or four visits. 
The resulting sample comprised 343 individuals, of whom 41 contributed three observations and 
302 contributed four observations, yielding a total of $n=1331$ observations.
The resulting dataset is ultrahigh-dimensional longitudinal, with $p=49,387$ candidate FE variables (probes plus time),  
substantially exceeding the sample size $n$. 
Direct model fitting in such settings is computationally challenging,
necessitating the use of an efficient screening procedure as a preliminary dimension-reduction step.

\subsection{Screening Analysis}

We applied the proposed DPD-SISP procedure using both proxy matrices, namely the cv-P and I0-P, with tuning parameter $\alpha=0.3$,
to screen the top $d=\lfloor n/\log(n)\rfloor=185$ FE variables. 
For comparison, we also applied standard SIS based on ML and REML estimation under marginal LMMs;
both approaches produced identical screened feature sets and are therefore reported jointly as the benchmark result. 
The HGD-based SIS procedure was computationally infeasible for this application due to the extremely large number of candidate covariates 
and could therefore not be performed.

The screened probes were subsequently mapped to known genes.
The resulting gene sets obtained from the benchmark ML/REML-based SIS procedure 
and the DPD-SISP(0.3) with cv-P and I0-P proxies are denoted by \texttt{Gene\_Set\_0}, 
\texttt{Gene\_Set\_1(cv)}, and \texttt{Gene\_Set\_1(I0)}, respectively. 
These sets contained 162, 165, and 165 unique genes, respectively.

The two implementations of DPD-SISP produced highly similar gene lists \texttt{Gene\_Set\_1(**)}, 
whereas both differed substantially from the benchmark screening results (\texttt{Gene\_Set\_0});
the complete lists are reported in Supplementary Table S4.
Such discrepancies are not unexpected in high-dimensional transcriptomic studies, 
where strong association among genes often allow different screening procedures to identify 
distinct sets of predictors with comparable explanatory power while representing the same underlying biological processes.  
Consistent with this, stability selection analyses \citep{meinshausen2010stability} 
yielded relatively modest stability proportions across all three screening approaches, 
with the DPD-SISP with I0-P proxy exhibiting the highest stability.
These factors make it inherently challenging to directly compare the screened gene sets, 
thereby motivating external biological validation analyses to assess the relevance of the identified candidates.


\subsection{External Biological Validation}

To assess the biological relevance of the screened genes, 
we conducted an external phenotype-driven validation using \textit{VarElect} \citep{stelzer2016varelect}, 
a gene-prioritization framework within the \textit{GeneCards} Suite \citep{stelzer2016genecards}. 
VarElect integrates information from multiple curated resources, 
including GeneCards, MalaCards, PathCards, and molecular interaction databases, 
to rank genes according to their relevance to a specified phenotype or disease term \citep{stelzer2016varelect}. 
For a queried phenotype, the platform assigns a phenotype relevance (PR) score 
and classifies evidence as either direct or indirect. 
Direct associations correspond to explicit links between a gene and the queried phenotype, 
whereas indirect associations are inferred through intermediate genes, shared pathways, functional annotations, or network-based relationships.
%
For genes with direct associations, VarElect additionally reports a (pseudo) $p$-value based on 
the empirical distribution of PR scores across genes with sufficient annotation coverage (GeneCards GIFtS $\ge 10$). 
Although the platform relies on existing biological knowledge and cannot establish novel causal relationships, 
it provides a useful external benchmark for evaluating whether screened genes are enriched for previously documented biological evidences.

In this study, using ``Alzheimer'' as the target phenotype term, 
we categorized screened genes into six groups according to the strength and nature of their VarElect associations:
\begin{itemize}
	\item G1: Highly significant direct association, having p-values < 0.01.
	\item G2: Significant direct association, having p-values between 0.01 to 0.05.
	\item G3: Moderate direct association, having p-values between 0.05 and 0.1.
	\item G4: Weak direct association, having p-values > 0.1.
	\item G5: Indirect association	
	\item G6: No documented direct or indirect association.
\end{itemize}
The resulting distributions of screened genes across these categories are summarized in Table \ref{TAB:data_no}, 
together with the average PR scores. Complete gene lists and individual PR scores are provided in Supplementary Table S4.

\begin{table}[!h]
	\centering\small
	\caption{Numbers of genes and average PR scores (in parenthesis) across different groups of association, 
		obtained by the three screening methods}
		\begin{tabular}{ll|c|c|c}\hline
Assoc. type & Group	&	\texttt{Gene\_Set\_0}	&	\texttt{Gene\_Set\_1(cv)}	&	\texttt{Gene\_Set\_1(I0)}	
\\\hline\hline
\multirow{4}{*}{Direct} &  G1	&	1 (10.78)	&	3 (13.67)	&	4 (19.8)	\\
&G2	&	17 (5.32)	&	16 (4.57)	&	17 (4.69)	\\
&G3	&	11 (2.22)	&	15 (2.41)	&	14 (2.4)	\\
&G4	&	73 (0.65)	&	65 (0.76)	&	67 (0.78)	\\\hline
Indirect &G5	&	43 (1.14)	&	53 (1.16)	&	53 (1.28)	\\\hline
No &G6	&	17	&	13	&	10	\\\hline
& Total	&	162	&	165	&	165	\\ \hline
		\end{tabular}
	\label{TAB:data_no}
\end{table}

To further assess the biological plausibility of the screened genes, 
we examined pathway annotations and disease associations available through the \textit{GeneCards} Suite. 
Specifically, for genes exhibiting at least a moderate direct association (G1--G3), 
we recorded pathway annotations and publications containing explicit references to Alzheimer's disease (AD).
The corresponding results are summarized in Table \ref{TAB:data_GO}, 
with genes grouped according to their disease associations reported in the \textit{MalaCards} database \citep{rappaport2017malacards}.
Notably, many of the prioritized genes show direct associations with AD or epilepsy (a common comorbidity of AD), 
while their inferred associations extend to a broader range of neurological and neurodegenerative disorders, 
including dementia, Down syndrome, nervous system diseases, and familial forms of AD.


It should be noted that the phenotype-driven validation is intended only as an external assessment of disease relevance 
and not as a definitive measure of screening accuracy. Because MMSE decline is studied within an AD-focused cohort, 
validation based on the phenotype term ``Alzheimer'' is not entirely independent of the underlying scientific objective. 
Nevertheless, the use of external knowledge bases provides a useful benchmark for comparing the disease relevance of competing screened gene sets.

\setlength{\LTcapwidth}{\textwidth}
\begin{small}
\begin{longtable}{l|l|lll|ll}
	

\caption{Pathway and disease association analyses of prioritized genes (group G1--G3) from the three screened gene sets,
along with their PR scores and numbers of publications (No. Pub) with '`Alzheimer' annotations.}  \\
\hline
Gene	&	Biological 	&	\multicolumn{3}{c|}{\texttt{Gene\_Set\_}}	&	No.	&	PR \\
symbol	&	Pathway	Type &	\texttt{0}	&	\texttt{1(cv)}	&	\texttt{1(I0)}&	Pub	&	Score	\\
\hline
\endfirsthead

\multicolumn{2}{c}{\tablename\ \thetable{} -- Continued from previous page} \\
\hline
Gene	&	Biological 	&	\multicolumn{3}{c|}{\texttt{Gene\_Set\_}}	&	No.	&	PR	\\
symbol	&	Pathway	Type &	\texttt{0}	&	\texttt{1(cv)}	&	\texttt{1(I0)}&	Pub	&	Score	\\
\hline
\endhead

\hline
\multicolumn{2}{r}{Continued on next page...} \\
\endfoot

\hline
\endlastfoot

\hline
\multicolumn{7}{l}{\textbf{Direct association with AD}}\\
VCP	&	Regulatory	&	--	&	--	&	\checkmark	&	39	&	33.29	\\
LMNB1	&	Disease	&	--	&	\checkmark	&	\checkmark	&	9	&	22.08	\\
SNAP25	&	Neuronal System	&	--	&	--	&	\checkmark	&	107	&	13.88	\\
CBS	&	Metabolism	&	--	&	\checkmark	&	\checkmark	&	45	&	9.96	\\
DNM1	&	Signaling, Regulatory 	&	--	&	\checkmark	&	\checkmark	&	11	&	6.94	\\
INSR	&	Disease, Signaling	&	\checkmark	&	--	&	--	&	33	&	10.78	\\
AKT2	&	Disease, Signaling	&	\checkmark	&	--	&	--	&	3	&	6.07	\\
\hline
\multicolumn{7}{l}{\textbf{Inferred association with AD}}\\
HMOX1	&	Regulatory	&	\checkmark	&	--	&	--	&	216	&	8.25	\\
BCL2	&	Signaling, Regulatory 	&	--	&	\checkmark	&	--	&	86	&	8.97	\\
DRD5	&	Signaling, Signal Transduction	&	\checkmark	&	--	&	--	&	16	&	4.18	\\
MSR1	&	Signaling	&	\checkmark	&	--	&	--	&	13	&	6.71	\\
NR4A1	&	Signaling	&	\checkmark	&	--	&	--	&	12	&	4.11	\\
CHGB	&	Metabolism	&	\checkmark	&	--	&	--	&	13	&	6.29	\\
PCMT1	&	Metabolism	&	--	&	\checkmark	&	\checkmark	&	12	&	5.61	\\
CTNNA1	&	Cell-Cell communication	&	--	&	\checkmark	&	\checkmark	&	10	&	4.47	\\
HNRNPDL	&	Immune System	&	--	&	\checkmark	&	\checkmark	&	7	&	3.87	\\
C1R	&	Immune System	&	--	&	\checkmark	&	\checkmark	&	5	&	4.57	\\
S100A12	&	Immune System	&	--	&	\checkmark	&	\checkmark	&	4	&	4.09	\\
\hline
\multicolumn{7}{l}{\textbf{No association with AD, but direct association with Epilepsy}} \\
KCNMA1	&	Neuronal System	&	\checkmark	&	--	&	--	&	5	&	6.07	\\
NLGN4X	&	Neuronal System	&	\checkmark	&	--	&	--	&	1	&	3.45	\\
AFF3	&	Neurological system process	&	--	&	\checkmark	&	\checkmark	&	2	&	2.51	\\
RREB1	&	Regulatory	&	\checkmark	&	\checkmark	&	--	&	1	&	2.51	\\
NFIX	&	Regulatory 	&	\checkmark	&	--	&	--	&	1	&	2.35	\\
TGIF1	&	Signaling, Regulatory 	&	--	&	\checkmark	&	\checkmark	&	4	&	6.11	\\
GRIA3	&	Signaling	&	\checkmark	&	--	&	--	&	6	&	6.15	\\
GRM1	&	Signaling	&	\checkmark	&	--	&	--	&	5	&	6.16	\\
PDE10A	&	Signaling, Signal Transduction	&	\checkmark	&	--	&	--	&	5	&	3.64	\\
ARL13B	&	Signal Transduction	&	\checkmark	&	--	&	--	&	2	&	2.96	\\
PDYN	&	Signal Transduction	&	--	&	\checkmark	&	\checkmark	&	3	&	6.29	\\
TAT	&	Metabolism	&	\checkmark	&	--	&	--	&	2	&	5.22	\\
LYRM7	&	Metabolism	&	--	&	\checkmark	&	\checkmark	&	2	&	2.71	\\
COX14	&	Metabolism	&	--	&	\checkmark	&	\checkmark	&	0	&	2.44	\\
EDC3	&	Metabolism of RNA	&	--	&	\checkmark	&	\checkmark	&	0	&	1.97	\\
SLC13A5	&	SLC-mediated transport	&	\checkmark	&	--	&	--	&	0	&	1.97	\\
SLC12A1	&	SLC-mediated transport	&	\checkmark	&	--	&	--	&	1	&	1.97	\\
MEGF10	&	Cell adhesion	&	\checkmark	&	\checkmark	&	\checkmark	&	3	&	3.71	\\
BCL11A	&	Chromatin organization	&	--	&	--	&	\checkmark	&	3	&	2.71	\\
LGI4	&	Developmental Biology	&	--	&	\checkmark	&	\checkmark	&	0	&	1.97	\\
ANK2	&	Developmental Biology	&	--	&	\checkmark	&	\checkmark	&	6	&	3.49	\\
PMM2	&	Disease	&	--	&	\checkmark	&	\checkmark	&	3	&	2.33	\\
ALG13	&	Disease	&	--	&	\checkmark	&	\checkmark	&	0	&	1.97	\\
RAB27A	&	Disease	&	--	&	\checkmark	&	\checkmark	&	5	&	3.16	\\
ASL		&	Disease	&	--	&	\checkmark	&	\checkmark	&	2	&	2.6	\\
WDR45B	&	Disease	&	--	&	\checkmark	&	\checkmark	&	2	&	2.33	\\
TANGO2	&	Disease	&	\checkmark	&	--	&	--	&	1	&	3.06	\\
\hline
\multicolumn{7}{l}{\textbf{No association with AD, but inferred association with similar diseases\footnote{Those having `Alzheimer' annotation}}}\\
ABCG2	&	Neurological agents	&	--	&	\checkmark	&	\checkmark	&	24	&	3.49	\\
CASK	&	Neuronal System	&	--	&	\checkmark	&	\checkmark	&	3	&	3.35	\\
CTBP2	&	Regulatory	&	--	&	\checkmark	&	\checkmark	&	3	&	2.21	\\
CLIC1	&	Signaling, Regulatory	&	--	&	\checkmark	&	\checkmark	&	5	&	4.38	\\
HMGN1	&	Signaling, Regulatory	&	--	&	\checkmark	&	\checkmark	&	2	&	2.2	\\
GRK5	&	Signaling, Signal Transduction	&	\checkmark	&	--	&	--	&	19	&	4.96	\\
CHEK1	&	Signaling, Signal Transduction	&	\checkmark	&	--	&	--	&	11	&	1.96	\\
USF2	&	Signal Transduction	&	\checkmark	&	--	&	--	&	2	&	1.96	\\
FMNL1	&	Signal Transduction	&	--	&	\checkmark	&	\checkmark	&	6	&	3.29	\\
CLTCL1	&	Vesicle-mediated transport	&	\checkmark	&	--	&	--	&	3	&	1.75	\\
PCSK9	&	Metabolism	&	\checkmark	&	--	&	--	&	37	&	6.59	\\
PIAS2	&	Metabolism	&	\checkmark	&	--	&	--	&	1	&	3.58	\\
CYP1A1	&	Metabolism,	&	--	&	\checkmark	&	\checkmark	&	6	&	3.96	\\
IL16	&	Immune System	&	\checkmark	&	--	&	--	&	3	&	5.24	\\
FTL	&	Disease	&	\checkmark	&	--	&	--	&	10	&	1.96	\\
RDX	&	Developmental Biology	&	--	&	\checkmark	&	\checkmark	&	3	&	1.97	\\
\hline
\multicolumn{7}{l}{\textbf{No association }}\\
RET	&	Signaling	&	\checkmark	&	--	&	--	&	8	&	1.97	\\
TCN2	&	Metabolism	&	--	&	\checkmark	&	\checkmark	&	5	&	4.76	\\
TGFBR2	&	Signal Transduction	&	--	&	\checkmark	&	\checkmark	&	5	&	3.97	\\
F5	&	Disease	&	--	&	--	&	\checkmark	&	4	&	3.77	\\
EFS	&	Regulatory	&	--	&	--	&	\checkmark	&	4	&	3.59	\\
RAB8B	&	Vesicle-mediated transport	&	--	&	--	&	\checkmark	&	5	&	2.02	\\
\label{TAB:data_GO}
\end{longtable}
\end{small}

\subsection{Results: Comparative Biological Relevance}

The external biological validation analyses reveal substantial differences in the disease relevance of 
the genes selected by the competing screening procedures. 
Although all three methods identify genes with some degree of association to neurological disorders, 
the proposed DPD-SISP procedures consistently prioritize genes with stronger documented evidence 
for involvement in AD and related neurodegenerative processes.

A first indication of this improvement is provided by the VarElect PR scores summarized in Table \ref{TAB:data_no}. 
The benchmark ML/REML-based SIS procedure identifies  only a single gene in the highest-confidence category (G1), 
namely \textit{INSR} (PR score = 10.78). In contrast, \texttt{Gene\_Set\_1(cv)} and \texttt{Gene\_Set\_1(I0)} 
contain three and four G1 genes, respectively.  The latter includes \textit{VCP}, \textit{LMNB1}, \textit{SNAP25}, and \textit{CBS}, 
with PR scores ranging from 9.96 to 33.29. 
The highest score among all screened genes is obtained by \textit{VCP} (33.29), exceeding the best benchmark score by more than threefold.
Moreover, the average PR score among G1 genes increases from 10.78 for benchmark \texttt{Gene\_Set\_0} to 13.67 
for \texttt{Gene\_Set\_1(cv)} and 19.80 for \texttt{Gene\_Set\_1(I0)}. 
Similar improvements are also observed in the G3 and G5 categories. 
In addition, the DPD-based procedures identify fewer genes without documented phenotype associations (G6), 
indicating greater concordance with existing AD-related knowledge bases.


The disease-association analysis in Table  \ref{TAB:data_GO} further highlights qualitative differences between the selected gene sets.
The benchmark method primarily identifies genes associated with broad neurological, metabolic, and signaling processes, 
including insulin signaling (\textit{INSR}, \textit{AKT2}), oxidative stress response (\textit{HMOX1}), lipid metabolism (\textit{PCSK9}), 
neurotransmitter signaling (\textit{GRM1}, \textit{DRD5}), and microglial scavenger activity (\textit{MSR1}). 
In contrast, several genes uniquely prioritized by the DPD-based procedures have more direct links to neurodegenerative phenotypes. 
Examples include \textit{VCP}, which has been implicated in dementia and proteinopathy-related disorders \citep{meyer2012vcp}; 
\textit{LMNB1}, which has been associated with neurodegeneration and dementia-related nuclear abnormalities \citep{koufi2023lamin}; 
\textit{SNAP25}, a recognized biomarker of synaptic degeneration in AD \citep{brinkmalm2014snap25}; 
and \textit{DNM1}, a key regulator of synaptic vesicle recycling and neuronal communication \citep{zolochevska2018postsynaptic}.
The recovery of these genes suggests that the proposed DPD-SISP procedures preferentially capture molecular mechanisms 
more directly related to neuronal dysfunction and neurodegeneration.

The pathway annotations further reinforce this interpretation. 
While \texttt{Gene\_Set\_0} contains a larger proportion of genes involved in broad metabolic, signaling, 
and stress-response processes, \texttt{Gene\_Set\_1(cv)} and particularly \texttt{Gene\_Set\_1(I0)} show greater enrichment for 
genes participating in biological systems widely implicated in AD pathogenesis.
For example, \textit{SNAP25}, \textit{DNM1}, \textit{ANK2}, and \textit{PDYN} are involved in synaptic transmission, 
vesicle trafficking, and neuronal communication, processes that are essential for cognitive function 
and are disrupted during AD progression. 
Similarly, \textit{VCP}, \textit{WDR45B}, and \textit{RAB27A} participate in intracellular protein quality-control 
and trafficking pathways, including ubiquitin-mediated degradation, autophagy, and vesicular transport. 
In addition, \textit{C1R}, \textit{S100A12}, \textit{HNRNPDL}, \textit{CLIC1}, and \textit{ABCG2} are involved 
in immune-system and inflammatory pathways that have increasingly been recognized as important contributors to AD pathogenesis. 
Collectively, these pathways are closely linked to neuronal dysfunction, impaired protein homeostasis, and chronic neuroinflammation, 
all of which are established drivers of AD progression \citep{heneka2015neuroinflammation,destrooper2016ad,long2019ad}.
The identification of multiple genes within these pathways therefore provides additional support for 
the biological relevance of the DPD-based screening results.


Among the two robust implementations, \texttt{Gene\_Set\_1(I0)} exhibits the strongest overall validation profile. 
It contains the largest number of highly significant AD-associated genes (G1), achieves the highest average PR scores, 
and yields the smallest number of genes lacking documented phenotype associations (G6). 
The I0-P proxy-based DPD-SISP also demonstrated the greatest stability under repeated subsampling analyses. 
Although the differences between the two DPD-SISP are modest relative to their shared improvement over the benchmark method, 
the available evidence suggests a slight advantage for the I0-P proxy in this application.

All these validation results derived from external knowledge bases consistently indicate that 
robust screening improves the prioritization of genes with established relevance to AD and related neurodegenerative mechanisms. 
These findings support the practical utility of the proposed DPD-SISP procedure for such ultrahigh-dimensional genomic studies.

\section{Concluding Remarks}

This paper develops a robust and computationally scalable framework for fixed-effects screening 
in ultrahigh-dimensional LMMs, addressing the dual challenges of complex dependence structures and data contamination. 
By integrating a proxy-based whitening strategy with marginal minimum DPD estimation, 
the proposed DPD-SISP procedure effectively separates dependence handling from robustness considerations within a unified framework. 
As a result, it achieves computational efficiency comparable to classical screening methods 
while substantially improving stability under data contamination and model misspecification. 
The simplicity of the marginal screening step further ensures scalability to ultrahigh-dimensional regimes 
encountered in modern applications.

We establish the sure screening property of DPD-SISP under general model settings, 
including non-Gaussian REs and error distributions, and allowing non-polynomial dimensionality. 
The analysis highlights the role of proxy matrix approximation in controlling the impact of dependence, 
and demonstrates that robust marginal estimation can reliably recover relevant variables under minimal signal strength conditions. 
The proposed method further integrates naturally with second-stage penalized estimation, 
enabling consistent variable selection and efficient parameter estimation in a robust manner.
%
%
We also explore several extensions to enhance practical applicability. 
These include conditional screening to account for pre-specified variable inclusion, and 
iterative refinement to alleviate correlation-induced masking. 
These extensions collectively broaden the applicability of the framework in real-world high-dimensional scenarios.

Despite these strengths, the performance of the proposed procedures depends on the quality of the proxy matrix approximation 
and careful tuning of the divergence parameter. Data-driven choices of both the proxy and tuning parameters, 
such as cross-validation-based strategies, are likely to improve performance across different dependence structures 
and contamination levels. Although the cross-validated proxy (cv-P) demonstrates clear advantages over simpler choices such as I0-P, 
particularly under strong dependence and contamination, further improvements may be achieved by incorporating structural information 
about the REs or by leveraging sparsity and low-rank approximations in complex settings, which we hope to explore in future research.
Similarly, the development of data-adaptive strategies for selecting the divergence tuning parameter, 
which governs the trade-off between robustness and efficiency, is another important avenue for future works. 
While moderate choices ($\alpha\approx 0.3$ or 0.5) perform well across a wide range of scenarios, as evidenced in the simulation studies, 
a principled and theoretically grounded selection mechanism could further enhance finite-sample performance. 

There are also several other directions for future research.
A promising direction is the extension of the proposed methodology to generalized and nonlinear mixed models, 
where non-Gaussian responses and nonlinear link functions introduce additional layers of complexity. 
Furthermore, investigating theoretical guarantees under weaker assumptions, 
including more general forms of dependence and contamination, as well as studying finite-sample behavior and optimal screening thresholds, 
would provide a deeper understanding of the method's practical performance.
Finally, integrating the proposed screening procedure with advanced post-screening inference techniques, 
including uncertainty quantification and reproducibility assessments, 
represents an important step towards fully robust high-dimensional modeling pipelines. 
Collectively, these directions would further strengthen the flexibility, reliability, 
and real-world applicability of robust screening methodologies for ultrahigh-dimensional mixed models.

\subsection*{Acknowledgments}

Data used in preparation of this article were obtained from the Alzheimer’s Disease Neuroimaging Initiative (ADNI) database 
(\url{https://adni.loni.usc.edu/data-samples/adni-data/}). 
The investigators within ADNI contributed to the design and implementation of ADNI and/or provided data 
but did not participate in analysis or writing of this report. 
A complete listing of ADNI investigators can be found at the ADNI acknowledgement page 
(\url{https://adni-lde.loni.usc.edu/wp-content/uploads/how_to_apply/ADNI_Acknowledgement_List.pdf}).

\appendix
\section{Proofs of the Results}
\label{APP}

\subsection{Proof of Theorem \ref{THM:SISP-population}}
Part (a) follows directly from the identifiability conditions in (A2) 
	and the fact that DPD defines a valid statistical divergence.
	This ensures Fisher consistency of the minimum DPD functional $\bm{\theta}_j^M$, 
	implying that the marginal slope is zero if and only if the corresponding estimating equation yields zero association.

	For Part (b), note that $\psi_\alpha$ is continuously differentiable. 
	So, applying the mean value theorem together with Assumption (A3), we obtain
	$$
	\left| \left\{\psi_\alpha(Y^*, \beta_{j0,\alpha}^M+ X_{j}^* \beta_{j1,\alpha}^M, \bm{\eta}_{j,\alpha}^M) 
	- \psi_\alpha(Y^*,\beta_{j0,\alpha}^M, \bm{\eta}_{j,\alpha}^M)\right\}\right|
	\leq C_D 	|\beta_{j1,\alpha}^M X_j^{*}|, ~~j\in\mathcal{I},
	$$
	where $0<C_D<\infty$ is the uniform upper bound on $\frac{\partial^2}{\partial\mu^2}V_\alpha(y, \mu, \bm{\eta})$. 
	Multiplying both sides by $|X_j^*|$ and taking expectations, we get
	$$
	C_D 	|\beta_{j1,\alpha}^M|E[X_j^{*2}] \geq \left| S_{j,\alpha}(\beta_{j0,\alpha}^M, \bm{\eta}_{j,\alpha}^M)\right|,
	$$
	since, by definition of the target parameter, 
	$$
	S_{j,\alpha}(\beta_{j0,\alpha}^M+ X_{j}^* \beta_{j1,\alpha}^M, \bm{\eta}_{j,\alpha}^M)  
	=E\left[\psi_\alpha(Y^*, \beta_{j0,\alpha}^M+ X_{j}^* \beta_{j1,\alpha}^M, \bm{\eta}_{j,\alpha}^M) X_j^*\right] =0.
	$$ 
	Further, Assumptions (A0)--(A1) implies (\ref{EQ:Ass1}).
	Combining this with the assumed lower bound on $S_{j,\alpha}$, we obtain 
	$$
	\left|\beta_{j1,\alpha}^M\right| \geq \frac{C_m}{C_XC_D}c_1 n^{-\kappa}, ~~\mbox{ for all }j\in\mathcal{S}_0.
	$$
	Thus, the desired result follows with the constant $c_2 = \frac{C_m c_1}{C_XC_D}>0$. 
\qed

\subsection{Proof of Theorem \ref{THM:SISP-ssp}}
\noindent\textbf{Proof of Part (a):}\\
	Let us define the event $\mathcal{E}_n = \left\{ \max_{j\in\mathcal{S}_0} \left|\widehat\beta_{j1,\alpha} - \beta_{j1}^M\right|
	\leq \frac{c_2}{2}n^{-\kappa}  \right\}$, where $c_2$ is as in (B3). 
	On $\mathcal{E}_n$, Assumption (B3) implies 
	$\min\limits_{j\in\mathcal{S}_0} \left|\widehat\beta_{j1,\alpha}\right| \geq  \frac{c_2}{2}n^{-\kappa} \geq  \gamma_n$,
	so that  $\mathcal{S}_0\subseteq \widehat{\mathcal{S}}_\alpha(\gamma_n)$. 
	Hence, we have
	\begin{eqnarray}
		P\left(\mathcal{S}_0\subseteq \widehat{\mathcal{S}}_\alpha(\gamma_n)\right)&=& 
		P\left(\min\limits_{j\in\mathcal{S}_0} \left|\widehat\beta_{j1,\alpha}\right| \geq  \gamma_n\right)
		\geq P(\mathcal{E}_n)
		\nonumber\\
		&=& 1 - P\left( \max_{j\in\mathcal{S}_0}\left|\widehat\beta_{j1,\alpha} - \beta_{j1}^P\right| > \frac{c_2}{4} n^{-\kappa}  \right) 
		- P\left(\max_{j\in\mathcal{S}_0}\left|\beta_{j1}^P - \beta_{j1}^M\right| > \frac{c_2}{4} n^{-\kappa}  \right) 
		\nonumber\\
		&\geq& 1 -  s  (R_n+\widetilde{R}_n)C_1,
	\end{eqnarray}
	for some $C_1>0$ and all sufficiently large $n$, by Assumptions (B1)--(B2) and a union bound of probability.\\
	
	\noindent\textbf{Proof of Part (b):}\\
	Under Assumption (B4), for any $\delta>0$, we obtain
	$$
	\left| \left\{ j \in\mathcal{I} : \left|\beta_{j1}^M\right| > \delta n^{-\kappa}\right\}\right|
	\leq O(n^{2\kappa}\lambda_{\max}(\bm{\Sigma}^*)).
	$$ 
	Now, on the event $\mathcal{E}_n^* = 
	\left\{ \max\limits_{j \in\mathcal{I}} \left|\widehat\beta_{j1,\alpha} - \beta_{j1}^M\right|\leq \frac{C}{2}n^{-\kappa}  \right\}$,
	we have 
	$$
	\left| \left\{ j\in\mathcal{I} : \left|\widehat\beta_{j1,\alpha}\right| > C n^{-\kappa} = \gamma_n\right\}\right|
	\leq \left| \left\{ j\in\mathcal{I} : \left|\beta_{j1,\alpha}^M\right| > \frac{C}{2} n^{-\kappa}\right\}\right|
	\leq O\left(n^{2\kappa}\lambda_{\max}(\bm{\Sigma}^*)\right).
	$$
	Therefore, by using Assumption (B1)--(B2) as before, we get 
	$$
	P\left(|\widehat{\mathcal{S}}_\alpha(\gamma_n)|\leq O(n^{2\kappa}\lambda_{\max}(\bm{\Sigma}^*))\right)
	\geq P(\mathcal{E}_n^*) \geq 1 - p (R_n+\widetilde{R}_n)C_2,
	$$
	for some $C_2>0$ and all sufficiently large $n$.
\qed

\subsection{Proof of Theorem \ref{THM:Proxy1}}	
	Under Assumption (A3), a Taylor expansion around the oracle parameter $\bm{\theta}_{j,\alpha}^{M}$ yields
	$$
	||\bm{\theta}_{j,\alpha}^{P} - \bm{\theta}_{j,\alpha}^{M}||_1 \leq c 
	E\left|\psi_\alpha(Y^*, \beta_{j0,\alpha}^{P} + {X}^*_{j}\beta_{j1,\alpha}^{P}, \bm{\eta}_{j,\alpha}^{P})\right|, 
	$$ 	
	for some constant $c>0$, since $E\left[\psi_\alpha(Y^*, \beta_{j0,\alpha}^{M} + \widetilde{X}^*_{j}\beta_{j1,\alpha}^{M}, \bm{\eta}_{j,\alpha}^{M})\right]=0$. 
	But we have\\ 
	$$
	E\left[\psi_\alpha(\widetilde{Y}, \beta_{j0,\alpha}^{P} + \widetilde{X}_{j}\beta_{j1,\alpha}^{P}, \bm{\eta}_{j,\alpha}^{P})\right]=0.
	$$
	So, another Taylor series expansion and using (A3), we further get 
	$$
	E\left|\psi_\alpha(Y^*, \beta_{j0,\alpha}^{P} + {X}^*_{j}\beta_{j1,\alpha}^{P}, \bm{\eta}_{j,\alpha}^{P})\right|
	\leq c_1 E|Y^* - \widetilde{Y}| + c_2 E|X_j^* - \widetilde{X}_j|,
	$$
	for some $c_1, C_2 >0$, independent of the choice of $j\in\mathcal{I}$. 
	Now, let us define the event $\mathcal{E}_n$ under which Assumption (P0) hold, 
	so that $P(\mathcal{E}_n) \geq 1 - O(\widetilde{R}_n)$. 
	However, on $\mathcal{E}_n$, we have 
	\begin{eqnarray*}
		E|Y^* - \widetilde{Y}| &=& \frac{1}{n} \sum_{i=1}^m ||\bm{y}_i^* - \widetilde{\bm{y}}_i||_1 
		= \frac{1}{n} \sum_{i=1}^m ||\left(\bm{\Sigma}_i^{-1/2} - \widehat{\bm{\Sigma}}_i^{-1/2}\right)\bm{y}_i||_1 
		\\
		&\leq& \frac{1}{n} \sum_{i=1}^m ||\bm{\Sigma}_i^{-1/2} - \widehat{\bm{\Sigma}}_i^{-1/2}||_1||\bm{y}_i||_1 
		= \frac{1}{n} \sum_{i=1}^m n_i ||\widehat{\bm{\Sigma}}_i^{-1/2}- \bm{\Sigma}_i^{-1/2}||_{op}||\bm{y}_i||_2
		\\
		&\leq & \frac{C_Y}{n} \sum_{i=1}^m n_i ||\widehat{\bm{\Sigma}}_i^{-1/2}- \bm{\Sigma}_i^{-1/2}||_{op}  ~~~~~~ \mbox{[by (A0)]}
		\\
		\\
		&\leq & \frac{C_Yc}{n} \sum_{i=1}^m n_i ||\widehat{\bm{\Sigma}}_i- \bm{\Sigma}_i||_{op}  ~~~~~~~~~~~~~ \mbox{[by (A1) and (P0)]}
		\\
		&\leq& O\left(n^{-\kappa}\right),~~~~~\mbox{[by (P0)]}
	\end{eqnarray*}
	and similarly 
	\begin{eqnarray*}
		E|X_j^* - \widetilde{X}_j| &\leq &  O\left(n^{-\kappa}\right),
		~~~\mbox{ uniformly in }j\in\mathcal{I}.
	\end{eqnarray*}
	Combining everything, we get 
	$$
	|\bm{\theta}_{j,\alpha}^{P} - \bm{\theta}_{j,\alpha}^{M}| \leq  \|\bm{\theta}_{j,\alpha}^{P} - \bm{\theta}_{j,\alpha}^{M}\|_1 \leq O(n^{-\kappa}), 
	$$ 	
	with probability at least $1 - O(\widetilde{R})$, uniformly over $j=1, \ldots, p$.
	This completes the proof.
\qed

\bibliographystyle{apalike}
\bibliography{Reference.bib}

\newpage

	\begin{center}
	{\Huge Supplementary Figures and Table }
\end{center}
\setcounter{figure}{0} 
\renewcommand{\thefigure}{S\arabic{figure}} 
\setcounter{table}{0} 
\renewcommand{\thetable}{S\arabic{table}} 

\begin{figure}[!h]
	\centering
	\subfloat[\tiny 5\% (C1) Contamination]{
		\includegraphics[page=2, width=0.3\textwidth]{Figures/S2_R1_Id_C1_5.pdf}
		\label{FIG:boxplot_Y}}
	~	
	\subfloat[\tiny 10\% (C1) Contamination]{
		\includegraphics[page=2, width=0.3\textwidth]{Figures/S2_R1_Id_C1_10.pdf}
		\label{FIG:boxplot_Y}}
	~	
	\subfloat[\tiny 20\% (C1) Contamination]{
		\includegraphics[page=2, width=0.3\textwidth]{Figures/S2_R1_Id_C1_20.pdf}
		\label{FIG:boxplot_Y}}
	\\	
	\subfloat[\tiny 5\% (C2) Contamination]{
		\includegraphics[page=2, width=0.3\textwidth]{Figures/S2_R1_Id_C2_5.pdf}
		\label{FIG:boxplot_Y}}
	~	
	\subfloat[\tiny 10\% (C2) Contamination]{
		\includegraphics[page=2, width=0.3\textwidth]{Figures/S2_R1_Id_C2_10.pdf}
		\label{FIG:boxplot_Y}}
	~	
	\subfloat[\tiny 20\% (C2) Contamination]{
		\includegraphics[page=2, width=0.3\textwidth]{Figures/S2_R1_Id_C2_20.pdf}
		\label{FIG:boxplot_Y}}
	\\	
	\subfloat[\tiny 5\% (C3) Contamination]{
		\includegraphics[page=2, width=0.3\textwidth]{Figures/S2_R1_Id_C3_5.pdf}
		\label{FIG:boxplot_Y}}
	~	
	\subfloat[\tiny 10\% (C3) Contamination]{
		\includegraphics[page=2, width=0.3\textwidth]{Figures/S2_R1_Id_C3_10.pdf}
		\label{FIG:boxplot_Y}}
	~	
	\subfloat[\tiny 20\% (C3) Contamination]{
		\includegraphics[page=2, width=0.3\textwidth]{Figures/S2_R1_Id_C3_20.pdf}
		\label{FIG:boxplot_Y}}
	\\	
	\subfloat[\tiny 5\% (C4) Contamination]{
		\includegraphics[page=2, width=0.3\textwidth]{Figures/S2_R1_Id_C4_5.pdf}
		\label{FIG:boxplot_Y}}
	~	
	\subfloat[\tiny 10\% (C4) Contamination]{
		\includegraphics[page=2, width=0.3\textwidth]{Figures/S2_R1_Id_C4_10.pdf}
		\label{FIG:boxplot_Y}}
	~	
	\subfloat[\tiny 20\% (C4) Contamination]{
		\includegraphics[page=2, width=0.3\textwidth]{Figures/S2_R1_Id_C4_20.pdf}
		\label{FIG:boxplot_Y}}
	\caption{Boxplots  (with overlaid sample means) of MinMS required under scenario (S2)$\times$(R1) 
		with $\bm\Sigma_x=\mathbb{I}$ and different types of contamination}
	\label{FIG:MMS_S2R1Id}
\end{figure}

\begin{figure}[!h]
	\centering
	\subfloat[\tiny 5\% (C1) Contamination]{
		\includegraphics[page=2, width=0.3\textwidth]{Figures/S1_R1_CS03_C1_5.pdf}
		\label{FIG:boxplot_Y}}
	~	
	\subfloat[\tiny 10\% (C1) Contamination]{
		\includegraphics[page=2, width=0.3\textwidth]{Figures/S1_R1_CS03_C1_10.pdf}
		\label{FIG:boxplot_Y}}
	~	
	\subfloat[\tiny 20\% (C1) Contamination]{
		\includegraphics[page=2, width=0.3\textwidth]{Figures/S1_R1_CS03_C1_20.pdf}
		\label{FIG:boxplot_Y}}
	\\	
	\subfloat[\tiny 5\% (C2) Contamination]{
		\includegraphics[page=2, width=0.3\textwidth]{Figures/S1_R1_CS03_C2_5.pdf}
		\label{FIG:boxplot_Y}}
	~	
	\subfloat[\tiny 10\% (C2) Contamination]{
		\includegraphics[page=2, width=0.3\textwidth]{Figures/S1_R1_CS03_C2_10.pdf}
		\label{FIG:boxplot_Y}}
	~	
	\subfloat[\tiny 20\% (C2) Contamination]{
		\includegraphics[page=2, width=0.3\textwidth]{Figures/S1_R1_CS03_C2_20.pdf}
		\label{FIG:boxplot_Y}}
	\\	
	\subfloat[\tiny 5\% (C3) Contamination]{
		\includegraphics[page=2, width=0.3\textwidth]{Figures/S1_R1_CS03_C3_5.pdf}
		\label{FIG:boxplot_Y}}
	~	
	\subfloat[\tiny 10\% (C3) Contamination]{
		\includegraphics[page=2, width=0.3\textwidth]{Figures/S1_R1_CS03_C3_10.pdf}
		\label{FIG:boxplot_Y}}
	~	
	\subfloat[\tiny 20\% (C3) Contamination]{
		\includegraphics[page=2, width=0.3\textwidth]{Figures/S1_R1_CS03_C3_20.pdf}
		\label{FIG:boxplot_Y}}
	\\	
	\subfloat[\tiny 5\% (C4) Contamination]{
		\includegraphics[page=2, width=0.3\textwidth]{Figures/S1_R1_CS03_C4_5.pdf}
		\label{FIG:boxplot_Y}}
	~	
	\subfloat[\tiny 10\% (C4) Contamination]{
		\includegraphics[page=2, width=0.3\textwidth]{Figures/S1_R1_CS03_C4_10.pdf}
		\label{FIG:boxplot_Y}}
	~	
	\subfloat[\tiny 20\% (C4) Contamination]{
		\includegraphics[page=2, width=0.3\textwidth]{Figures/S1_R1_CS03_C4_20.pdf}
		\label{FIG:boxplot_Y}}
	\caption{Boxplots  (with overlaid sample means) of MinMS required for sure screening under scenario (S1)$\times$(R1) with $\bm\Sigma_x$ as CS(0.3) and different types of contamination}
	\label{FIG:MMS_S1R1CS}
\end{figure}

\begin{figure}[!h]
	\centering
	\subfloat[\tiny 5\% (C1) Contamination]{
		\includegraphics[page=2, width=0.3\textwidth]{Figures/S2_R1_CS03_C1_5.pdf}
		\label{FIG:boxplot_Y}}
	~	
	\subfloat[\tiny 10\% (C1) Contamination]{
		\includegraphics[page=2, width=0.3\textwidth]{Figures/S2_R1_CS03_C1_10.pdf}
		\label{FIG:boxplot_Y}}
	~	
	\subfloat[\tiny 20\% (C1) Contamination]{
		\includegraphics[page=2, width=0.3\textwidth]{Figures/S2_R1_CS03_C1_20.pdf}
		\label{FIG:boxplot_Y}}
	\\	
	\subfloat[\tiny 5\% (C2) Contamination]{
		\includegraphics[page=2, width=0.3\textwidth]{Figures/S2_R1_CS03_C2_5.pdf}
		\label{FIG:boxplot_Y}}
	~	
	\subfloat[\tiny 10\% (C2) Contamination]{
		\includegraphics[page=2, width=0.3\textwidth]{Figures/S2_R1_CS03_C2_10.pdf}
		\label{FIG:boxplot_Y}}
	~	
	\subfloat[\tiny 20\% (C2) Contamination]{
		\includegraphics[page=2, width=0.3\textwidth]{Figures/S2_R1_CS03_C2_20.pdf}
		\label{FIG:boxplot_Y}}
	\\	
	\subfloat[\tiny 5\% (C3) Contamination]{
		\includegraphics[page=2, width=0.3\textwidth]{Figures/S2_R1_CS03_C3_5.pdf}
		\label{FIG:boxplot_Y}}
	~	
	\subfloat[\tiny 10\% (C3) Contamination]{
		\includegraphics[page=2, width=0.3\textwidth]{Figures/S2_R1_CS03_C3_10.pdf}
		\label{FIG:boxplot_Y}}
	~	
	\subfloat[\tiny 20\% (C3) Contamination]{
		\includegraphics[page=2, width=0.3\textwidth]{Figures/S2_R1_CS03_C3_20.pdf}
		\label{FIG:boxplot_Y}}
	\\	
	\subfloat[\tiny 5\% (C4) Contamination]{
		\includegraphics[page=2, width=0.3\textwidth]{Figures/S2_R1_CS03_C4_5.pdf}
		\label{FIG:boxplot_Y}}
	~	
	\subfloat[\tiny 10\% (C4) Contamination]{
		\includegraphics[page=2, width=0.3\textwidth]{Figures/S2_R1_CS03_C4_10.pdf}
		\label{FIG:boxplot_Y}}
	~	
	\subfloat[\tiny 20\% (C4) Contamination]{
		\includegraphics[page=2, width=0.3\textwidth]{Figures/S2_R1_CS03_C4_20.pdf}
		\label{FIG:boxplot_Y}}
	\caption{Boxplots  (with overlaid sample means) of MinMS required for sure screening under scenario (S2)$\times$(R1) with $\bm\Sigma_x$ as CS(0.3) and different types of contamination}
	\label{FIG:MMS_S2R1CS}
\end{figure}

\begin{figure}[!h]
	\centering
	\subfloat[\tiny 5\% (C1) Contamination]{
		\includegraphics[page=2, width=0.3\textwidth]{Figures/S2_R0_T03_C1_5.pdf}
		\label{FIG:boxplot_Y}}
	~	
	\subfloat[\tiny 10\% (C1) Contamination]{
		\includegraphics[page=2, width=0.3\textwidth]{Figures/S2_R0_T03_C1_10.pdf}
		\label{FIG:boxplot_Y}}
	~	
	\subfloat[\tiny 20\% (C1) Contamination]{
		\includegraphics[page=2, width=0.3\textwidth]{Figures/S2_R0_T03_C1_20.pdf}
		\label{FIG:boxplot_Y}}
	\\	
	\subfloat[\tiny 5\% (C2) Contamination]{
		\includegraphics[page=2, width=0.3\textwidth]{Figures/S2_R0_T03_C2_5.pdf}
		\label{FIG:boxplot_Y}}
	~	
	\subfloat[\tiny 10\% (C2) Contamination]{
		\includegraphics[page=2, width=0.3\textwidth]{Figures/S2_R0_T03_C2_10.pdf}
		\label{FIG:boxplot_Y}}
	~	
	\subfloat[\tiny 20\% (C2) Contamination]{
		\includegraphics[page=2, width=0.3\textwidth]{Figures/S2_R1_T03_C2_20.pdf}
		\label{FIG:boxplot_Y}}
	\\	
	\subfloat[\tiny 5\% (C3) Contamination]{
		\includegraphics[page=2, width=0.3\textwidth]{Figures/S2_R0_T03_C3_5.pdf}
		\label{FIG:boxplot_Y}}
	~	
	\subfloat[\tiny 10\% (C3) Contamination]{
		\includegraphics[page=2, width=0.3\textwidth]{Figures/S2_R0_T03_C3_10.pdf}
		\label{FIG:boxplot_Y}}
	~	
	\subfloat[\tiny 20\% (C3) Contamination]{
		\includegraphics[page=2, width=0.3\textwidth]{Figures/S2_R0_T03_C3_20.pdf}
		\label{FIG:boxplot_Y}}
	\\	
	\subfloat[\tiny 5\% (C4) Contamination]{
		\includegraphics[page=2, width=0.3\textwidth]{Figures/S2_R0_T03_C4_5.pdf}
		\label{FIG:boxplot_Y}}
	~	
	\subfloat[\tiny 10\% (C4) Contamination]{
		\includegraphics[page=2, width=0.3\textwidth]{Figures/S2_R0_T03_C4_10.pdf}
		\label{FIG:boxplot_Y}}
	~	
	\subfloat[\tiny 20\% (C4) Contamination]{
		\includegraphics[page=2, width=0.3\textwidth]{Figures/S2_R0_T03_C4_20.pdf}
		\label{FIG:boxplot_Y}}
	\caption{Boxplots  (with overlaid sample means) of MinMS required for sure screening under scenario (S2)$\times$(R0) with $\bm\Sigma_x$ as T(0.3) and different types of contamination}
	\label{FIG:MMS_S2R0T}
\end{figure}

\begin{figure}[!h]
	\centering
	\subfloat[\tiny 5\% (C1) Contamination]{
		\includegraphics[page=2, width=0.3\textwidth]{Figures/S2_R1_T03_C1_5.pdf}
		\label{FIG:boxplot_Y}}
	~	
	\subfloat[\tiny 10\% (C1) Contamination]{
		\includegraphics[page=2, width=0.3\textwidth]{Figures/S2_R1_T03_C1_10.pdf}
		\label{FIG:boxplot_Y}}
	~	
	\subfloat[\tiny 20\% (C1) Contamination]{
		\includegraphics[page=2, width=0.3\textwidth]{Figures/S2_R1_T03_C1_20.pdf}
		\label{FIG:boxplot_Y}}
	\\	
	\subfloat[\tiny 5\% (C2) Contamination]{
		\includegraphics[page=2, width=0.3\textwidth]{Figures/S2_R1_T03_C2_5.pdf}
		\label{FIG:boxplot_Y}}
	~	
	\subfloat[\tiny 10\% (C2) Contamination]{
		\includegraphics[page=2, width=0.3\textwidth]{Figures/S2_R1_T03_C2_10.pdf}
		\label{FIG:boxplot_Y}}
	~	
	\subfloat[\tiny 20\% (C2) Contamination]{
		\includegraphics[page=2, width=0.3\textwidth]{Figures/S2_R1_T03_C2_20.pdf}
		\label{FIG:boxplot_Y}}
	\\	
	\subfloat[\tiny 5\% (C3) Contamination]{
		\includegraphics[page=2, width=0.3\textwidth]{Figures/S2_R1_T03_C3_5.pdf}
		\label{FIG:boxplot_Y}}
	~	
	\subfloat[\tiny 10\% (C3) Contamination]{
		\includegraphics[page=2, width=0.3\textwidth]{Figures/S2_R1_T03_C3_10.pdf}
		\label{FIG:boxplot_Y}}
	~	
	\subfloat[\tiny 20\% (C3) Contamination]{
		\includegraphics[page=2, width=0.3\textwidth]{Figures/S2_R1_T03_C3_20.pdf}
		\label{FIG:boxplot_Y}}
	\\	
	\subfloat[\tiny 5\% (C4) Contamination]{
		\includegraphics[page=2, width=0.3\textwidth]{Figures/S2_R1_T03_C4_5.pdf}
		\label{FIG:boxplot_Y}}
	~	
	\subfloat[\tiny 10\% (C4) Contamination]{
		\includegraphics[page=2, width=0.3\textwidth]{Figures/S2_R1_T03_C4_10.pdf}
		\label{FIG:boxplot_Y}}
	~	
	\subfloat[\tiny 20\% (C4) Contamination]{
		\includegraphics[page=2, width=0.3\textwidth]{Figures/S2_R1_T03_C4_20.pdf}
		\label{FIG:boxplot_Y}}
	\caption{Boxplots  (with overlaid sample means) of MinMS required for sure screening under scenario (S2)$\times$(R1) with $\bm\Sigma_x$ as T(0.3) and different types of contamination}
	\label{FIG:MMS_S2R1T}
\end{figure}


\begin{figure}[!h]
	\centering
	\subfloat[\tiny 5\% (C1) Contamination]{
		\includegraphics[page=3, width=0.3\textwidth]{Figures/S1_R1_Id_C1_5.pdf}
		\label{FIG:boxplot_Y}}
	~	
	\subfloat[\tiny 10\% (C1) Contamination]{
		\includegraphics[page=3, width=0.3\textwidth]{Figures/S1_R1_Id_C1_10.pdf}
		\label{FIG:boxplot_Y}}
	~	
	\subfloat[\tiny 20\% (C1) Contamination]{
		\includegraphics[page=3, width=0.3\textwidth]{Figures/S1_R1_Id_C1_20.pdf}
		\label{FIG:boxplot_Y}}
	\\	
	\subfloat[\tiny 5\% (C2) Contamination]{
		\includegraphics[page=3, width=0.3\textwidth]{Figures/S1_R1_Id_C2_5.pdf}
		\label{FIG:boxplot_Y}}
	~	
	\subfloat[\tiny 10\% (C2) Contamination]{
		\includegraphics[page=3, width=0.3\textwidth]{Figures/S1_R1_Id_C2_10.pdf}
		\label{FIG:boxplot_Y}}
	~	
	\subfloat[\tiny 20\% (C2) Contamination]{
		\includegraphics[page=3, width=0.3\textwidth]{Figures/S1_R1_Id_C2_20.pdf}
		\label{FIG:boxplot_Y}}
	\\	
	\subfloat[\tiny 5\% (C3) Contamination]{
		\includegraphics[page=3, width=0.3\textwidth]{Figures/S1_R1_Id_C3_5.pdf}
		\label{FIG:boxplot_Y}}
	~	
	\subfloat[\tiny 10\% (C3) Contamination]{
		\includegraphics[page=3, width=0.3\textwidth]{Figures/S1_R1_Id_C3_10.pdf}
		\label{FIG:boxplot_Y}}
	~	
	\subfloat[\tiny 20\% (C3) Contamination]{
		\includegraphics[page=3, width=0.3\textwidth]{Figures/S1_R1_Id_C3_20.pdf}
		\label{FIG:boxplot_Y}}
	\\	
	\subfloat[\tiny 5\% (C4) Contamination]{
		\includegraphics[page=3, width=0.3\textwidth]{Figures/S1_R1_Id_C4_5.pdf}
		\label{FIG:boxplot_Y}}
	~	
	\subfloat[\tiny 10\% (C4) Contamination]{
		\includegraphics[page=3, width=0.3\textwidth]{Figures/S1_R1_Id_C4_10.pdf}
		\label{FIG:boxplot_Y}}
	~	
	\subfloat[\tiny 20\% (C4) Contamination]{
		\includegraphics[page=3, width=0.3\textwidth]{Figures/S1_R1_Id_C4_20.pdf}
		\label{FIG:boxplot_Y}}
	\caption{Boxplots  (with overlaid sample means) of computational times (in seconds) of the screening methods  under scenario (S1)$\times$(R1) with $\bm\Sigma_x=\mathbb{I}$ and different types of contamination}
	\label{FIG:time_Csp1}
\end{figure}

\begin{figure}[!h]
	\centering
	\subfloat[\tiny 5\% (C1) Contamination]{
		\includegraphics[page=3, width=0.3\textwidth]{Figures/S2_R1_Id_C1_5.pdf}
		\label{FIG:boxplot_Y}}
	~	
	\subfloat[\tiny 10\% (C1) Contamination]{
		\includegraphics[page=3, width=0.3\textwidth]{Figures/S2_R1_Id_C1_10.pdf}
		\label{FIG:boxplot_Y}}
	~	
	\subfloat[\tiny 20\% (C1) Contamination]{
		\includegraphics[page=3, width=0.3\textwidth]{Figures/S2_R1_Id_C1_20.pdf}
		\label{FIG:boxplot_Y}}
	\\	
	\subfloat[\tiny 5\% (C2) Contamination]{
		\includegraphics[page=3, width=0.3\textwidth]{Figures/S2_R1_Id_C2_5.pdf}
		\label{FIG:boxplot_Y}}
	~	
	\subfloat[\tiny 10\% (C2) Contamination]{
		\includegraphics[page=3, width=0.3\textwidth]{Figures/S2_R1_Id_C2_10.pdf}
		\label{FIG:boxplot_Y}}
	~	
	\subfloat[\tiny 20\% (C2) Contamination]{
		\includegraphics[page=3, width=0.3\textwidth]{Figures/S2_R1_Id_C2_20.pdf}
		\label{FIG:boxplot_Y}}
	\\	
	\subfloat[\tiny 5\% (C3) Contamination]{
		\includegraphics[page=3, width=0.3\textwidth]{Figures/S2_R1_Id_C3_5.pdf}
		\label{FIG:boxplot_Y}}
	~	
	\subfloat[\tiny 10\% (C3) Contamination]{
		\includegraphics[page=3, width=0.3\textwidth]{Figures/S2_R1_Id_C3_10.pdf}
		\label{FIG:boxplot_Y}}
	~	
	\subfloat[\tiny 20\% (C3) Contamination]{
		\includegraphics[page=3, width=0.3\textwidth]{Figures/S2_R1_Id_C3_20.pdf}
		\label{FIG:boxplot_Y}}
	\\	
	\subfloat[\tiny 5\% (C4) Contamination]{
		\includegraphics[page=3, width=0.3\textwidth]{Figures/S2_R1_Id_C4_5.pdf}
		\label{FIG:boxplot_Y}}
	~	
	\subfloat[\tiny 10\% (C4) Contamination]{
		\includegraphics[page=3, width=0.3\textwidth]{Figures/S2_R1_Id_C4_10.pdf}
		\label{FIG:boxplot_Y}}
	~	
	\subfloat[\tiny 20\% (C4) Contamination]{
		\includegraphics[page=3, width=0.3\textwidth]{Figures/S2_R1_Id_C4_20.pdf}
		\label{FIG:boxplot_Y}}
	\caption{Boxplots  (with overlaid sample means) of computational times (in seconds) of the screening methods  under scenario (S2)$\times$(R1) 
		with $\bm\Sigma_x=\mathbb{I}$ and different types of contamination}
	\label{FIG:time_S2R1Id}
\end{figure}

\begin{figure}[!h]
	\centering
	\subfloat[\tiny 5\% (C1) Contamination]{
		\includegraphics[page=3, width=0.3\textwidth]{Figures/S1_R1_CS03_C1_5.pdf}
		\label{FIG:boxplot_Y}}
	~	
	\subfloat[\tiny 10\% (C1) Contamination]{
		\includegraphics[page=3, width=0.3\textwidth]{Figures/S1_R1_CS03_C1_10.pdf}
		\label{FIG:boxplot_Y}}
	~	
	\subfloat[\tiny 20\% (C1) Contamination]{
		\includegraphics[page=3, width=0.3\textwidth]{Figures/S1_R1_CS03_C1_20.pdf}
		\label{FIG:boxplot_Y}}
	\\	
	\subfloat[\tiny 5\% (C2) Contamination]{
		\includegraphics[page=3, width=0.3\textwidth]{Figures/S1_R1_CS03_C2_5.pdf}
		\label{FIG:boxplot_Y}}
	~	
	\subfloat[\tiny 10\% (C2) Contamination]{
		\includegraphics[page=3, width=0.3\textwidth]{Figures/S1_R1_CS03_C2_10.pdf}
		\label{FIG:boxplot_Y}}
	~	
	\subfloat[\tiny 20\% (C2) Contamination]{
		\includegraphics[page=3, width=0.3\textwidth]{Figures/S1_R1_CS03_C2_20.pdf}
		\label{FIG:boxplot_Y}}
	\\	
	\subfloat[\tiny 5\% (C3) Contamination]{
		\includegraphics[page=3, width=0.3\textwidth]{Figures/S1_R1_CS03_C3_5.pdf}
		\label{FIG:boxplot_Y}}
	~	
	\subfloat[\tiny 10\% (C3) Contamination]{
		\includegraphics[page=3, width=0.3\textwidth]{Figures/S1_R1_CS03_C3_10.pdf}
		\label{FIG:boxplot_Y}}
	~	
	\subfloat[\tiny 20\% (C3) Contamination]{
		\includegraphics[page=3, width=0.3\textwidth]{Figures/S1_R1_CS03_C3_20.pdf}
		\label{FIG:boxplot_Y}}
	\\	
	\subfloat[\tiny 5\% (C4) Contamination]{
		\includegraphics[page=3, width=0.3\textwidth]{Figures/S1_R1_CS03_C4_5.pdf}
		\label{FIG:boxplot_Y}}
	~	
	\subfloat[\tiny 10\% (C4) Contamination]{
		\includegraphics[page=3, width=0.3\textwidth]{Figures/S1_R1_CS03_C4_10.pdf}
		\label{FIG:boxplot_Y}}
	~	
	\subfloat[\tiny 20\% (C4) Contamination]{
		\includegraphics[page=3, width=0.3\textwidth]{Figures/S1_R1_CS03_C4_20.pdf}
		\label{FIG:boxplot_Y}}
	\caption{Boxplots  (with overlaid sample means) of computational times (in seconds) of the screening methods  under scenario (S1)$\times$(R1) with $\bm\Sigma_x$ as CS(0.3) and different types of contamination}
	\label{FIG:time_S1R1CS}
\end{figure}

\begin{figure}[!h]
	\centering
	\subfloat[\tiny 5\% (C1) Contamination]{
		\includegraphics[page=3, width=0.3\textwidth]{Figures/S2_R0_CS03_C1_5.pdf}
		\label{FIG:boxplot_Y}}
	~	
	\subfloat[\tiny 10\% (C1) Contamination]{
		\includegraphics[page=3, width=0.3\textwidth]{Figures/S2_R0_CS03_C1_10.pdf}
		\label{FIG:boxplot_Y}}
	~	
	\subfloat[\tiny 20\% (C1) Contamination]{
		\includegraphics[page=3, width=0.3\textwidth]{Figures/S2_R0_CS03_C1_20.pdf}
		\label{FIG:boxplot_Y}}
	\\	
	\subfloat[\tiny 5\% (C2) Contamination]{
		\includegraphics[page=3, width=0.3\textwidth]{Figures/S2_R0_CS03_C2_5.pdf}
		\label{FIG:boxplot_Y}}
	~	
	\subfloat[\tiny 10\% (C2) Contamination]{
		\includegraphics[page=3, width=0.3\textwidth]{Figures/S2_R0_CS03_C2_10.pdf}
		\label{FIG:boxplot_Y}}
	~	
	\subfloat[\tiny 20\% (C2) Contamination]{
		\includegraphics[page=3, width=0.3\textwidth]{Figures/S2_R0_CS03_C2_20.pdf}
		\label{FIG:boxplot_Y}}
	\\	
	\subfloat[\tiny 5\% (C3) Contamination]{
		\includegraphics[page=3, width=0.3\textwidth]{Figures/S2_R0_CS03_C3_5.pdf}
		\label{FIG:boxplot_Y}}
	~	
	\subfloat[\tiny 10\% (C3) Contamination]{
		\includegraphics[page=3, width=0.3\textwidth]{Figures/S2_R0_CS03_C3_10.pdf}
		\label{FIG:boxplot_Y}}
	~	
	\subfloat[\tiny 20\% (C3) Contamination]{
		\includegraphics[page=3, width=0.3\textwidth]{Figures/S2_R0_CS03_C3_20.pdf}
		\label{FIG:boxplot_Y}}
	\\	
	\subfloat[\tiny 5\% (C4) Contamination]{
		\includegraphics[page=3, width=0.3\textwidth]{Figures/S2_R0_CS03_C4_5.pdf}
		\label{FIG:boxplot_Y}}
	~	
	\subfloat[\tiny 10\% (C4) Contamination]{
		\includegraphics[page=3, width=0.3\textwidth]{Figures/S2_R0_CS03_C4_10.pdf}
		\label{FIG:boxplot_Y}}
	~	
	\subfloat[\tiny 20\% (C4) Contamination]{
		\includegraphics[page=3, width=0.3\textwidth]{Figures/S2_R0_CS03_C4_20.pdf}
		\label{FIG:boxplot_Y}}
	\caption{Boxplots  (with overlaid sample means) of computational times (in seconds) of the screening methods  under scenario (S2)$\times$(R0) with $\bm\Sigma_x$ as CS(0.3) and different types of contamination}
	\label{FIG:time_Csp3}
\end{figure}

\begin{figure}[!h]
	\centering
	\subfloat[\tiny 5\% (C1) Contamination]{
		\includegraphics[page=3, width=0.3\textwidth]{Figures/S2_R1_CS03_C1_5.pdf}
		\label{FIG:boxplot_Y}}
	~	
	\subfloat[\tiny 10\% (C1) Contamination]{
		\includegraphics[page=3, width=0.3\textwidth]{Figures/S2_R1_CS03_C1_10.pdf}
		\label{FIG:boxplot_Y}}
	~	
	\subfloat[\tiny 20\% (C1) Contamination]{
		\includegraphics[page=3, width=0.3\textwidth]{Figures/S2_R1_CS03_C1_20.pdf}
		\label{FIG:boxplot_Y}}
	\\	
	\subfloat[\tiny 5\% (C2) Contamination]{
		\includegraphics[page=3, width=0.3\textwidth]{Figures/S2_R1_CS03_C2_5.pdf}
		\label{FIG:boxplot_Y}}
	~	
	\subfloat[\tiny 10\% (C2) Contamination]{
		\includegraphics[page=3, width=0.3\textwidth]{Figures/S2_R1_CS03_C2_10.pdf}
		\label{FIG:boxplot_Y}}
	~	
	\subfloat[\tiny 20\% (C2) Contamination]{
		\includegraphics[page=3, width=0.3\textwidth]{Figures/S2_R1_CS03_C2_20.pdf}
		\label{FIG:boxplot_Y}}
	\\	
	\subfloat[\tiny 5\% (C3) Contamination]{
		\includegraphics[page=3, width=0.3\textwidth]{Figures/S2_R1_CS03_C3_5.pdf}
		\label{FIG:boxplot_Y}}
	~	
	\subfloat[\tiny 10\% (C3) Contamination]{
		\includegraphics[page=3, width=0.3\textwidth]{Figures/S2_R1_CS03_C3_10.pdf}
		\label{FIG:boxplot_Y}}
	~	
	\subfloat[\tiny 20\% (C3) Contamination]{
		\includegraphics[page=3, width=0.3\textwidth]{Figures/S2_R1_CS03_C3_20.pdf}
		\label{FIG:boxplot_Y}}
	\\	
	\subfloat[\tiny 5\% (C4) Contamination]{
		\includegraphics[page=3, width=0.3\textwidth]{Figures/S2_R1_CS03_C4_5.pdf}
		\label{FIG:boxplot_Y}}
	~	
	\subfloat[\tiny 10\% (C4) Contamination]{
		\includegraphics[page=3, width=0.3\textwidth]{Figures/S2_R1_CS03_C4_10.pdf}
		\label{FIG:boxplot_Y}}
	~	
	\subfloat[\tiny 20\% (C4) Contamination]{
		\includegraphics[page=3, width=0.3\textwidth]{Figures/S2_R1_CS03_C4_20.pdf}
		\label{FIG:boxplot_Y}}
	\caption{Boxplots  (with overlaid sample means) of computational times (in seconds) of the screening methods  under scenario (S2)$\times$(R1) with $\bm\Sigma_x$ as CS(0.3) and different types of contamination}
	\label{FIG:time_S2R1CS}
\end{figure}

\begin{figure}[!h]
	\centering
	\subfloat[\tiny 5\% (C1) Contamination]{
		\includegraphics[page=3, width=0.3\textwidth]{Figures/S1_R1_T03_C1_5.pdf}
		\label{FIG:boxplot_Y}}
	~	
	\subfloat[\tiny 10\% (C1) Contamination]{
		\includegraphics[page=3, width=0.3\textwidth]{Figures/S1_R1_T03_C1_10.pdf}
		\label{FIG:boxplot_Y}}
	~	
	\subfloat[\tiny 20\% (C1) Contamination]{
		\includegraphics[page=3, width=0.3\textwidth]{Figures/S1_R1_T03_C1_20.pdf}
		\label{FIG:boxplot_Y}}
	\\	
	\subfloat[\tiny 5\% (C2) Contamination]{
		\includegraphics[page=3, width=0.3\textwidth]{Figures/S1_R1_T03_C2_5.pdf}
		\label{FIG:boxplot_Y}}
	~	
	\subfloat[\tiny 10\% (C2) Contamination]{
		\includegraphics[page=3, width=0.3\textwidth]{Figures/S1_R1_T03_C2_10.pdf}
		\label{FIG:boxplot_Y}}
	~	
	\subfloat[\tiny 20\% (C2) Contamination]{
		\includegraphics[page=3, width=0.3\textwidth]{Figures/S1_R1_T03_C2_20.pdf}
		\label{FIG:boxplot_Y}}
	\\	
	\subfloat[\tiny 5\% (C3) Contamination]{
		\includegraphics[page=3, width=0.3\textwidth]{Figures/S1_R1_T03_C3_5.pdf}
		\label{FIG:boxplot_Y}}
	~	
	\subfloat[\tiny 10\% (C3) Contamination]{
		\includegraphics[page=3, width=0.3\textwidth]{Figures/S1_R1_T03_C3_10.pdf}
		\label{FIG:boxplot_Y}}
	~	
	\subfloat[\tiny 20\% (C3) Contamination]{
		\includegraphics[page=3, width=0.3\textwidth]{Figures/S1_R1_T03_C3_20.pdf}
		\label{FIG:boxplot_Y}}
	\\	
	\subfloat[\tiny 5\% (C4) Contamination]{
		\includegraphics[page=3, width=0.3\textwidth]{Figures/S1_R1_T03_C4_5.pdf}
		\label{FIG:boxplot_Y}}
	~	
	\subfloat[\tiny 10\% (C4) Contamination]{
		\includegraphics[page=3, width=0.3\textwidth]{Figures/S1_R1_T03_C4_10.pdf}
		\label{FIG:boxplot_Y}}
	~	
	\subfloat[\tiny 20\% (C4) Contamination]{
		\includegraphics[page=3, width=0.3\textwidth]{Figures/S1_R1_T03_C4_20.pdf}
		\label{FIG:boxplot_Y}}
	\caption{Boxplots  (with overlaid sample means) of computational times (in seconds) of the screening methods  under scenario (S1)$\times$(R1) with $\bm\Sigma_x$ as T(0.3) and different types of contamination}
	\label{FIG:time_Csp2}
\end{figure}

\begin{figure}[!h]
	\centering
	\subfloat[\tiny 5\% (C1) Contamination]{
		\includegraphics[page=3, width=0.3\textwidth]{Figures/S2_R0_T03_C1_5.pdf}
		\label{FIG:boxplot_Y}}
	~	
	\subfloat[\tiny 10\% (C1) Contamination]{
		\includegraphics[page=3, width=0.3\textwidth]{Figures/S2_R0_T03_C1_10.pdf}
		\label{FIG:boxplot_Y}}
	~	
	\subfloat[\tiny 20\% (C1) Contamination]{
		\includegraphics[page=3, width=0.3\textwidth]{Figures/S2_R0_T03_C1_20.pdf}
		\label{FIG:boxplot_Y}}
	\\	
	\subfloat[\tiny 5\% (C2) Contamination]{
		\includegraphics[page=3, width=0.3\textwidth]{Figures/S2_R0_T03_C2_5.pdf}
		\label{FIG:boxplot_Y}}
	~	
	\subfloat[\tiny 10\% (C2) Contamination]{
		\includegraphics[page=3, width=0.3\textwidth]{Figures/S2_R0_T03_C2_10.pdf}
		\label{FIG:boxplot_Y}}
	~	
	\subfloat[\tiny 20\% (C2) Contamination]{
		\includegraphics[page=3, width=0.3\textwidth]{Figures/S2_R1_T03_C2_20.pdf}
		\label{FIG:boxplot_Y}}
	\\	
	\subfloat[\tiny 5\% (C3) Contamination]{
		\includegraphics[page=3, width=0.3\textwidth]{Figures/S2_R0_T03_C3_5.pdf}
		\label{FIG:boxplot_Y}}
	~	
	\subfloat[\tiny 10\% (C3) Contamination]{
		\includegraphics[page=3, width=0.3\textwidth]{Figures/S2_R0_T03_C3_10.pdf}
		\label{FIG:boxplot_Y}}
	~	
	\subfloat[\tiny 20\% (C3) Contamination]{
		\includegraphics[page=3, width=0.3\textwidth]{Figures/S2_R0_T03_C3_20.pdf}
		\label{FIG:boxplot_Y}}
	\\	
	\subfloat[\tiny 5\% (C4) Contamination]{
		\includegraphics[page=3, width=0.3\textwidth]{Figures/S2_R0_T03_C4_5.pdf}
		\label{FIG:boxplot_Y}}
	~	
	\subfloat[\tiny 10\% (C4) Contamination]{
		\includegraphics[page=3, width=0.3\textwidth]{Figures/S2_R0_T03_C4_10.pdf}
		\label{FIG:boxplot_Y}}
	~	
	\subfloat[\tiny 20\% (C4) Contamination]{
		\includegraphics[page=3, width=0.3\textwidth]{Figures/S2_R0_T03_C4_20.pdf}
		\label{FIG:boxplot_Y}}
	\caption{Boxplots  (with overlaid sample means) of computational times (in seconds) of the screening methods  under scenario (S2)$\times$(R0) with $\bm\Sigma_x$ as T(0.3) and different types of contamination}
	\label{FIG:time_S2R0T}
\end{figure}

\begin{figure}[!h]
	\centering
	\subfloat[\tiny 5\% (C1) Contamination]{
		\includegraphics[page=3, width=0.3\textwidth]{Figures/S2_R1_T03_C1_5.pdf}
		\label{FIG:boxplot_Y}}
	~	
	\subfloat[\tiny 10\% (C1) Contamination]{
		\includegraphics[page=3, width=0.3\textwidth]{Figures/S2_R1_T03_C1_10.pdf}
		\label{FIG:boxplot_Y}}
	~	
	\subfloat[\tiny 20\% (C1) Contamination]{
		\includegraphics[page=3, width=0.3\textwidth]{Figures/S2_R1_T03_C1_20.pdf}
		\label{FIG:boxplot_Y}}
	\\	
	\subfloat[\tiny 5\% (C2) Contamination]{
		\includegraphics[page=3, width=0.3\textwidth]{Figures/S2_R1_T03_C2_5.pdf}
		\label{FIG:boxplot_Y}}
	~	
	\subfloat[\tiny 10\% (C2) Contamination]{
		\includegraphics[page=3, width=0.3\textwidth]{Figures/S2_R1_T03_C2_10.pdf}
		\label{FIG:boxplot_Y}}
	~	
	\subfloat[\tiny 20\% (C2) Contamination]{
		\includegraphics[page=3, width=0.3\textwidth]{Figures/S2_R1_T03_C2_20.pdf}
		\label{FIG:boxplot_Y}}
	\\	
	\subfloat[\tiny 5\% (C3) Contamination]{
		\includegraphics[page=3, width=0.3\textwidth]{Figures/S2_R1_T03_C3_5.pdf}
		\label{FIG:boxplot_Y}}
	~	
	\subfloat[\tiny 10\% (C3) Contamination]{
		\includegraphics[page=3, width=0.3\textwidth]{Figures/S2_R1_T03_C3_10.pdf}
		\label{FIG:boxplot_Y}}
	~	
	\subfloat[\tiny 20\% (C3) Contamination]{
		\includegraphics[page=3, width=0.3\textwidth]{Figures/S2_R1_T03_C3_20.pdf}
		\label{FIG:boxplot_Y}}
	\\	
	\subfloat[\tiny 5\% (C4) Contamination]{
		\includegraphics[page=3, width=0.3\textwidth]{Figures/S2_R1_T03_C4_5.pdf}
		\label{FIG:boxplot_Y}}
	~	
	\subfloat[\tiny 10\% (C4) Contamination]{
		\includegraphics[page=3, width=0.3\textwidth]{Figures/S2_R1_T03_C4_10.pdf}
		\label{FIG:boxplot_Y}}
	~	
	\subfloat[\tiny 20\% (C4) Contamination]{
		\includegraphics[page=3, width=0.3\textwidth]{Figures/S2_R1_T03_C4_20.pdf}
		\label{FIG:boxplot_Y}}
	\caption{Boxplots  (with overlaid sample means) of computational times (in seconds) of the screening methods  under scenario (S2)$\times$(R1) with $\bm\Sigma_x$ as T(0.3) and different types of contamination}
	\label{FIG:tim_S2R1T}
\end{figure}

\begin{table}[!h]
	\centering
	\caption{Median TPR and EmpSSp (in parenthesis) under contaminated data with (C2)}
	\resizebox{\textwidth}{!}{
		\begin{tabular}{l|cc|ccc|ccc|ccc}\hline
			Settings	&		\multicolumn{5}{|c|}{Benchmark SIS}	&	\multicolumn{3}{c|}{DPD-SISP with cv-P}	&	\multicolumn{3}{c}{DPD-SISP with I0-P}\\	
			$\bm\Sigma_x$	&	MLE	&	REML	&	HGD(0.1) 	&	HGD(0.3)	&	HGD(0.5)	&	($0.1$)	&	($0.3$)	&	($0.5$)	&	($0.1$)	&	($0.3$)	&	($0.5$)	\\	\hline
			\hline
			\multicolumn{12}{c}{\textbf{5\% contamination}} \\																										
			(S1)$\times$(R1)			&	0.8	&	0.8	&	0.8	&	0.8	&	0.8	&	1	&	1	&	0.8	&	0.6	&	0.6	&	0.6	\\	
			Identity	&	(0)	&	(0)	&	(0)	&	(0)	&	(0)	&	(0.65)	&	(0.52)	&	(0.4)	&	(0)	&	(0)	&	(0)	\\	\hline
			(S2)$\times$(R1)			&	1	&	1	&	1	&	1	&	1	&	1	&	1	&	1	&	1	&	1	&	1	\\	
			Identity	&	(1)	&	(1)	&	(1)	&	(1)	&	(1)	&	(0.99)	&	(0.98)	&	(0.96)	&	(0.99)	&	(0.97)	&	(0.95)	\\	\hline\hline
			(S1)$\times$(R1)			&	0.8	&	0.8	&	0.8	&	0.8	&	0.8	&	0.8	&	0.8	&	0.8	&	0.4	&	0.4	&	0.6	\\	
			CS(0.3)	&	(0)	&	(0)	&	(0)	&	(0)	&	(0)	&	(0)	&	(0)	&	(0)	&	(0)	&	(0)	&	(0)	\\	\hline
			(S2)$\times$(R0)			&	1	&	1	&	1	&	1	&	1	&	1	&	1	&	1	&	1	&	1	&	1	\\	
			CS(0.3)	&	(1)	&	(1)	&	(1)	&	(0.99)	&	(1)	&	(0.89)	&	(0.9)	&	(0.85)	&	(0.88)	&	(0.89)	&	(0.88)	\\	\hline
			(S2)$\times$(R1)			&	1	&	1	&	1	&	1	&	1	&	1	&	1	&	1	&	1	&	1	&	1	\\	
			CS(0.3)	&	(1)	&	(1)	&	(1)	&	(1)	&	(1)	&	(0.85)	&	(0.88)	&	(0.84)	&	(0.86)	&	(0.86)	&	(0.84)	\\	\hline\hline
			(S1)$\times$(R1)			&	0.8	&	0.8	&	0.8	&	0.8	&	0.8	&	1	&	1	&	1	&	0.6	&	0.6	&	0.6	\\	
			T(0.3)	&	(0)	&	(0)	&	(0)	&	(0)	&	(0)	&	(0.93)	&	(0.83)	&	(0.7)	&	(0)	&	(0)	&	(0)	\\	\hline
			(S2)$\times$(R0)			&	1	&	1	&	1	&	1	&	1	&	1	&	1	&	1	&	1	&	1	&	1	\\	
			T(0.3)	&	(1)	&	(1)	&	(1)	&	(1)	&	(1)	&	(0.99)	&	(0.98)	&	(0.96)	&	(0.99)	&	(0.99)	&	(0.96)	\\	\hline
			(S2)$\times$(R1)			&	1	&	1	&	1	&	1	&	1	&	1	&	1	&	1	&	1	&	1	&	1	\\	
			T(0.3)	&	(1)	&	(1)	&	(1)	&	(1)	&	(1)	&	(0.99)	&	(0.97)	&	(0.94)	&	(0.99)	&	(0.99)	&	(0.95)	\\	\hline\hline
			\multicolumn{12}{c}{\textbf{10\% contamination}} \\																								
			(S1)$\times$(R1)			&	0.8	&	0.8	&	0.8	&	0.8	&	0.8	&	1	&	1	&	1	&	0.6	&	0.6	&	0.6	\\	
			Identity	&	(0.06)	&	(0.06)	&	(0.05)	&	(0.05)	&	(0.06)	&	(0.87)	&	(0.76)	&	(0.62)	&	(0)	&	(0.02)	&	(0.03)	\\	\hline
			(S2)$\times$(R1)			&	1	&	1	&	1	&	1	&	1	&	1	&	1	&	1	&	1	&	1	&	1	\\	
			Identity	&	(1)	&	(1)	&	(1)	&	(1)	&	(1)	&	(0.99)	&	(0.98)	&	(0.97)	&	(0.99)	&	(0.97)	&	(0.95)	\\	\hline\hline
			(S1)$\times$(R1)			&	0.8	&	0.8	&	0.8	&	0.8	&	0.8	&	0.8	&	0.8	&	0.8	&	0.4	&	0.4	&	0.6	\\	
			CS(0.3)	&	(0)	&	(0)	&	(0)	&	(0)	&	(0)	&	(0)	&	(0)	&	(0)	&	(0)	&	(0.01)	&	(0.01)	\\	\hline
			(S2)$\times$(R0)			&	1	&	1	&	1	&	1	&	1	&	1	&	1	&	1	&	1	&	1	&	1	\\	
			CS(0.3)	&	(1)	&	(1)	&	(1)	&	(0.99)	&	(1)	&	(0.88)	&	(0.89)	&	(0.86)	&	(0.88)	&	(0.89)	&	(0.88)	\\	\hline
			(S2)$\times$(R1)			&	1	&	1	&	1	&	1	&	1	&	1	&	1	&	1	&	1	&	1	&	1	\\	
			CS(0.3)	&	(1)	&	(1)	&	(1)	&	(1)	&	(1)	&	(0.86)	&	(0.88)	&	(0.83)	&	(0.86)	&	(0.86)	&	(0.84)	\\	\hline\hline
			(S1)$\times$(R1)			&	0.8	&	0.8	&	0.8	&	0.8	&	0.8	&	1	&	1	&	1	&	0.6	&	0.6	&	0.6	\\	
			T(0.3)	&	(0.06)	&	(0.06)	&	(0.05)	&	(0.06)	&	(0.05)	&	(0.99)	&	(0.98)	&	(0.93)	&	(0.01)	&	(0.04)	&	(0.04)	\\	\hline(S2)$\times$(R0)			&	1	&	1	&	1	&	1	&	1	&	1	&	1	&	1	&	1	&	1	&	1	\\	
			T(0.3)	&	(1)	&	(1)	&	(1)	&	(1)	&	(1)	&	(0.99)	&	(0.98)	&	(0.96)	&	(0.99)	&	(0.99)	&	(0.96)	\\	\hline
			(S2)$\times$(R1)			&	1	&	1	&	1	&	1	&	1	&	1	&	1	&	1	&	1	&	1	&	1	\\	
			T(0.3)	&	(1)	&	(1)	&	(1)	&	(1)	&	(1)	&	(0.99)	&	(0.97)	&	(0.94)	&	(0.99)	&	(0.99)	&	(0.95)	\\	\hline\hline
			\multicolumn{12}{c}{\textbf{20\% contamination}} \\																									
			(S1)$\times$(R1)			&	0.8	&	0.8	&	0.8	&	0.8	&	0.8	&	1	&	1	&	0.8	&	0.6	&	0.6	&	0.6	\\	
			Identity	&	(0)	&	(0)	&	(0)	&	(0)	&	(0)	&	(0.65)	&	(0.52)	&	(0.4)	&	(0)	&	(0)	&	(0)	\\	\hline
			(S2)$\times$(R1)			&	1	&	1	&	1	&	1	&	1	&	1	&	1	&	1	&	1	&	1	&	1	\\	
			Identity	&	(1)	&	(1)	&	(1)	&	(1)	&	(1)	&	(0.99)	&	(0.98)	&	(0.96)	&	(0.99)	&	(0.97)	&	(0.95)	\\	\hline\hline
			(S1)$\times$(R1)			&	0.8	&	0.8	&	0.8	&	0.8	&	0.8	&	0.8	&	0.8	&	0.8	&	0.4	&	0.4	&	0.6	\\	
			CS(0.3)	&	(0)	&	(0)	&	(0)	&	(0)	&	(0)	&	(0)	&	(0)	&	(0)	&	(0)	&	(0)	&	(0)	\\	\hline
			(S2)$\times$(R0)			&	1	&	1	&	1	&	1	&	1	&	1	&	1	&	1	&	1	&	1	&	1	\\	
			CS(0.3)	&	(1)	&	(1)	&	(1)	&	(0.99)	&	(1)	&	(0.89)	&	(0.9)	&	(0.85)	&	(0.88)	&	(0.89)	&	(0.88)	\\	\hline
			(S2)$\times$(R1)			&	1	&	1	&	1	&	1	&	1	&	1	&	1	&	1	&	1	&	1	&	1	\\	
			CS(0.3)	&	(1)	&	(1)	&	(1)	&	(1)	&	(1)	&	(0.85)	&	(0.88)	&	(0.84)	&	(0.86)	&	(0.86)	&	(0.84)	\\	\hline\hline
			(S1)$\times$(R1)			&	0.8	&	0.8	&	0.8	&	0.8	&	0.8	&	1	&	1	&	1	&	0.6	&	0.6	&	0.6	\\	
			T(0.3)	&	(0)	&	(0)	&	(0)	&	(0)	&	(0)	&	(0.93)	&	(0.83)	&	(0.7)	&	(0)	&	(0)	&	(0)	\\	\hline
			(S2)$\times$(R0)			&	1	&	1	&	1	&	1	&	1	&	1	&	1	&	1	&	1	&	1	&	1	\\	
			T(0.3)	&	(1)	&	(1)	&	(1)	&	(1)	&	(1)	&	(0.99)	&	(0.98)	&	(0.96)	&	(0.99)	&	(0.99)	&	(0.96)	\\	\hline
			(S2)$\times$(R1)			&	1	&	1	&	1	&	1	&	1	&	1	&	1	&	1	&	1	&	1	&	1	\\	
			T(0.3)	&	(1)	&	(1)	&	(1)	&	(1)	&	(1)	&	(0.99)	&	(0.97)	&	(0.94)	&	(0.99)	&	(0.99)	&	(0.95)	\\	\hline
			\hline
		\end{tabular}
	}
	\label{TAB:C2}
\end{table}


\begin{table}[!h]
	\centering
	\caption{Median TPR and EmpSSp (in parenthesis) under contaminated data with (C3)}
	\resizebox{\textwidth}{!}{
		\begin{tabular}{l|cc|ccc|ccc|ccc}\hline
			Settings	&		\multicolumn{5}{|c|}{Benchmark SIS}	&	\multicolumn{3}{c|}{DPD-SISP with cv-P}	&	\multicolumn{3}{c}{DPD-SISP with I0-P}\\	
			$\bm\Sigma_x$	&	MLE	&	REML	&	HGD(0.1) 	&	HGD(0.3)	&	HGD(0.5)	&	($0.1$)	&	($0.3$)	&	($0.5$)	&	($0.1$)	&	($0.3$)	&	($0.5$)	\\	\hline
			\hline
			\multicolumn{12}{c}{\textbf{5\% contamination}} \\																										
			(S1)$\times$(R1)			&	0.6	&	0.6	&	1	&	1	&	1	&	1	&	1	&	1	&	0.4	&	0.8	&	0.8	\\	
			Identity	&	(0.09)	&	(0.09)	&	(0.56)	&	(0.57)	&	(0.59)	&	(0.53)	&	(0.86)	&	(0.81)	&	(0.06)	&	(0.22)	&	(0.22)	\\	\hline
			(S2)$\times$(R1)			&	0.6	&	0.6	&	1	&	1	&	1	&	1	&	1	&	1	&	1	&	1	&	1	\\	
			Identity	&	(0)	&	(0)	&	(1)	&	(1)	&	(1)	&	(0.58)	&	(0.95)	&	(0.9)	&	(0.54)	&	(0.95)	&	(0.88)	\\	\hline\hline
			(S1)$\times$(R1)			&	0.8	&	0.8	&	1	&	1	&	1	&	0.6	&	0.8	&	0.8	&	0.4	&	0.4	&	0.6	\\	
			CS(0.3)	&	(0.02)	&	(0.01)	&	(0.54)	&	(0.59)	&	(0.6)	&	(0.02)	&	(0.42)	&	(0.36)	&	(0.02)	&	(0.05)	&	(0.05)	\\	\hline
			(S2)$\times$(R0)			&	0.8	&	0.8	&	1	&	1	&	1	&	1	&	1	&	1	&	1	&	1	&	1	\\	
			CS(0.3)	&	(0.02)	&	(0.01)	&	(1)	&	(0.99)	&	(0.99)	&	(0.77)	&	(0.85)	&	(0.84)	&	(0.81)	&	(0.85)	&	(0.83)	\\	\hline
			(S2)$\times$(R1)			&	0.6	&	0.6	&	1	&	1	&	1	&	0.8	&	1	&	1	&	0.8	&	1	&	1	\\	
			CS(0.3)	&	(0)	&	(0)	&	(0.87)	&	(0.93)	&	(0.89)	&	(0.31)	&	(0.78)	&	(0.73)	&	(0.36)	&	(0.76)	&	(0.71)	\\	\hline\hline
			(S1)$\times$(R1)			&	0.8	&	0.8	&	1	&	1	&	1	&	0.8	&	1	&	1	&	0.6	&	0.8	&	0.8	\\	
			T(0.3)	&	(0.46)	&	(0.46)	&	(0.69)	&	(0.63)	&	(0.62)	&	(0.31)	&	(0.99)	&	(0.97)	&	(0.14)	&	(0.33)	&	(0.33)	\\	\hline
			(S2)$\times$(R0)			&	0.8	&	0.8	&	1	&	1	&	1	&	1	&	1	&	1	&	1	&	1	&	1	\\	
			T(0.3)	&	(0.44)	&	(0.44)	&	(1)	&	(1)	&	(1)	&	(0.9)	&	(0.99)	&	(0.98)	&	(0.94)	&	(0.99)	&	(0.98)	\\	\hline
			(S2)$\times$(R1)			&	0.6	&	0.6	&	1	&	1	&	1	&	1	&	1	&	1	&	1	&	1	&	1	\\	
			T(0.3)	&	(0)	&	(0)	&	(1)	&	(1)	&	(1)	&	(0.54)	&	(0.92)	&	(0.86)	&	(0.59)	&	(0.91)	&	(0.88)	\\	\hline\hline
			\multicolumn{12}{c}{\textbf{10\% contamination}} \\																								
			(S1)$\times$(R1)			&	0.4	&	0.4	&	0.6	&	1	&	0.8	&	0.8	&	1	&	1	&	0.4	&	0.8	&	0.8	\\	
			Identity	&	(0)	&	(0)	&	(0.02)	&	(0.55)	&	(0.49)	&	(0.3)	&	(0.83)	&	(0.79)	&	(0.04)	&	(0.23)	&	(0.21)	\\	\hline
			(S2)$\times$(R1)			&	0.2	&	0.2	&	0.6	&	1	&	1	&	0.6	&	1	&	1	&	0.6	&	1	&	1	\\	
			Identity	&	(0)	&	(0)	&	(0)	&	(1)	&	(1)	&	(0.03)	&	(0.91)	&	(0.9)	&	(0.03)	&	(0.92)	&	(0.91)	\\	\hline\hline
			(S1)$\times$(R1)			&	0.4	&	0.4	&	0.7	&	1	&	1	&	0.2	&	0.8	&	0.8	&	0.4	&	0.6	&	0.6	\\	
			CS(0.3)	&	(0)	&	(0)	&	(0.08)	&	(0.61)	&	(0.62)	&	(0.02)	&	(0.36)	&	(0.37)	&	(0)	&	(0.06)	&	(0.06)	\\	\hline
			(S2)$\times$(R0)			&	0.6	&	0.6	&	0.8	&	1	&	1	&	0.8	&	1	&	1	&	0.8	&	1	&	1	\\	
			CS(0.3)	&	(0)	&	(0)	&	(0.49)	&	(0.99)	&	(0.99)	&	(0.47)	&	(0.89)	&	(0.86)	&	(0.49)	&	(0.91)	&	(0.85)	\\	\hline
			(S2)$\times$(R1)			&	0.2	&	0.2	&	0.6	&	1	&	1	&	0.6	&	1	&	1	&	0.6	&	1	&	1	\\	
			CS(0.3)	&	(0)	&	(0)	&	(0)	&	(0.92)	&	(0.84)	&	(0)	&	(0.7)	&	(0.68)	&	(0)	&	(0.79)	&	(0.72)	\\	\hline
			\hline
			(S1)$\times$(R1)			&	0.4	&	0.4	&	0.8	&	1	&	1	&	1	&	1	&	1	&	0.6	&	0.8	&	0.8	\\	
			T(0.3)	&	(0)	&	(0)	&	(0.16)	&	(0.65)	&	(0.64)	&	(0.82)	&	(0.97)	&	(0.99)	&	(0.04)	&	(0.16)	&	(0.17)	\\	\hline
			(S2)$\times$(R0)			&	0.8	&	0.8	&	1	&	1	&	1	&	1	&	1	&	1	&	1	&	1	&	1	\\	
			T(0.3)	&	(0.07)	&	(0.07)	&	(0.92)	&	(1)	&	(1)	&	(0.73)	&	(0.98)	&	(0.97)	&	(0.74)	&	(0.98)	&	(0.97)	\\	\hline
			(S2)$\times$(R1)			&	0.2	&	0.2	&	0.6	&	1	&	1	&	0.6	&	1	&	1	&	0.6	&	1	&	1	\\	
			T(0.3)	&	(0)	&	(0)	&	(0)	&	(1)	&	(1)	&	(0.02)	&	(0.92)	&	(0.88)	&	(0.02)	&	(0.9)	&	(0.86)	\\	\hline\hline
			\multicolumn{12}{c}{\textbf{20\% contamination}} \\																									
			(S1)$\times$(R1)			&	0.4	&	0.4	&	0.4	&	0.8	&	0.8	&	0.6	&	1	&	1	&	0.4	&	0.6	&	0.8	\\	
			Identity	&	(0)	&	(0)	&	(0)	&	(0.44)	&	(0.47)	&	(0.23)	&	(0.57)	&	(0.67)	&	(0)	&	(0.08)	&	(0.39)	\\	\hline
			(S2)$\times$(R1)			&	0.2	&	0.2	&	0.2	&	0.6	&	1	&	0.2	&	0.6	&	1	&	0.2	&	0.6	&	1	\\	
			Identity	&	(0)	&	(0)	&	(0)	&	(0.15)	&	(0.99)	&	(0)	&	(0.01)	&	(0.86)	&	(0)	&	(0)	&	(0.88)	\\	\hline\hline
			(S1)$\times$(R1)			&	0.4	&	0.4	&	0.6	&	0.9	&	0.8	&	0.6	&	0.6	&	0.8	&	0.4	&	0.6	&	0.7	\\	
			CS(0.3)	&	(0)	&	(0)	&	(0)	&	(0.5)	&	(0.48)	&	(0.05)	&	(0.19)	&	(0.25)	&	(0)	&	(0.08)	&	(0.12)	\\	\hline
			(S2)$\times$(R0)			&	0.6	&	0.6	&	0.6	&	1	&	1	&	0.6	&	0.8	&	1	&	0.6	&	0.8	&	1	\\	
			CS(0.3)	&	(0)	&	(0)	&	(0)	&	(0.51)	&	(0.92)	&	(0)	&	(0.03)	&	(0.72)	&	(0)	&	(0.14)	&	(0.73)	\\	\hline
			(S2)$\times$(R1)			&	0.2	&	0.2	&	0.2	&	0.8	&	1	&	0.2	&	0.6	&	0.8	&	0.2	&	0.6	&	1	\\	
			CS(0.3)	&	(0)	&	(0)	&	(0)	&	(0.29)	&	(0.67)	&	(0)	&	(0)	&	(0.38)	&	(0)	&	(0.04)	&	(0.55)	\\	\hline			\hline
			(S1)$\times$(R1)			&	0.4	&	0.4	&	0.6	&	1	&	1	&	0.8	&	1	&	1	&	0.4	&	0.6	&	0.8	\\	
			T(0.3)	&	(0)	&	(0)	&	(0)	&	(0.61)	&	(0.62)	&	(0.11)	&	(0.93)	&	(0.96)	&	(0)	&	(0.11)	&	(0.25)	\\	\hline
			(S2)$\times$(R0)			&	0.6	&	0.6	&	0.6	&	1	&	1	&	0.6	&	0.6	&	1	&	0.6	&	0.8	&	1	\\	
			T(0.3)	&	(0)	&	(0)	&	(0)	&	(0.63)	&	(1)	&	(0)	&	(0.03)	&	(0.84)	&	(0)	&	(0.06)	&	(0.86)	\\	\hline
			(S2)$\times$(R1)			&	0.2	&	0.2	&	0.2	&	0.8	&	1	&	0.2	&	0.6	&	1	&	0.2	&	0.6	&	1	\\	
			T(0.3)	&	(0)	&	(0)	&	(0)	&	(0.36)	&	(1)	&	(0)	&	(0)	&	(0.86)	&	(0)	&	(0)	&	(0.85)	\\	\hline\hline
		\end{tabular}
	}
	\label{TAB:C3}
\end{table}


\begin{table}
	\centering
	\caption{Median TPR and EmpSSp (in parenthesis) under contaminated data with (C4)}
	\resizebox{\textwidth}{!}{
		\begin{tabular}{l|cc|ccc|ccc|ccc}\hline
			Settings	&		\multicolumn{5}{|c|}{Benchmark SIS}	&	\multicolumn{3}{c|}{DPD-SISP with cv-P}	&	\multicolumn{3}{c}{DPD-SISP with I0-P}\\	
			$\bm\Sigma_x$	&	MLE	&	REML	&	HGD(0.1) 	&	HGD(0.3)	&	HGD(0.5)	&	($0.1$)	&	($0.3$)	&	($0.5$)	&	($0.1$)	&	($0.3$)	&	($0.5$)	\\	\hline
			\hline
			\multicolumn{12}{c}{\textbf{5\% contamination}} \\																										
			(S1)$\times$(R1)			&	0.8	&	0.8	&	1	&	1	&	1	&	1	&	1	&	1	&	0.8	&	0.8	&	0.8	\\	
			Identity	&	(0.36)	&	(0.38)	&	(0.63)	&	(0.69)	&	(0.68)	&	(0.79)	&	(0.79)	&	(0.75)	&	(0.14)	&	(0.29)	&	(0.25)	\\	\hline
			(S2)$\times$(R1)			&	0.8	&	0.8	&	1	&	1	&	1	&	1	&	1	&	1	&	0.8	&	1	&	1	\\	
			Identity	&	(0.14)	&	(0.15)	&	(0.92)	&	(1)	&	(1)	&	(0.56)	&	(0.94)	&	(0.92)	&	(0.39)	&	(0.89)	&	(0.84)	\\	\hline\hline
			(S1)$\times$(R1)			&	0.8	&	0.8	&	0.8	&	1	&	1	&	0.8	&	0.9	&	0.8	&	0.6	&	0.6	&	0.6	\\	
			CS(0.3)	&	(0.24)	&	(0.23)	&	(0.31)	&	(0.61)	&	(0.65)	&	(0.16)	&	(0.5)	&	(0.47)	&	(0)	&	(0)	&	(0.03)	\\	\hline
			(S2)$\times$(R0)			&	0.8	&	0.8	&	1	&	1	&	1	&	0.8	&	0.8	&	0.8	&	0.8	&	0.8	&	0.9	\\	
			CS(0.3)	&	(0.01)	&	(0.01)	&	(0.64)	&	(1)	&	(0.99)	&	(0.29)	&	(0.45)	&	(0.49)	&	(0.16)	&	(0.45)	&	(0.5)	\\	\hline
			(S2)$\times$(R1)			&	0.6	&	0.6	&	0.8	&	1	&	1	&	0.8	&	1	&	1	&	0.8	&	1	&	0.9	\\	
			CS(0.3)	&	(0)	&	(0)	&	(0.13)	&	(0.72)	&	(0.77)	&	(0.31)	&	(0.77)	&	(0.73)	&	(0.12)	&	(0.57)	&	(0.5)	\\	\hline\hline
			(S1)$\times$(R1)			&	1	&	1	&	1	&	1	&	1	&	1	&	1	&	1	&	0.6	&	0.8	&	0.8	\\	
			T(0.3)	&	(0.71)	&	(0.68)	&	(0.84)	&	(0.8)	&	(0.78)	&	(0.95)	&	(0.96)	&	(0.95)	&	(0.13)	&	(0.24)	&	(0.25)	\\	\hline
			(S2)$\times$(R0)			&	0.8	&	0.8	&	1	&	1	&	1	&	1	&	1	&	1	&	0.8	&	1	&	1	\\	
			T(0.3)	&	(0.32)	&	(0.33)	&	(0.91)	&	(1)	&	(1)	&	(0.52)	&	(0.73)	&	(0.79)	&	(0.48)	&	(0.74)	&	(0.82)	\\	\hline
			(S2)$\times$(R1)			&	0.8	&	0.8	&	1	&	1	&	1	&	1	&	1	&	1	&	0.8	&	1	&	1	\\	
			T(0.3)	&	(0.1)	&	(0.09)	&	(0.89)	&	(1)	&	(1)	&	(0.53)	&	(0.9)	&	(0.87)	&	(0.33)	&	(0.83)	&	(0.81)	\\	\hline\hline
			\multicolumn{12}{c}{\textbf{10\% contamination}} \\																								
			(S1)$\times$(R1)			&	0.6	&	0.6	&	0.8	&	0.8	&	0.8	&	0.6	&	1	&	1	&	0.4	&	0.6	&	0.8	\\	
			Identity	&	(0.01)	&	(0.01)	&	(0.17)	&	(0.48)	&	(0.48)	&	(0.02)	&	(0.63)	&	(0.64)	&	(0)	&	(0.08)	&	(0.18)	\\	\hline
			(S2)$\times$(R1)			&	0.8	&	0.8	&	0.8	&	1	&	1	&	0.6	&	1	&	1	&	0.8	&	1	&	1	\\	
			Identity	&	(0)	&	(0.01)	&	(0.06)	&	(0.91)	&	(0.96)	&	(0.05)	&	(0.82)	&	(0.87)	&	(0.14)	&	(0.61)	&	(0.74)	\\	\hline\hline
			(S1)$\times$(R1)			&	0.6	&	0.6	&	0.6	&	1	&	1	&	0.6	&	0.8	&	0.8	&	0.4	&	0.6	&	0.6	\\	
			CS(0.3)	&	(0)	&	(0)	&	(0.01)	&	(0.52)	&	(0.63)	&	(0)	&	(0.3)	&	(0.48)	&	(0)	&	(0.01)	&	(0.02)	\\	\hline
			(S2)$\times$(R0)			&	0.6	&	0.6	&	0.8	&	1	&	1	&	0.6	&	0.8	&	0.8	&	0.8	&	0.8	&	0.8	\\	
			CS(0.3)	&	(0)	&	(0)	&	(0.06)	&	(0.87)	&	(0.94)	&	(0.09)	&	(0.36)	&	(0.32)	&	(0.11)	&	(0.36)	&	(0.38)	\\	\hline
			(S2)$\times$(R1)			&	0.4	&	0.4	&	0.6	&	0.8	&	0.8	&	0.6	&	1	&	1	&	0.6	&	0.8	&	0.8	\\	
			CS(0.3)	&	(0)	&	(0)	&	(0)	&	(0.13)	&	(0.22)	&	(0)	&	(0.52)	&	(0.55)	&	(0)	&	(0.11)	&	(0.18)	\\	\hline\hline
			(S1)$\times$(R1)			&	0.8	&	0.8	&	1	&	1	&	1	&	0.8	&	1	&	1	&	0.4	&	0.8	&	0.8	\\	
			T(0.3)	&	(0.12)	&	(0.12)	&	(0.64)	&	(0.83)	&	(0.75)	&	(0.09)	&	(0.96)	&	(0.93)	&	(0)	&	(0.1)	&	(0.13)	\\	\hline
			(S2)$\times$(R0)			&	0.8	&	0.8	&	1	&	1	&	1	&	0.8	&	1	&	1	&	0.8	&	1	&	1	\\	
			T(0.3)	&	(0.08)	&	(0.09)	&	(0.54)	&	(0.99)	&	(0.99)	&	(0.43)	&	(0.59)	&	(0.57)	&	(0.38)	&	(0.56)	&	(0.58)	\\	\hline
			(S2)$\times$(R1)			&	0.6	&	0.6	&	0.8	&	1	&	1	&	0.6	&	1	&	1	&	0.6	&	0.8	&	1	\\	
			T(0.3)	&	(0.01)	&	(0.01)	&	(0.02)	&	(0.88)	&	(0.97)	&	(0.04)	&	(0.7)	&	(0.83)	&	(0.08)	&	(0.48)	&	(0.7)	\\	\hline\hline
			\multicolumn{12}{c}{\textbf{20\% contamination}} \\																									
			(S1)$\times$(R1)			&	0.8	&	0.8	&	0.8	&	0.8	&	0.8	&	0.8	&	1	&	1	&	0.6	&	0.6	&	0.8	\\	
			Identity	&	(0.18)	&	(0.15)	&	(0.17)	&	(0.3)	&	(0.29)	&	(0.34)	&	(0.77)	&	(0.85)	&	(0)	&	(0.09)	&	(0.21)	\\	\hline
			(S2)$\times$(R1)			&	0.6	&	0.6	&	0.6	&	1	&	1	&	0.6	&	0.8	&	1	&	0.8	&	0.8	&	0.8	\\
			Identity	&	(0)	&	(0)	&	(0)	&	(0.63)	&	(0.78)	&	(0)	&	(0.06)	&	(0.52)	&	(0)	&	(0.12)	&	(0.41)	\\	\hline\hline
			(S1)$\times$(R1)			&	0.8	&	0.8	&	0.8	&	1	&	0.8	&	1	&	1	&	1	&	0.6	&	0.6	&	0.8	\\	
			CS(0.3)	&	(0.34)	&	(0.34)	&	(0.27)	&	(0.51)	&	(0.35)	&	(0.51)	&	(0.83)	&	(0.68)	&	(0)	&	(0)	&	(0.01)	\\	\hline
			(S2)$\times$(R0)			&	0.6	&	0.6	&	0.6	&	0.8	&	0.8	&	0.6	&	0.6	&	0.8	&	0.6	&	0.6	&	0.8	\\	
			CS(0.3)	&	(0)	&	(0)	&	(0)	&	(0.01)	&	(0.4)	&	(0)	&	(0.05)	&	(0.29)	&	(0)	&	(0)	&	(0.05)	\\	\hline
			(S2)$\times$(R1)			&	0.6	&	0.6	&	0.6	&	0.6	&	0.6	&	0.4	&	0.6	&	0.8	&	0.6	&	0.6	&	0.6	\\	
			CS(0.3)	&	(0)	&	(0)	&	(0)	&	(0)	&	(0)	&	(0)	&	(0)	&	(0.22)	&	(0)	&	(0)	&	(0.03)	\\	\hline			\hline
			(S1)$\times$(R1)			&	1	&	1	&	1	&	1	&	1	&	1	&	1	&	1	&	0.6	&	0.8	&	0.8	\\	
			T(0.3)	&	(0.66)	&	(0.65)	&	(0.68)	&	(0.85)	&	(0.82)	&	(0.76)	&	(0.95)	&	(0.91)	&	(0)	&	(0.21)	&	(0.41)	\\	\hline
			(S2)$\times$(R0)			&	0.6	&	0.6	&	0.6	&	0.8	&	0.8	&	0.6	&	0.6	&	0.8	&	0.6	&	0.6	&	0.8	\\	
			T(0.3)	&	(0)	&	(0)	&	(0)	&	(0.01)	&	(0.34)	&	(0)	&	(0.02)	&	(0.33)	&	(0)	&	(0.01)	&	(0.04)	\\	\hline
			(S2)$\times$(R1)			&	0.6	&	0.6	&	0.6	&	0.8	&	1	&	0.6	&	0.8	&	0.8	&	0.6	&	0.8	&	0.8	\\	
			T(0.3)	&	(0)	&	(0)	&	(0)	&	(0.42)	&	(0.6)	&	(0)	&	(0.04)	&	(0.28)	&	(0)	&	(0.07)	&	(0.3)	\\	\hline
		\end{tabular}
	}
	\label{TAB:C4}
\end{table}

\begin{small}
	
	\begin{landscape}
		\begin{longtable}{l|L{0.1\textheight}|L{0.15\textheight}|L{0.1\textheight}|L{0.4\textheight}|L{0.3\textwidth}|L{0.2\textheight}		}
			\caption{List of genes and their PR scores (in parenthesis) across different groups of association, for the three screened sets}\\
			
			\hline
			Gene Set & G1 & G2 & G3 & G4 & G5 & G6\\
			\hline
			\endfirsthead
			
			\multicolumn{5}{c}{\tablename\ \thetable{} -- Continued from previous page} \\
			\hline
			Gene Set & G1 & G2 & G3 & G4 & G5 & G6\\
			\hline
			\endhead
			
			
			\hline
			\endlastfoot
			
			\hline
			\texttt{Gene\_Set\_0}		&	INSR (10.78) 	&	HMOX1 (8.25), 	~~MSR1 (6.71), 	PCSK9 (6.59), 	~~CHGB (6.29), 	GRM1 (6.16), 	GRIA3 (6.15), 	~~AKT2 (6.07), 	KCNMA1 (6.07), 	IL16 (5.24), 	TAT (5.22), 	GRK5 (4.96), 	DRD5 (4.18), 	NR4A1 (4.11), 	MEGF10 (3.71), 	PDE10A (3.64), 	PIAS2 (3.58), 	NLGN4X (3.45) 	&	TANGO2 (3.06), 	ARL13B (2.96), 	RREB1 (2.51), 	NFIX (2.35), 	SLC13A5 (1.97), 	SLC12A1 (1.97), 	RET (1.97), 	CHEK1 (1.96), 	USF2 (1.96), 	FTL (1.96), 	CLTCL1 (1.75) 	&	KIF3A (1.67), 	ARRB1 (1.65), 	RAP1GAP2 (1.54), 	CSE1L (1.44), 	HNRNPLL (1.29), 	SULF1 (1.2), 	SRD5A1 (1.18), 	DECR1 (1.17), 	PRUNE2 (1.16), 	HSPA14 (1.13), 	RPL9 (1.1), 	ZEB1 (1.09), 	MASTL (1.07), 	NEURL1 (1.07), 	NR4A3 (1.07), 	PLRG1 (0.93), 	RHOB (0.93), 	HBG2 (0.89), 	NAA25 (0.89), 	RER1 (0.85), 	SYPL2 (0.81), 	TNFRSF25 (0.8), 	RAD1 (0.78), 	BTBD1 (0.78), 	MPC2 (0.78), 	NBPF4 (0.76), 	CEP70 (0.75), 	NR2C1 (0.71), 	MCF2L2 (0.71), 	HDDC2 (0.71), 	DHX57 (0.71), 	RASAL2 (0.71), 	GFI1B (0.71), 	SMOX (0.65), 	MDK (0.58), 	ADM5 (0.53), 	SRL (0.53), 	ETF1 (0.53), 	BCCIP (0.53), 	OSBP2 (0.53), 	PARD6G (0.53), 	EXD3 (0.53), 	PLEKHG5 (0.52), 	HMGA2 (0.51), 	LMOD1 (0.47), 	SEMA6A (0.47), 	SLC4A5 (0.47), 	PRR5 (0.47), 	MAGEA4 (0.47), 	CXADR (0.44), 	ARFGAP3 (0.41), 	RNF41 (0.41), 	SNX29 (0.38), 	CDH19 (0.38), 	KISS1R (0.38), 	CFHR2 (0.36), 	ERMN (0.36), 	ZNF75A (0.36), 	HBG1 (0.36), 	VSIG4 (0.36), 	PIM1 (0.36), 	LRCH3 (0.29), 	UBAP1 (0.27), 	FANCC (0.19), 	MAP3K6 (0.13), 	DNASE1 (0.13), 	ZBTB8OS (0.09), 	BCL2L10 (0.09), 	ARHGEF26 (0.09), 	FBXO45 (0.09), 	RUNX1T1 (0.09), 	BMF (0.09), 	NXNL1 (0.09) 	&	CASP10 (4.49), 	TAL1 (3.48), 	STAP2 (2.62), 	RAB37 (2.14), 	RORC (2.1), 	COPS8 (1.86), 	TRIM29 (1.82), 	AREL1 (1.71), 	GDF6 (1.44), 	RUNDC3A (1.41), 	SHISA4 (1.4), 	ABTB2 (1.36), 	KRTAP4-2 (1.22), 	AKR1C2 (1.2), 	GPR89A (1.18), 	CLGN (1.14), 	HPDL (1.13), 	JMY (1.12), 	PABPC1L (1.09), 	OAZ2 (1.07), 	HIPK4 (1.03), 	AGXT2 (1.01), 	FAM136A (1), 	OR5M3 (0.96), 	MOV10L1 (0.89), 	ALDH3B2 (0.85), 	AK6 (0.84), 	RBMXL3 (0.78), 	PARPBP (0.73), 	DNAJB3 (0.67), 	LRRC28 (0.67), 	FAXDC2 (0.67), 	CCDC80 (0.55), 	LDLRAD1 (0.54), 	GPR62 (0.52), 	LRRIQ1 (0.51), 	GCAT (0.45), 	SMIM5 (0.44), 	SNORD97 (0.29), 	COLCA1 (0.23), 	C22orf42 (0.21), 	SDHAP2 (0.04), 	ZNF252P (0.04) 	&	C15ORF59,	RARS,	METTL7A,	C1ORF112,	C20ORF26,	C19ORF77,	RPL36AP33,	C2ORF90,	EFCAB4B,	TMEM189,	NARFL,	C11ORF45,	TARS,	HIST1H2BG,	LOC400590,	C9ORF69,	ENSG00000249882 \\\hline
			

			\texttt{Gene\_Set\_1(cv)}		&	LMNB1 (22.08), 	CBS (9.96), 	BCL2 (8.97) 	&	DNM1 (6.94), 	PDYN (6.29), 	TGIF1 (6.11), 	PCMT1 (5.61), 	TCN2 (4.76), 	C1R (4.57), 	CTNNA1 (4.47), 	CLIC1 (4.38), 	S100A12 (4.09), 	TGFBR2 (3.97), 	CYP1A1 (3.96), ~~~~	HNRNPDL (3.87), 	MEGF10 (3.71), 	ANK2 (3.49), 	ABCG2 (3.49), 	~~CASK (3.35) 	&	FMNL1 (3.29), 	RAB27A (3.16), 	LYRM7 (2.71), 	ASL (2.6), 	RREB1 (2.51), 	AFF3 (2.51), 	COX14 (2.44), 	WDR45B (2.33), 	PMM2 (2.33), 	CTBP2 (2.21), 	HMGN1 (2.2), 	LGI4 (1.97), 	ALG13 (1.97), 	EDC3 (1.97), 	RDX (1.97) 	&	TSPAN13 (1.73), 	TPD52 (1.49), 	RASSF4 (1.47), 	BLNK (1.44), 	PDE7A (1.42), 	SRPK1 (1.38), 	HVCN1 (1.3), 	CCDC50 (1.29), 	KNDC1 (1.28), 	SORCS2 (1.28), 	PTPRK (1.24), 	FXR2 (1.22), 	TAOK2 (1.22), 	POR (1.17), 	PRUNE2 (1.16), 	LATS2 (1.09), 	CERCAM (1.09), 	SWSAP1 (1.09), 	OSBPL1A (1.09), 	OLAH (1.07), 	HERC4 (1.07), 	TBX2 (1.01), 	ACAA1 (1.01), 	DUSP3 (1), 	HDAC10 (0.98), 	FUBP3 (0.89), 	FRYL (0.89), 	MMP16 (0.86), 	DDX3Y (0.86), 	GTF3C3 (0.75), 	PPFIBP1 (0.75), 	DTWD1 (0.74), 	DZIP1 (0.71), 	ZNF331 (0.71), 	WBP11 (0.71), 	UHRF1 (0.71), 	PLA2G2D (0.71), 	TSC22D1 (0.66), 	VAMP3 (0.54), 	ST6GAL2 (0.53), 	IQGAP2 (0.53), 	LIMK2 (0.53), 	NAB2 (0.51), 	RASSF5 (0.47), 	TNFSF13 (0.47), 	KCTD8 (0.47), 	SHISA9 (0.47), 	MAGEA4 (0.47), 	CXADR (0.44), 	SNX22 (0.38), 	ZCCHC7 (0.38), 	TNFSF12-TNFSF13 (0.38), 	SH3RF2 (0.36), 	HBS1L (0.36), 	KIAA0930 (0.36), 	FANCA (0.19), 	GLIS2 (0.13), 	LOXHD1 (0.13), 	NEK6 (0.09), 	STRN4 (0.09), 	SLC45A2 (0.09), 	ANKHD1 (0.09), 	IL27 (0.09), 	PSMB11 (0.09), 	UBE2J1 (0.09)	&	S100A2 (2.53), 	POMP (2.35), 	P2RY10 (2), 	LZTFL1 (1.98), 	UTY (1.77), 	FUCA2 (1.75), 	CD177 (1.73), 	NEK2 (1.71), 	PRMT3 (1.71), 	SSBP2 (1.68), 	ORC1 (1.68), 	CNTRL (1.66), 	GTF2H2 (1.57), 	CTIF (1.56), 	RAB24 (1.56), 	TRIM7 (1.51), 	STAP1 (1.51), 	NUP214 (1.5), 	COMMD9 (1.46), 	CLEC4C (1.45), 	DNAJC24 (1.4), 	DEFB135 (1.35), 	TOR2A (1.34), 	LCE2B (1.33), 	RPS4Y2 (1.32), 	LYPD8 (1.24), 	ZNF549 (1.23), 	HSD17B14 (1.1), 	NOL8 (1.07), 	SLC16A10 (1.04), 	SLC44A4 (1.03), 	OR5M9 (0.96), 	PRADC1 (0.94), 	DNAJB8 (0.86), 	GTF2H2C (0.84), 	ZNF154 (0.82), 	GPD1 (0.82), 	MXD3 (0.78), 	LBH (0.78), 	NXPH3 (0.76), 	HOXD8 (0.63), 	MCTP1 (0.6), 	WFDC1 (0.58), 	PPP1R36 (0.56), 	IFI27L1 (0.56), 	TAPT1 (0.53), 	TIMM23B (0.5), 	TIGD4 (0.49), 	SAPCD2 (0.43), 	SNORA72 (0.43), 	BEND4 (0.33), 	ATF7IP2 (0.27), 	GTF2H2B (0.13) 	&	HCG8,	FAM105B,	SQRDL,	PPAPDC1B,	LOC730441,	FAM129C,	ENSG00000225824,	KIAA0355,	LOC101060445, LOC101930418, 	LOC101927707 \\\hline
			

			\texttt{Gene\_Set\_1(I0)}	&	VCP (33.29), 	LMNB1 (22.08), 	SNAP25 (13.88), 	CBS (9.96) 	&	DNM1 (6.94), 	MSR1 (6.71), 	PDYN (6.29), 	TGIF1 (6.11), 	PCMT1 (5.61), 	TCN2 (4.76), 	C1R (4.57), 	CTNNA1 (4.47), 	CLIC1 (4.38), 	S100A12 (4.09), 	CYP1A1 (3.96), ~~~~	HNRNPDL (3.87), 	F5 (3.77), 	MEGF10 (3.71), 	EFS (3.59), 	ABCG2 (3.49), 	~~CASK (3.35) 	&	FMNL1 (3.29), 	RAB27A (3.16), 	BCL11A (2.71), 	LYRM7 (2.71), 	AFF3 (2.51), 	COX14 (2.44), 	WDR45B (2.33), 	PMM2 (2.33), 	CTBP2 (2.21), 	RAB8B (2.02), 	ALG13 (1.97), 	EDC3 (1.97), 	LGI4 (1.97), 	RDX (1.97) 	&	TSPAN13 (1.73), 	TPD52 (1.49), 	RASSF4 (1.47), 	CDC42EP3 (1.44), 	BLNK (1.44), 	PDE7A (1.42), 	TMEM163 (1.33), 	HVCN1 (1.3), 	CCDC50 (1.29), 	KNDC1 (1.28), 	SORCS2 (1.28), 	PTPRK (1.24), 	RPS14 (1.22), 	FXR2 (1.22), 	POR (1.17), 	PRUNE2 (1.16), 	INSC (1.14), 	NHLRC2 (1.13), 	OSBPL1A (1.09), 	CERCAM (1.09), 	SWSAP1 (1.09), 	LATS2 (1.09), 	HERC4 (1.07), 	OLAH (1.07), 	NR4A3 (1.07), 	TRPM2 (1.03), 	TBX2 (1.01), 	DUSP3 (1), 	HDAC10 (0.98), 	FUBP3 (0.89), 	FRYL (0.89), 	MMP16 (0.86), 	DDX3Y (0.86), 	PPFIBP1 (0.75), 	DTWD1 (0.74), 	DZIP1 (0.71), 	PLA2G2D (0.71), 	WBP11 (0.71), 	TMTC1 (0.7), 	TSC22D1 (0.66), 	VAMP3 (0.54), 	LIMK2 (0.53), 	IQGAP2 (0.53), 	NAB2 (0.51), 	TNFSF13 (0.47), 	KCTD8 (0.47), 	NRP2 (0.47), 	MAGEA4 (0.47), 	CXADR (0.44), 	FGR (0.38), 	SNX22 (0.38), 	TNFSF12-TNFSF13 (0.38), 	HBS1L (0.36), 	WLS (0.36), 	PLBD1 (0.36), 	SH3RF2 (0.36), 	IGSF6 (0.36), 	LOXHD1 (0.13), 	ANKHD1 (0.09), 	FOXR1 (0.09), 	TENM1 (0.09), 	ARHGEF26 (0.09), 	IL27 (0.09), 	NEK6 (0.09), 	UBE2J1 (0.09), 	PSMB11 (0.09), 	E2F5 (0.09) 	&	BPI (4.75), 	IRAK3 (3.03), 	S100A2 (2.53), 	POMP (2.35), 	P2RY10 (2), 	LZTFL1 (1.98), 	AGTRAP (1.79), 	UTY (1.77), 	FUCA2 (1.75), 	CD177 (1.73), 	NEK2 (1.71), 	PRMT3 (1.71), 	SSBP2 (1.68), 	ORC1 (1.68), 	GTF2H2 (1.57), 	CTIF (1.56), 	NBN (1.53), 	TRIM7 (1.51), 	CAPRIN2 (1.47), 	COMMD9 (1.46), 	CLEC4C (1.45), 	DEFB135 (1.35), 	TOR2A (1.34), 	LCE2B (1.33), 	RPS4Y2 (1.32), 	LYSMD1 (1.28), 	LYPD8 (1.24), 	ZNF549 (1.23), 	LRIG2 (1.23), 	HSD17B14 (1.1), 	SLC16A10 (1.04), 	FCRL1 (1.03), 	SLC44A4 (1.03), 	OR5M9 (0.96), 	PRADC1 (0.94), 	GTF2H2C (0.84), 	ZNF154 (0.82), 	GPD1 (0.82), 	MXD3 (0.78), 	LBH (0.78), 	NXPH3 (0.76), 	HOXD8 (0.63), 	MCTP1 (0.6), 	WFDC1 (0.58), 	TMEM64 (0.58), 	PPP1R36 (0.56), 	TIMM23B (0.5), 	TIGD4 (0.49), 	KLHL29 (0.49), 	SAPCD2 (0.43), 	SNORA72 (0.43), 	ATF7IP2 (0.27), 	GTF2H2B (0.13) 	&	HCG8,	FAM105B,	SQRDL,	LOC730441,	FAM129C,	ENSG00000225824,	KIAA0355	\\	
			\label{TAB:data_gene}
		\end{longtable}
	\end{landscape}
	
\end{small}

\end{document}